\documentclass{CSML}
\usepackage{lastpage}
\def\dOi{13(2:15)2017}
\lmcsheading%
{\dOi}
{1--\pageref{LastPage}}
{}
{}
{Aug.~21, 2015}
{Jun.~30, 2017}
{}

\usepackage{hyperref}\hypersetup{hidelinks}
\usepackage{amsmath,amssymb,xcolor,mathtools,ifthen,flushend,bm,bbm,amsthm}
\usepackage{tikz}
\usetikzlibrary{calc,shapes,arrows,automata,positioning,snakes}
\tikzset{->,>=stealth'}
\usepgflibrary{shapes}
\usepackage[all]{xy}
\usepackage{stmaryrd}   
\usepackage{amssymb,amsmath,amsfonts,graphicx,xspace,subfig,multirow}
\usepackage{url}




%

\newtheorem{theorem}{Theorem}[section]
\newtheorem{lemma}[theorem]{Lemma}
\newtheorem{corollary}[theorem]{Corollary}

\newtheorem{conjecture}[theorem]{Conjecture}
\theoremstyle{definition}
\newtheorem{remark}[theorem]{Remark}
\newtheorem{example}[theorem]{Example}


\newcommand{\theoremlike}[2]{\par\medskip\penalty-250
{{
\textsc{
#2 \ref{#1}.}}}\it}

\newcommand{\thmhelperpre}[2]{\theoremlike{#1}{#2}}
\newcommand{\thmhelperpost}{\par\medskip}



\newcommand{\Nset}{\mathbb N}
\renewcommand{\vec}[1]{\bm{#1}}

\renewcommand{\bar}[1]{#1}


\newcommand{\vecl}[1]{\vec{#1}}

\newcommand{\dist}{\mathit{Dist}}
\newcommand{\poly}{\mathit{poly}}
\newcommand{\ub}{\textbf{U:\ }}
\newcommand{\lb}{\textbf{L:\ }}
\newcommand{\ubl}{\textbf{U}=\textbf{L:} }

\newcommand{\inits}{s_0}
\newcommand{\act}[1]{\mathit{Act}(#1)}
\newcommand{\pat}{\omega}
\newcommand{\Pat}{\mathsf{Runs}}
\newcommand{\fpat}{w}
\newcommand{\mem}{\mathsf{M}}
\newcommand{\Cone}{\mathsf{Cone}}
\newcommand{\calF}{\mathcal{F}}
\newcommand{\mec}{\mathsf{MEC}}

\newcommand{\Qset}{\mathbb{Q}}
\newcommand{\Rset}{\mathbb{R}}

\newcommand{\QED}{\qed}

\newcommand{\pagewidth}{\textwidth}
\newcommand{\textbp}[1]{\textbf{#1}}
\newcommand{\QEE}{\hfill$\triangle$} 

\newcommand{\pr}{\mathbb P}	
\renewcommand{\Pr}[3]{\pr^{#1}\hspace{-0.16em}\left[{#3}\right]}   
\newcommand{\PrS}[3]{\pr^{#1}_{#2}\hspace{-0.16em}\left[{#3}\right]}   
\newcommand{\expected}{\mathbb E}
\newcommand{\Ex}[3]{\expected^{#1}_{#2}\hspace{-0.16em}\left[{#3}\right]}   

\newcommand{\reward}{\vec{r}}
\newcommand{\ex}{\vecl{exp}}
\newcommand{\sat}{\vecl{sat}}
\newcommand{\satscalar}{\mathit{sat}}
\newcommand{\psat}{\vecl{pr}}
\newcommand{\psatscalar}{\mathit{pr}}

\newcommand{\lrLim}[1]{\mathrm{lr}(#1)}  
\newcommand{\lrSf}[1]{\mathrm{lr}_{\mathrm{sup}}(#1)}  
\newcommand{\lrIf}[1]{\mathrm{lr}_{\mathrm{inf}}(#1)}  

\newcommand{\rc}[1]{\textcolor{red}{#1}}
\renewcommand{\rc}[1]{#1}

\newcommand{\rstart}{}
\newcommand{\rstop}{\color{black}}


\newcommand{\kd}[1]{\mathbbm{1}_{#1}} 
\newcommand{\kda}{\kd{a}} 
\newcommand{\reach}{\Diamond}
\newcommand{\stabi}[1]{\kappa_{#1}}
\newcommand{\lfa}{\mathit{lf}_a}
\newcommand{\vfreq}{\vecl{Freq}}

\newcommand{\toss}{\mathit{toss}}
\newcommand{\leave}{\mathit{leave}}


\newcommand{\spacefu}{}
\newcommand{\spacefl}{}
\newcommand{\myspace}{}

\begin{document}

\renewcommand{\thefootnote}{(\fnsymbol{footnote})}

\allowdisplaybreaks

\title[Unifying Two Views on Multiple Mean-Payoff Objectives in MDPs]
{Unifying Two Views on Multiple Mean-Payoff Objectives in Markov Decision Processes\rsuper*}

\author[K.~Chatterjee]{Krishnendu Chatterjee\rsuper a}
\address{{\lsuper a}IST Austria}
\email{Krishnendu.Chatterjee@ist.ac.at}

\author[Z.~K\v{r}et\'insk\'a]{Zuzana K\v{r}et\'insk\'a\rsuper b}
\address{{\lsuper{b,c}}Institut f\"ur Informatik, Technische Universit\"at Mu\"nchen, Germany}
\email{komarkova.zuza@gmail.com, jan.kretinsky@gmail.com}

\author[J.~K\v{r}et\'insk\'y]{Jan K\v{r}et\'insk\'y\rsuper c}
\address{\vspace{-18 pt}}

\titlecomment{{\lsuper*}This is an extended version of the LICS'15 paper with full proofs and additional complexity results.}
\thanks{This research was funded in part by 
Austrian Science Fund Grant No P 23499-N23,
European Research Council Grant No 279307 (Graph Games), 
the DFG Research Training Group PUMA: Programm- und Modell-Analyse (GRK 1480),
the Czech Science Foundation grant No.~\mbox{15-17564S},
and People Programme (Marie Curie Actions) of the European Union's Seventh Framework Programme (FP7/2007-2013) 
REA Grant No 291734.
}

\begin{abstract}
  We consider Markov decision processes (MDPs) with multiple
  limit-average (or mean-payoff) objectives.  There exist two
  different views: (i)~the expectation semantics, where the goal is to
  optimize the expected mean-payoff objective, and (ii)~the
  satisfaction semantics, where the goal is to maximize the
  probability of runs such that the mean-payoff value stays above a
  given vector.  We consider optimization with respect to both
  objectives at once, thus unifying the existing semantics.
  Precisely, the goal is to optimize the expectation while ensuring
  the satisfaction constraint.  Our problem captures the notion of
  optimization with respect to strategies that are risk-averse (i.e.,
  ensure certain probabilistic guarantee).  Our main results are as
  follows: First, we present algorithms for the decision problems,
  which are always polynomial in the size of the MDP.  We also show
  that an approximation of the Pareto curve can be computed in time
  polynomial in the size of the MDP, and the approximation factor, but
  exponential in the number of dimensions.  Second, we present a
  complete characterization of the strategy complexity (in terms of
  memory bounds and randomization) required to solve our problem.
\end{abstract}

\maketitle



\section{Introduction}\label{chap:intro}

\smallskip\noindent{\bf MDPs and mean-payoff objectives.}
The standard models for dynamic stochastic systems with both 
nondeterministic and probabilistic behaviours are  
Markov decision processes (MDPs)~\cite{Howard,Puterman,FV97}.
An MDP consists of a finite state space, and in every state a 
controller can choose among several actions (the nondeterministic choices), 
and given the current state and the chosen action the system evolves 
stochastically according to a probabilistic transition function.
Every action in an MDP is associated with a reward (or cost), and the
basic problem is to obtain a strategy (or policy) that resolves the choice of 
actions in order to optimize the rewards obtained over the run of the system.
An objective is a function that given a~sequence of rewards over the run 
of the system combines them to a single value.
A classical and one of the most well-studied objectives in context of MDPs 
is the \emph{limit-average (or long-run average or mean-payoff)} objective that assigns to every
run the average of the rewards over the run.

\smallskip\noindent{\bf Single vs.\ multiple objectives.}
MDPs with single mean-payoff objectives have been widely studied 
(see, e.g.,~\cite{Puterman,FV97}), with many applications 
ranging from computational biology to analysis of security protocols, 
randomized algorithms, or robot planning, to name a 
few~\cite{BaierBook,PRISM,Bio-Book,KGFP09}.
In verification of probabilistic systems, MDPs are widely used, 
for concurrent probabilistic systems~\cite{CY95,Var85},  
probabilistic systems operating in open environments~\cite{SegalaT,dA97a}, 
and applied in diverse domains~\cite{BaierBook,PRISM}.
However, in several application domains, there is not a single optimization 
goal, but multiple, potentially dependent and conflicting goals.
For example, in designing a computer system, the goal is to maximize 
average performance while minimizing average power consumption, or
in an inventory management system, the goal is to optimize several 
potentially dependent costs for maintaining each kind of product.
These motivate the study of MDPs with multiple mean-payoff objectives, 
which has also been applied in several problems such as dynamic power
management~\cite{FKP12}.

\smallskip\noindent{\bf Two views.} 
There exist two views in the study of MDPs with mean-payoff objectives
\cite{krish}.
The traditional and classical view is the \emph{expectation} semantics, 
where the goal is to maximize (or minimize) the expectation 
of the mean-payoff objective. There are numerous applications of MDPs 
with the expectation semantics, such as in inventory control, planning, and 
performance evaluation~\cite{Puterman,FV97}. 
The alternative semantics is called the \emph{satisfaction} semantics, 
which, given  
a mean-payoff value threshold $\satscalar$ 
and
a probability threshold $\psatscalar$,
asks for a strategy to ensure that the mean-payoff value be at least $\satscalar$ 
with probability at least $\psatscalar$.
In the case with $n$ reward functions, there are two possible interpretations. 
Let $\sat$ and $\psat$ be two vectors of thresholds of dimension $k$, 
and $0\leq \psatscalar \leq 1$ be a single threshold.
The first interpretation (namely, the \emph{conjunctive interpretation}) requires 
the satisfaction semantics in each dimension $1\leq i \leq n$ with thresholds
$\sat_i$ and $\psat_i$, respectively (where $\vec v_i$ is the $i$-th 
component of vector $\vec v$). The sets of satisfying runs for each reward may even be disjoint here.
The second interpretation (namely, the \emph{joint interpretation}) requires the 
satisfaction semantics for all rewards at once. Precisely, it requires that, with probability at least $\psatscalar$, 
the mean-payoff value vector be at least $\sat$.
The distinction of the two views (expectation vs.\ satisfaction) and their 
applicability in analysis of problems related to stochastic reactive systems 
has been discussed in details in~\cite{krish}.
While the joint interpretation of satisfaction has already been introduced 
and studied in~\cite{krish}, 
here we consider also the conjunctive interpretation, which was not considered in~\cite{krish}.
The conjunctive interpretation was considered in~\cite{FKR95}, however, only a partial solution 
was provided, and it was mentioned that a complete solution would be very useful.

\smallskip\noindent{\bf Our problem.}
In this work we consider a new problem that unifies the two different 
semantics. 
Intuitively, the problem we consider asks to \emph{optimize} the expectation
while \emph{ensuring} the satisfaction. 
Formally, consider an MDP with $n$ reward functions, a probability 
threshold vector $\psat$ (or threshold $\psatscalar$ for joint interpretation), 
and a mean-payoff value threshold vector $\sat$.
We consider the set of \emph{satisfaction} strategies that ensure the 
satisfaction semantics. 
Then the optimization of the expectation is considered with respect to the 
satisfaction strategies. 
Note that if $\psat$ is~$\vec{0}$, then the satisfaction strategies is the set of 
all strategies and we obtain the traditional expectation semantics as a 
special case.
We also consider important special cases of our problem, depending on 
whether there is a single reward (mono-reward) or multiple rewards (multi-reward),
and whether the probability threshold is $\psat=\vec{1}$ (qualitative criteria) or 
the general case (quantitative criteria).
Specifically, we consider four cases: 
\begin{enumerate}
\item \emph{Mono-qual:} a single reward function and qualitative satisfaction semantics;
\item \emph{Mono-quant:} a single reward function and  quantitative satisfaction semantics;
\item \emph{Multi-qual:} multiple reward functions and qualitative satisfaction semantics;
\item \emph{Multi-quant:} multiple reward functions and quantitative satisfaction semantics.
\end{enumerate}
Note that for multi-qual and mono cases, the two interpretations (conjunctive and joint) of the 
satisfaction semantics coincide, whereas in the multi-quant problem (which is the most general problem)
we consider both the conjunctive and the joint interpretations, 
separately (\emph{multi-quant-conjunctive}, \emph{multi-quant-joint}) as well as at once (\emph{multi-quant-conjunctive-joint}).

\smallskip\noindent{\bf Motivation.} 
The motivation to study the problem we consider is twofold. 
Firstly, it presents a unifying approach that combines the two existing semantics
for MDPs.
Secondly and more importantly,
it allows us to consider the problem of optimization along with \emph{risk aversion}.
A risk-averse strategy must ensure certain probabilistic guarantee on the payoff 
function.
The notion of risk aversion is captured by the satisfaction semantics, and thus
the problem we consider captures the notion of optimization under risk-averse 
strategies that provide probabilistic guarantee. 
The notion of \emph{strong risk-aversion} where the probability is treated as an 
adversary is considered in~\cite{BFRR14}, whereas we consider probabilistic (both 
qualitative and quantitative) guarantee for risk aversion.
We now illustrate our problem with several examples.
  
\smallskip\noindent{\bf Illustrative examples:}
\begin{itemize}
\item For simple risk aversion, consider a single reward function modelling investment. Positive reward stands for profit, negative for loss. We aim at maximizing the expected long-run average while guaranteeing that it is non-negative with at least 95\%. This is an instance of \emph{mono-quant} with $\psatscalar=0.95,\satscalar=0$.
\item For more dimensions, consider the example \cite[Problems 6.1, 8.17]{Puterman}. A vendor assigns to each customer either a low or a high rank. Further, there is a decision the vendor makes each year either to invest money into sending a catalogue to the customer or not. Depending on the rank and on receiving a catalogue, the customer spends different amounts for vendor's products and the rank can change. The aim is to maximize the expected profit provided the catalogue is almost surely sent with frequency at most $f$. This is an instance of \emph{multi-qual}. Further, one can extend this example to only require that the catalogue frequency does not exceed $f$ with 95\% probability, but 5\% best customers may still receive catalogues very often (instance of \emph{multi-quant}). 
\item The following is again an instance of \emph{multi-quant}. A gratis service for downloading is offered as well as a premium one. For each we model the throughput as rewards $r_1,r_2$. For the gratis service, expected throughput $1\mathit{Mbps}$ is guaranteed as well as $60\%$ connections running on at least $0.8\mathit{Mbps}$. For the premium service, not only have we a higher expectation of $10\mathit{Mbps}$, but also $95\%$ of the connections are guaranteed to run on at least $5\mathit{Mbps}$ and $80\%$ on even $8Mbps$ (satisfaction constraints). In order to keep this guarantee, we may need to temporarily hire resources from a cloud, whose cost is modelled as a reward $r_3$. While satisfying the guarantee, we want to maximize the expectation of $p_2\cdot r_2-p_3\cdot r_3$ where $p_2$ is the price per $\mathit{Mb}$ at which the premium service is sold and $p_3$ is the price at which additional servers can be hired.
Note that since the percentages above are different, the constraints cannot be encoded 
using the joint interpretation, and conjunctive interpretation is necessary. 
\end{itemize}

\smallskip\noindent{\bf The basic computational questions.}
In MDPs with multiple mean-payoff objectives, different strategies may 
produce incomparable solutions. Thus, there is no ``best'' solution
in general. Informally, the set of \emph{achievable solutions} 
is the set of all vectors $\vec{v}$ such that there is a strategy that 
ensures the satisfaction semantics and that the expected mean-payoff value 
vector under the strategy is at least $\vec{v}$. 
The ``trade-offs'' among the goals represented by the individual
mean-payoff objectives are formally captured by the \emph{Pareto curve},
which consists of all maximal tuples (with respect to component-wise ordering) 
that are not strictly dominated by any achievable solution. 
Pareto optimality has been studied in cooperative game theory~\cite{Owen95} 
and in multi-criterion optimization and decision making in both economics 
and engineering~\cite{Koski,YC,Szymanek}.

We study the following fundamental questions related to the properties of
strategies and algorithmic aspects in MDPs:
\begin{itemize}
\item \emph{Algorithmic complexity:} What is the complexity of deciding whether 
a given vector represents an achievable solution, and if the answer is yes, then
compute a witness strategy?
\item \emph{Strategy complexity:} 
What type of strategies is sufficient (and necessary) for achievable solutions?
\item \emph{Pareto-curve computation:} 
Is it possible to compute an approximation of the Pareto curve?
\end{itemize}

\smallskip\noindent{\bf Our contributions.}
We provide comprehensive answers to the above questions.
The main highlights of our contributions are: 
\begin{itemize}
\item \emph{Algorithmic complexity.} 
We present algorithms for deciding whether a given vector is an 
achievable solution and constructing a witness strategy.
All our algorithms are polynomial in the size
of the MDP. 
Moreover, they are polynomial even in the number of dimensions, except for \emph{multi-quant} with conjunctive 
interpretation where it is exponential.

\item \emph{Strategy complexity.} 
It is known that for both expectation and satisfaction semantics with single reward,
deterministic memoryless\footnote{A strategy is memoryless if it is independent of the history, but depends only on the current state. A strategy that is not deterministic is called randomized.} strategies are sufficient~\cite{FV97,BBE10,krish}.
We show this carries over in the \emph{mono-qual} case only. 
In contrast, we show that for \emph{mono-quant} both randomization and memory is necessary.
For randomized strategies, they can be \emph{stochastic-update}, where the 
memory is updated probabilistically, or \emph{deterministic-update}, where the memory
update is deterministic.
We provide precise bounds on the memory size of stochastic-update strategies.
Further, we show that for both mono-quant and multi-qual, deterministic-update strategies require memory size that 
is dependent on the MDP.
Finally, we also show that deterministic-update strategies are sufficient even for \emph{multi-quant}, 
thus extending the results of \cite{krish}.

\item \emph{Pareto-curve computation.} We show that in all cases with 
multiple rewards an $\varepsilon$-approxi\-mation of the Pareto curve can be achieved
in time polynomial in the size of the MDP, exponential in the 
number of dimensions, and polynomial in $\frac{1}{\varepsilon}$, for $\varepsilon>0$.
\end{itemize}
In summary, we unify the two existing semantics, present comprehensive results related
to algorithmic and strategy complexities for the unifying semantics, and  
improve results for the existing semantics.

\smallskip\noindent{\bf Technical contributions.}
In the study of MDPs (with single or multiple rewards), the solution 
approach is often by characterizing the solution as a set of linear 
constraints.
Similar to the previous works~\cite{CMH06,EKVY08,FKN+11,krish} 
we also obtain our results by showing that the set of achievable 
solutions can be represented by a set of linear constraints, and 
from the linear constraints witness strategies for achievable solutions 
can be constructed.
However, previous work on the satisfaction semantics~\cite{krish,RRS15}
reduces the problem to invoking linear-programming solution for each maximal end-component and a separate 
linear program to combine the partial results together.
In contrast, we unify the solution approaches for expectation and satisfaction 
and provide one complete linear program for the whole problem.
This in turn allows us to optimize the expectation \emph{while} guaranteeing satisfaction.
Further, this approach immediately yields a linear program where both conjunctive and joint interpretations are combined, and we can optimize any linear combination of expectations. 
Finally, we can also optimize the probabilistic guarantees while ensuring the required expectation.
The technical device to obtain one linear program is to split the standard variables into several, depending on which subsets of constraints they help to achieve. This causes technical complications that have to be dealt with making use of conditional probability methods.

\smallskip\noindent{\bf Related work.} 
The study of Markov decision processes with multiple expectation objectives 
has been initiated in the area of applied probability theory, where it is 
known as {\em constrained MDPs}~\cite{Puterman,Altman}. 
The attention in the study of constrained MDPs has been mainly focused on 
restricted classes of MDPs, such as unichain MDPs, where all states are 
visited infinitely often under any strategy. 
Such a restriction guarantees the existence of memoryless optimal 
strategies.
The more general problem of MDPs with multiple mean-payoff objectives was first
considered in~\cite{Cha07} and a complete picture was presented in~\cite{krish}.
The expectation and satisfaction semantics was considered in~\cite{krish},
and our work unifies the two different semantics for MDPs. 
For general MDPs,~\cite{CMH06,CFW13} studied multiple discounted 
reward functions.
MDPs with multiple $\omega$-regular specifications 
were studied in~\cite{EKVY08}.
It was shown that the Pareto curve can be approximated
in polynomial
time in the size of MDP and exponential in the number of specifications;
the algorithm reduces the problem to MDPs with multiple
reachability specifications, which can be solved by multi-objective 
linear programming~\cite{PY00}.
In~\cite{FKN+11}, the results of~\cite{EKVY08} were extended to combine
$\omega$-regular and expected total reward objectives.
The problem of conjunctive satisfaction was introduced in \cite{FKR95}.
They present solution for only stationary (memoryless) strategies, and explicitly
mention that such strategies are not sufficient and a solution to the general 
problem  would be very useful.
They also mention that it is unlikely to be a simple extension of the single dimensional case.
Our results not only present the general solution, but we also present results that 
combine both the conjunctive  and joint satisfaction semantics along with the expectation semantics.
The multiple percentile are currently considered 
for various objectives, such as mean-payoff, limsup, liminf, shortest path in~\cite{RRS15}.
However, \cite{RRS15} does not consider optimizing the expectation, whereas
we consider maximizing expectation along with satisfaction semantics.
The notion of risks has been considered in MDPs with discounted
objectives~\cite{WL99}, where the goal is to maximize (resp.,
minimize) the probability (risk) that the expected total discounted
reward (resp., cost) is above (resp., below) a threshold.  The notion
of strong risk aversion, where for risk the probabilistic choices are
treated instead as an adversary was considered in~\cite{BFRR14}.
In~\cite{BFRR14} the problem was considered for single reward for
mean-payoff and shortest path.  In contrast, though inspired
by~\cite{BFRR14}, we consider risk aversion for multiple reward
functions with probabilistic guarantee (instead of adversarial
guarantee), which is natural for MDPs.  Moreover, \cite{BFRR14}
generalizes mean-payoff games, for which no polynomial-time solution
is known, whereas in our case, we present polynomial-time algorithms
for the single reward case and in several cases of multiple rewards
(see the first item of our contributions).  Further, an independent
work \cite{CR15lics} extends \cite{BFRR14} to multiple dimensions, and
they also consider ``beyond almost-sure threshold problem'', which
corresponds to the \emph{multi-qual} problem, which is a special case
of our solution.  Finally, a very different notion of risk has been
considered in~\cite{BCFK13}, where the goal is to optimize the
expectation while ensuring low variance.  The problem has been
considered only for single dimension, and no polynomial-time algorithm
is known.

\section{Preliminaries}\label{chap:prel}

\subsection{Basic definitions}

We mostly follow the basic definitions of \cite{krish} with only minor deviations.
We use  $\Nset,\Qset,\Rset$ to denote the sets of positive integers, rational and real numbers, respectively. 
For $n\in\Nset$, we denote $[n]=\{1,\ldots,n\}$.
For a sequence $\pat=\ell_1\ell_2\cdots$ and $n\in\Nset$, we denote the $n$-th element by $\pat[n]$.

Given two vectors $\vec{v},\vec{w} \in \Rset^k$, where $k \in \Nset$, we write $\vec{v} \geq \vec{w}$ iff 
$\vec{v}_i \geq \vec{w}_i$ for all $1 \leq i \leq k$, where $\vec v_i$ denotes the $i$-th 
component of vector $\vec v$. 
Further, $\vec 1$ denotes $(1,\ldots,1)$, and
$\kd{}$ denotes Kronecker's delta, i.e., $\kd{x}(x)=1$ and $\kd{x}(y)=0$ for $y\neq x$.

Finally, the set of all distributions over a~countable set $X$ is denoted by $\dist(X)$, and
$d\in\dist(X)$ is Dirac if $d(x)=1$ for some $x\in X$, i.e., $d=\kd{x}$. 
\smallskip

\smallskip\noindent{\bf Markov chains.} 
A \emph{Markov chain} is a tuple \mbox{$M = (L,P,\mu)$} where $L$ is a countable set of locations, 
$P:L\to\dist(L)$ is a probabilistic transition function, and
$\mu\in\dist(L)$ is the initial probability distribution.

A \emph{run} in $M$ is an infinite sequence $\pat = \ell_1 \ell_2 \cdots$ of locations,
a \emph{path} in $M$ is a finite prefix of a run. 
Each path $\fpat$ in $M$ determines the set $\Cone(\fpat)$ consisting of all runs that start with $\fpat$. 
To $M$ we associate the probability space $(\Pat,\calF,\pr)$, 
where $\Pat$ is the set of all runs in $M$, $\calF$ is the $\sigma$-field generated by all $\Cone(\fpat)$,
and $\pr$ is the unique probability measure such that
$\pr(\Cone(\ell_1\cdots\ell_k)) = 
\mu(\ell_1) \cdot \prod_{i=1}^{k-1} P(\ell_i)(\ell_{i+1})$.

\smallskip\noindent{\bf Markov decision processes.} 
A \emph{Markov decision process} (MDP) is defined as a tuple $G=(S,A,\mathit{Act},\delta,\inits)$ 
where $S$ is a finite set of states, $A$ is a finite set of actions, 
$\mathit{Act} : S\rightarrow 2^A\setminus \{\emptyset\}$ assigns to each state $s$ the set $\act{s}$ of actions enabled 
in $s$ so that $\{\act{s}\mid s\in S\}$ is a partitioning of $A$,
$\delta : A\rightarrow \dist(S)$ is a probabilistic 
transition function that given 
an action $a$ 
gives a probability distribution over the 
successor states, and $\inits$ is the initial state.
Note that we consider that every action is enabled in exactly one state.

A \emph{run} in $G$ is an infinite alternating sequence of states
and actions $\pat=s_1 a_1 s_2 a_2\cdots$
such that for all $i \geq 1$, we have $a_i\in\act{s_i}$ and $\delta(a_i)(s_{i+1}) > 0$. 
A \emph{path} of length~$k$ in~$G$ is a finite prefix
$\fpat = s_1 a_1\cdots a_{k-1} s_k$ of a run in~$G$.

\smallskip\noindent{\bf Strategies and plays.} 
The semantics of MDPs is defined using the notion of strategies.
Intuitively, a strategy in an MDP $G$ is a ``recipe'' to choose actions.
Usually, a strategy is formally defined as a function 
$\sigma : (SA)^*S \to \dist(A)$ that given a finite path~$\fpat$, representing 
the history of a play, gives a probability distribution over the 
actions enabled in the last state. In this paper, we adopt a slightly
different (though equivalent---see~\cite[Section 6]{krish})
definition, which is more convenient for our setting. 
Let $\mem$ be a countable set of \emph{memory elements}. 
A \emph{strategy} is a triple
$\sigma = (\sigma_u,\sigma_n,\alpha)$, where 
$\sigma_u: A\times S \times \mem \to \dist(\mem)$ and 
$\sigma_n: S \times \mem \to \dist(A)$ are \emph{memory update}
and \emph{next move} functions, respectively, and $\alpha$ is
the initial distribution on memory elements. We require that, for 
all $(s,m) \in S \times \mem$, the distribution $\sigma_n(s,m)$ assigns a
positive value only to actions enabled at~$s$, i.e.\ $\sigma_n(s,m)\in\dist(\act{s})$. 

A \emph{play} of $G$ determined by 
a strategy $\sigma$ is a Markov chain 
$G^\sigma=(S \times \mem \times A,P,\mu)$, where
\begin{align*} 
  \mu(s,m,a) &= \kd{s_0}(s) \cdot \alpha(m) \cdot \sigma_n(s,m)(a)\\
  P(s,m,a)(s',m',a') &= 
   \delta(a)(s')  \cdot \sigma_u(a,s',m)(m') \cdot \sigma_n(s',m')(a')\,. 
\end{align*}
Hence, $G^\sigma$ starts in a location chosen randomly according
to $\alpha$ and $\sigma_n$. In a current location $(s,m,a)$, 
the next action to be performed is $a$, hence the probability of entering
$s'$ is $\delta(a)(s')$. The probability of updating the memory to $m'$
is $\sigma_u(a,s',m)(m')$, and the probability of selecting $a'$ as 
the next action is $\sigma_n(s',m')(a')$. Note that these choices
are independent, and thus we obtain the product above. 
The induced probability measure is denoted by $\pr^{\sigma}$ and when the initial state $s$ is not clear from the context, we use $\pr^\sigma_s$ to denote $\pr^\sigma$ corresponding to the MDP where the initial state is set to $s$. 
``Almost surely'' or ``almost all runs'' refers to happening with probability 1 according to this measure.
The respective expected value of a random variable $f:\Pat\to\Rset$ is $\expected_s^\sigma[f]=\int_\Pat f\ d\,\pr_s^\sigma$ or $\expected^\sigma[f]=\int_\Pat f\ d\,\pr^\sigma$ for short.
For $t\in\Nset$, random variables $S_t,A_t$ return $s,a$, respectively, where $(s,m,a)$ is the $t$-th location on the run.


\smallskip\noindent{\bf Strategy types.}
In general, a strategy may use infinite memory $\mem$, and both 
$\sigma_u$ and $\sigma_n$ may randomize. The strategy is
\begin{itemize}
\item \emph{deterministic-update}, if $\alpha$ is Dirac and 
  the memory update function gives a~Dirac distribution for
  every argument;
\item \emph{stochastic-update}, if it is not necessarily deterministic-update;  
\item \emph{deterministic}, if it is deterministic-update and  
  the next move function gives a Dirac
  distribution for every argument;
\item  \emph{randomized}, if it is not necessarily deterministic.
\end{itemize}
We also classify the strategies according to the size of memory
they use. The important subclasses of strategies are 
\begin{itemize}
\item \emph{memoryless} (or \emph{$1$-memory}) strategies, in which $\mem$ is a singleton,
\item \emph{$n$-memory} strategies, in which $\mem$ has exactly $n$~elements, 
\item \emph{finite-memory} strategies, in which $\mem$ is finite, and
\item \emph{Markov} strategies, in which $\mem=\mathbb N$ and $\sigma_u(\cdot,\cdot,n)(n+1)=1$.
\end{itemize}
Markov strategies have a nice structure: they only need a counter and to know the current state \cite{FV97}.
\medskip

\smallskip\noindent{\bf End components.}
A set $T\cup B$ with $\emptyset\neq T\subseteq S$ and $B\subseteq \bigcup_{t\in T}\act{t}$
is an \emph{end component} of $G$
if (1) for all $a\in B$, whenever $\delta(a)(s')>0$ then $s'\in T$;
and (2) for all $s,t\in T$ there is a path 
$\pat = s_1 a_1\cdots a_{k-1} s_k$ such that $s_1 = s$, $s_k=t$, and all states
and actions that appear in $\pat$ belong to $T$ and $B$, respectively.
An end component $T\cup B$ is a \emph{maximal end component (MEC)}
if it is maximal with respect to the subset ordering. Given an MDP, the set of MECs is denoted by $\mec$.
Finally, if $(S,A)$ is a MEC, we call the MDP \emph{strongly connected}.

\begin{remark}\label{rem:endcomp}
The maximal end component (MEC) decomposition of an MDP, i.e., the computation of $\mec$, can be 
achieved in polynomial time~\cite{CY95}.
For improved algorithms for general MDPs and various special cases see~\cite{CH11,CH12,CH14,CL13}.
\end{remark}

Analogously, for a finite-memory strategy $\sigma$, a 
\emph{bottom strongly connected component} (BSCC) of $G^\sigma$ is 
a subset of locations \mbox{$W \subseteq S\times \mem \times A$} 
such that (i) for all $\ell_1 \in W$ and \mbox{$\ell_2 \in S\times \mem\times A$},
if there is a path from $\ell_1$ to $\ell_2$ then 
$\ell_2 \in W$, and (ii) for all $\ell_1,\ell_2 \in W$ we have a path from $\ell_1$ to $\ell_2$.
Every BSCC $W$ determines a unique end component 
$\{s,a\mid (s,m,a)\in W\}$
of $G$, and we sometimes do not strictly distinguish between $W$ 
and its associated end component.

For $C\in\mec$, let 
\[\Omega_{C}=\{\omega\in\Pat \mid \exists n_0:\forall n>n_0 :\omega[n]\in C\}\]
denote the set of runs with a suffix in $C$. Similarly, we define $\Omega_D$ for a BSCC $D$.
Since almost every run eventually remains in a MEC, e.g.\ \cite[Proposition~3.1]{CY98}, $\{\Omega_C\mid C\in\mec\}$ ``partitions'' almost all runs. More precisely, for every strategy, each run belongs to exactly one $\Omega_C$ almost surely; i.e.\ a run never belongs to two $\Omega_C$'s and for every $\sigma$, we have $\Pr{\sigma}{}{\bigcup_{C\in\mec}\Omega_C}=1$. Therefore, actions that are not in any MEC are almost surely taken only finitely many times.

\subsection{Problem statement}\label{ssec:problem}

In order to define our problem, we first briefly recall how long-run average can be defined.
Let $G=(S,A,\mathit{Act},\delta,\inits)$ be an MDP, $n\in\Nset$ and $\reward : A \to \Qset^n$ an $n$-dimensional \emph{reward function}.  
Since the random variable given by the limit-average function $\lrLim{\reward}=\lim_{T\rightarrow\infty} \frac{1}{T}\sum_{t=1}^{T} {\reward(A_t)}$ 
may be undefined for some runs, we consider maximizing the respective point-wise limit inferior:
\[
  \lrIf{\reward} = \liminf_{T\rightarrow\infty} 
     \frac{1}{T}\sum_{t=1}^{T} {\reward(A_t)}
\]
i.e. for each $i\in[n]$ and $\pat\in\Pat$, we have $\lrIf{\reward}(\pat)_i = \liminf_{T\rightarrow\infty} \frac{1}{T}\sum_{t=1}^{T} {\reward(A_t(\pat))_i}$.
Similarly, we could define $\lrSf{\reward} = \limsup_{T\rightarrow\infty} \frac{1}{T}\sum_{t=1}^{T} {\reward(A_t)}$.
However, maximizing limit superior is less interesting, see~\cite{krish}. 
Further, the respective minimizing problems can be solved by maximization with opposite rewards.

This paper is concerned with the following tasks:\medskip

\noindent
\framebox[\pagewidth]{
\begin{minipage}{0.96\textwidth}
{\smallskip

\textbf{Realizability (multi-quant-conjunctive):}
Given an MDP, 
$n\in\mathbb N,\reward:A\to\Qset^n,$ $\ex\in\Qset^n,\sat\in\Qset^n,\psat\in([0,1]\cap\Qset)^n$, decide whether there is a~strategy $\sigma$ such that $\forall i\in [n]$
\begin{align}
 \bullet\qquad& \Ex{\sigma}{}{\lrIf{\reward}_i}\geq \ex_i\,, \tag{EXP}\label{eq:EXP}\\
 \bullet\qquad& \Pr{\sigma}{}{\lrIf{\reward}_i\geq \sat_i}\geq \psat_i\,. \tag{conjunctive-SAT}\label{eq:SAT} 
\end{align}
\textbf{Witness strategy synthesis:}
If realizable, construct a~strategy satisfying the requirements.\smallskip

\textbf{$\varepsilon$-witness strategy synthesis:}
If realizable, construct a~strategy satisfying the requirements with $\ex-\varepsilon\cdot\vec1$ and 
$\sat-\varepsilon\cdot\vec1$.
\smallskip
}
\end{minipage}
}\medskip

We are mostly interested in \textbp{(multi-quant-conjunctive)} as it is the core of all other discussed problems.
However, we also consider the following important special cases:
\begin{description}
 \item[\textbp{(multi-qual)}\ \ ] \qquad\qquad$\psat=\vec1$\,,
 \item[\textbp{(mono-quant)}] \qquad\qquad$n=1$\,,
 \item[\textbp{(mono-qual)}\ \ ] \qquad\qquad$n=1,\psat=1$\,.
\end{description}

Additionaly, we are also interested in variants of \textbp{(multi-quant-conjunctive)}.
Firstly, in \textbp{(multi-quant-joint)}, the constraint (\ref{eq:SAT}) is \emph{replaced} by 
\begin{align}
\Pr{\sigma}{}{\lrIf{\reward}\geq \sat}\geq \psatscalar \tag{joint-SAT}\label{eq:krish-sat}
\end{align}
for $\psatscalar\in[0,1]$.
Secondly, \textbp{(multi-quant-conjunctive-joint)} arises by \emph{adding} (\ref{eq:krish-sat}) constraint
$\Pr{\sigma}{}{\lrIf{\reward}\geq \widetilde\sat}\geq \widetilde\psatscalar$
for $\widetilde\psatscalar\in[0,1]\cap\Qset$ and $\widetilde\sat\in\Qset^n$. 
The relationship between the problems is depicted in Fig.~\ref{fig:problems}.

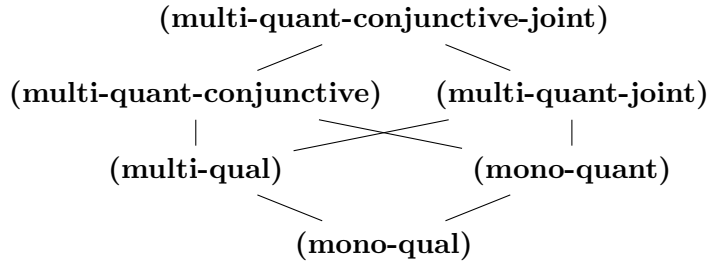
\begin{figure}[ht]
\centering
\begin{tikzpicture}[font=]
\node(p5) at(0,0) {\textbp{(multi-quant-conjunctive-joint)}};
\node(p4) at(-2.5,-1) {\textbp{(multi-quant-conjunctive)}};
\node(p44) at(2.5,-1) {\textbp{(multi-quant-joint)}};
\node(p3) at(-2.5,-2) {\textbp{(multi-qual)}};
\node(p2) at(2.5,-2) {\textbp{(mono-quant)}};
\node(p1) at(0,-3) {\textbp{(mono-qual)}};
\path[-] 
(p5) edge (p4)
(p5) edge (p44)
(p4) edge (p3)
(p4) edge (p2)
(p44) edge (p3)
(p44) edge (p2)
(p3) edge (p1)
(p2) edge (p1)
;
\end{tikzpicture}
\caption{Relationship of the defined problems with lower problems being specializations of the higher ones}\label{fig:problems}
\end{figure} 

\rstart Furthermore, each of the three constraints (\ref{eq:EXP}), (\ref{eq:SAT}), and (\ref{eq:krish-sat}) defines the respective decision problem given solely by that constraint. Each of these three problems is a special case of \textbp{(multi-quant-conjunctive-joint)} where the other constraints are trivial (e.g.\ requiring the average reward be greater or equal to the minimum reward of the MDP). \rstop
Finally, apart from decision problems, one often considers optimization problems, where the task is to maximize the parameters so that the answer to the decision problem is still positive. 
Observe that since optimization in multi-dimensional setting cannot in general produce a single ``best'' solution, one can consider Pareto curves, which are sets of all component-wise optimal and mutually incomparable solutions to the optimization problem.

\begin{example}[Running example]\label{ex:run-strategy}
We illustrate \textbp{(multi-quant-conjunctive)} with an MDP of Fig.~\ref{fig:run-MDP} with $n=2$, rewards as depicted, and $\ex=(1.1,0.5), \sat=(0.5,0.5), \psat=(0.8,0.8)$.
Observe that rewards of actions $\ell$ and $r$ are irrelevant as these actions can almost surely be taken only finitely many times.

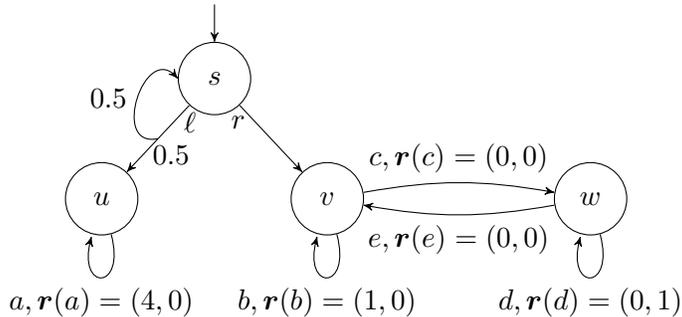
\begin{figure}[ht]
\spacefu
\begin{tikzpicture}
\node[state,initial above,initial text=](s) at(0,1.6) {$s$};
\node[state](u) at(-1.5,0) {$u$};
\node[state](v) at(1.5,0) {$v$};
\node[state](w) at(5,0) {$w$};
\node[coordinate](c) at(-0.75,0.8) {};
\path[->] 
(s) edge[-] node[right]{$\ell$}(c)
(c) edge node[right]{$0.5$}(u)
(u) edge[loop below] node[below]{$a,\reward(a)=(4,0)$}()
(c) edge[bend left=120pt,looseness=2] node[left]{$0.5$} (s.west)
(s) edge node[left,pos=0.25]{$r$} (v)
(v) edge[loop below] node[below]{$b,\reward(b)=(1,0)$}()
(v) edge[bend left=10pt]  node[above]{$c,\reward(c)=(0,0)$} (w)
(w) edge[loop below] node[below]{$d,\reward(d)=(0,1)$}()
(w) edge[bend left=10pt]  node[below]{$e,\reward(e)=(0,0)$} (v)
;
\end{tikzpicture}
\caption{An MDP with two-dimensional rewards}\label{fig:run-MDP}
\end{figure}

This instance is realizable and the witness strategy has the following properties. The strategy plays three ``kinds'' of runs.
Firstly, \rstart due to $\psat=(0.8,0.8)$, with probability at least $0.8+0.8-1=0.6$ runs have to jointly surpass both satisfaction thresholds (at the same time), i.e.\ exceed the vector $(0.5,0.5)$. \rstop This is only possible in the right MEC by playing each $b$ and $d$ half of the time and switching between them with a decreasing frequency, so that the frequency of $c,e$ is in the limit $0$.
Secondly, in order to ensure the expectation of the first reward, we reach the left MEC with probability $0.2$ and play $a$. Thirdly, with probability $0.2$ we reach again the right MEC but only play $d$ with frequency $1$, ensuring the expectation of the second reward.

In order to play these three kinds of runs, in the first step in $s$ we take $\ell$ with probability $0.4$ (arriving to $u$ with probability $0.2$) and $r$ with probability $0.6$, and if we return back to $s$ we play $r$ with probability $1$. If we reach the MEC on the right, we toss a biased coin and with probability $0.25$ we go to $w$ and play the third kind of runs, and with probability $0.75$ play the first kind of runs.

Observe that although both the expectation and satisfaction value thresholds for the second reward are $0.5$, the only solution is not to play all runs with this reward, but some with a lower one and some with a higher one. Also note that each of the three types of runs must be present in any witness strategy. Most importantly, in the MEC at state $w$ we have to play in two different ways, depending on which subset of value thresholds we intend to satisfy on each run.
\rstart Also note that in order to do that, we use memory with stochastic update. \rstop
\QEE\end{example}

\section{Solution}\label{chap:sol}

In this section, we briefly recall a solution to a previously considered problem 
and show our solution to the more general \textbp{(multi-quant-conjunctive)} realizability problem, 
along with an overview of the correctness proof. 
The solution to the other variants is derived and a detailed analysis of the special cases and the respective complexities is given in Section~\ref{sec:algc}.

\subsection{Previous results}

\subsubsection{Linear programming for expectation semantics}\label{ssec:lpforexp}
In \cite{krish}, a solution to the (\ref{eq:EXP}) constraint has been given.
The existence of a~witness strategy was shown equivalent to the existence of a solution to the linear program in Fig.~\ref{fig:lp-old}.

\begin{figure}[ht]
Requiring all variables $y_a,y_s,x_a$ for $a\in A,s\in S$ be non-negative, the program is the following: \bigskip

\begin{enumerate}
 \item transient flow: for $s\in S$
 $$\kd{s_0}(s)+\sum_{a\in A}y_a\cdot \delta(a)(s)=\sum_{a\in \act{s}} y_a+y_s$$
 \item almost-sure switching to recurrent behaviour:
 $$\sum_{s\in C\in \mec}y_s=1$$
 \item probability of switching in a MEC is the frequency of using its actions: for $C\in \mec$
 $$\sum_{s\in C}y_s=\sum_{a\in C}x_a$$
 \item recurrent flow: for $s\in S$
 $$\sum_{a\in A}x_a\cdot \delta(a)(s)=\sum_{a\in \act{s}} x_a$$
 \item expected rewards:
 $$\sum_{a\in A}x_a\cdot\reward\geq\ex$$
\end{enumerate}
\caption{Linear program of \cite{krish} for (\ref{eq:EXP})}
\label{fig:lp-old}
\spacefu
\end{figure}

Intuitively, $x_a$ is the expected frequency of using $a$ on the long run; Equation 4 thus expresses the recurrent flow in MECs and Equation 5 the expected long-run average reward. However, before we can play according to $x$-variables, we have to reach MECs and switch from the transient behaviour to this recurrent behaviour. Equation~1 expresses the transient flow before switching. Variables $y_a$ are the expected number of using $a$ until we switch to the recurrent behaviour in MECs and $y_s$ is the probability of this switch upon reaching $s$. To relate $y$- and $x$-variables, Equation 3 states that the probability to switch within a given MEC is the same whether viewed from the transient or recurrent flow perspective. 
Actually, one could eliminate variables $y_s$ and use directly $x_a$ in Equation~1 and leave out Equation~3 completely,
in the spirit of \cite{Puterman}. 
However, the form with explicit $y_s$ is more convenient for correctness proofs.
Finally, Equation 2 states that switching happens almost surely.
Note that summing Equation 1 over all $s\in S$ yields $\sum_{s\in S}y_s=1$. Since $y_s$ can be shown to equal $0$ for state $s$ not in MEC, Equation 2 is redundant, but again more convenient.

The solution above builds on the work \cite{EKVY08}, which studied MDPs with multiple reachability and $\omega$-regular specifications. It has inspired Equation 1 as well as computation of the Pareto curve.
It was shown that the Pareto curve can be approximated
in polynomial time in the size of MDP and exponential in the number of specifications;
the algorithm reduces the problem to MDPs with multiple
reachability specifications, which can be solved by multi-objective 
linear programming~\cite{PY00}.

\subsubsection{Linear programming for satisfaction semantics}

Apart from considering (\ref{eq:EXP}) separately, \cite{krish} also considers the constraint (\ref{eq:krish-sat}) separately.
While the former was solved using the linear program above, the latter required a reduction to one linear program per each MEC and another one to combine the results. More precisely, for each MEC we first decide whether there is a strategy exceeding the threshold. Second, we maximize the probability to reach  these MECs.
Similarly, in~\cite{RRS15}, for each MEC we decide for every subset of thresholds whether there is a strategy exceeding them. The results are again combined in a linear program for reachability.

In contrast, we shall provide a single linear program for the \textbp{(multi-quant-conjunctive)} problem, 
unifying the solution approaches for expectation and satisfaction problem.
This in turn allows us to optimize the expectation \emph{while} guaranteeing satisfaction.
Further, this approach immediately yields a linear program where both conjunctive and joint interpretations are combined, and we can optimize any linear combination of expectations. 
Finally, we can also optimize the probabilistic guarantees while ensuring the required expectation.
For greater detail, see Section~\ref{ssec:oneLP}.

\subsection{Our unifying solution} \label{sec:gsolution}

There are two main tricks to incorporate the satisfaction semantics. The first one is to ensure that a flow exceeds the value threshold. We first explain it on the qualitative case.

\subsubsection{Solution to \textbp{(multi-qual)}}\label{multiqual}

When the additional constraint (SAT) is added so that almost all runs satisfy $\lrIf{\reward}\geq\sat$, then the linear program of Fig.~\ref{fig:lp-old} shall be extended with the following additional equation:
\begin{enumerate}
 \item[6.] almost-sure satisfaction: for \rc{}$C\in\mec$
 $$ \sum_{a\in C}x_a\cdot\reward(a)\geq \sum_{a\in C}x_a\cdot \sat $$
\end{enumerate}

Note that $x_a$ represents the absolute frequency of playing $a$ (not relative within the MEC).
Intuitively, Equation 6 thus requires in each MEC the average reward be at least $\sat$.
Here we rely on the non-trivial fact, that in a MEC, actions can be played on almost all runs with the given frequencies for any flow, see Corollary~\ref{cor:flow-strat}.

\medskip

The second trick ensures that each conjunct in the satisfaction constraint can be handled separately and, consequently, that the probability threshold can be checked.

\subsubsection{Solution to \textbp{(multi-quant-conjunctive)}}

When each value threshold $\sat_i$ comes with a~non-trivial probability threshold $\psat_i$, some runs may and some may not have the long-run average reward exceeding $\sat_i$.
In order to speak about each group, we split the set of runs, for each reward, into parts which do and which do not exceed the threshold. 

Technically, we keep Equations 1--5 as well as 6, but split $x_a$ into $x_{a,N}$ for $N\subseteq [n]$, where $N$ describes the subset of exceeded thresholds; similarly for $y_s$. 
The linear program $L$ then takes the form displayed in Fig.~\ref{fig:lp-new}.

\begin{figure}
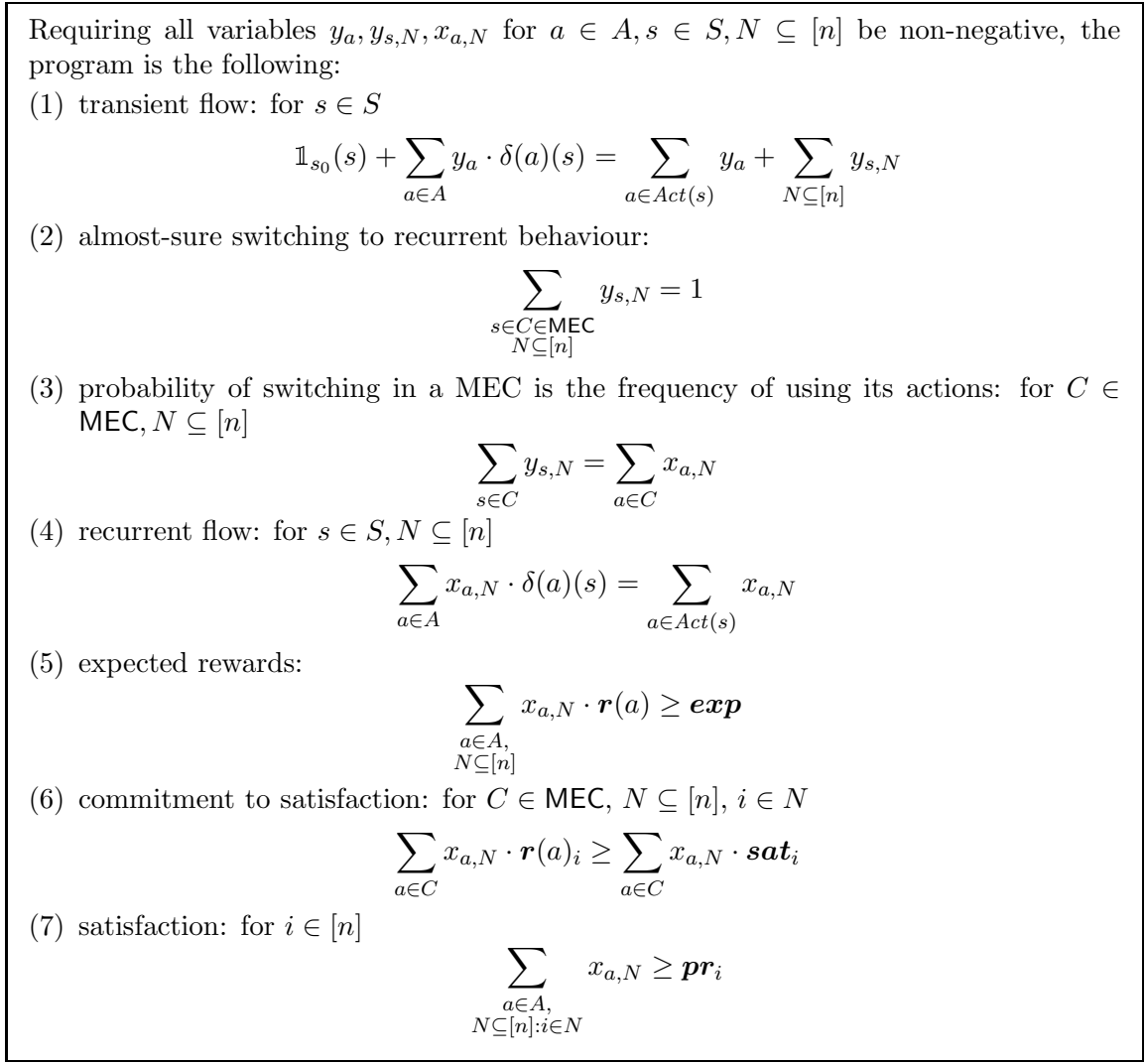

\framebox[\pagewidth]{\parbox{0.96\pagewidth}{\smallskip
Requiring all variables $y_{a},y_{s,N},x_{a,N}$ for $a\in A,s\in S,N\subseteq[n]$ be non-negative, the program is the following:
\begin{enumerate}
 \item transient flow: for $s\in S$
 $$\kd{s_0}(s)+\sum_{a\in A}y_a\cdot \delta(a)(s)=\sum_{a\in \act{s}} y_a+ {\sum_{N\subseteq[n]}y_{s,N}}$$
 \item almost-sure switching to recurrent behaviour:
 $$\sum_{\substack{s\in C\in \mec\\{N\subseteq[n]}}}y_{s, N}=1$$
 \item probability of switching in a MEC is the frequency of using its actions: for $C\in \mec,{N\subseteq[n]}$
 $$\sum_{s\in C}y_{s,{N}}=\sum_{a\in C}x_{a,N}$$
 \item recurrent flow: for $s\in S,{N\subseteq[n]}$
 $$\sum_{a\in A}x_{a, N}\cdot \delta(a)(s)=\sum_{a\in \act{s}} x_{a, N}$$
 \item expected rewards:
 $$\sum_{\substack{a\in A,\\ N\subseteq[n]}}x_{a, N}\cdot\reward(a)\geq\ex$$
 \item commitment to satisfaction: for $C\in\mec$, $N\subseteq[n]$, $i\in N$
 $$\sum_{a\in C}x_{a, N} \cdot \reward(a)_{ i}\geq \sum_{a\in C}x_{a, N} \cdot \sat_{ i}$$
 \item satisfaction: for $i\in[n]$
 $$\sum_{\substack{a\in A,\\N\subseteq[n]:i\in N}}x_{a,N}\geq \psat_i$$
\end{enumerate}
\smallskip}}
\caption{Linear program $L$ for \textbp{(multi-quant-conjunctive)}}
\label{fig:lp-new}
\spacefu
\end{figure} 

Intuitively, only the runs in the appropriate ``$N$-classes'' are required in Equation~6 to have long-run average rewards exceeding the satisfaction value threshold.
However, only the appropriate ``$N$-classes'' are considered for surpassing the probabilistic threshold in Equation~7.

\begin{theorem}\label{thm:main}
Given a \textbp{(multi-quant-conjunctive)} realizability problem, the respective system $L$ (in Fig.~\ref{fig:lp-new}) satisfies the following:
\begin{enumerate}
\item The system $L$ is constructible and solvable in time polynomial in the size of $G$ and exponential in $n$.
\item Every witness strategy induces a solution to $L$.
\item Every solution to $L$ effectively induces a witness strategy.
\end{enumerate}
\end{theorem}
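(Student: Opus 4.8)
The plan is to prove the three items in turn, with items~(2) and~(3) carrying the weight: together they show that the realizability instance is positive iff $L$ is feasible, which is what makes the procedure in item~(1) sound and complete. Item~(1) is bookkeeping. The program $L$ has $|A|$ variables $y_a$, $|S|\cdot 2^n$ variables $y_{s,N}$ and $|A|\cdot 2^n$ variables $x_{a,N}$, and its constraints~1--7 number $\mathcal O(|S|\cdot n\cdot 2^n)$; the maximal end-component decomposition needed to write constraints~2,~3 and~6 is polynomial by Remark~\ref{rem:endcomp}; and all coefficients are entries of $\delta$, of $\reward$, or of $\ex,\sat,\psat$, hence rationals of polynomial bit-size. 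Feasibility of a rational linear program is decided (and, when an objective is present, the program optimized) in time polynomial in its bit-size, which here is polynomial in the size of $G$ and exponential in $n$.

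For item~(2), let $\sigma$ be a witness. For a run $\omega$ put $N(\omega)=\{i\in[n]\mid\lrIf{\reward}(\omega)_i\ge\sat_i\}$, and for $C\in\mec$ and $N\subseteq[n]$ put $E_{C,N}=\Omega_C\cap\{\omega\mid N(\omega)=N\}$; by the almost-sure absorption in maximal end-components recalled in Section~\ref{chap:prel} these sets partition the runs up to a null set, so the numbers $p_{C,N}:=\Pr{\sigma}{}{E_{C,N}}$ sum to $1$. Choose one increasing sequence $T_1<T_2<\cdots$ along which all the Cesàro averages $\frac1T\sum_{t=1}^T\Pr{\sigma}{}{A_t=a\wedge\omega\in E_{C,N}}$ converge (Bolzano--Weierstrass, finitely many bounded sequences) and the finitely many $\liminf$s used below are realized as limits along the sequence; set $x_{a,N}$ to this limit when $a$ belongs to the maximal end-component $C$, and $x_{a,N}:=0$ when $a$ belongs to none. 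Constraints~1--3 are obtained exactly as in the correctness proof of Fig.~\ref{fig:lp-old} in~\cite{krish}, the one new element being that the transient-flow mass $z_s$ absorbed at a state $s\in C$ --- which totals $\Pr{\sigma}{}{\Omega_C}=\sum_{N'}p_{C,N'}$ over $s\in C$ --- is split as $y_{s,N}:=z_s\cdot p_{C,N}/\sum_{N'}p_{C,N'}$ (and $0$ if $\Pr{\sigma}{}{\Omega_C}=0$), a splitting that is feasible because supply equals demand. Constraint~4 follows in the limit by the martingale-convergence argument of~\cite{krish}: the conditional probabilities of $E_{C,N}$ given the history up to time $t$ form a bounded martingale tending to the indicator of $E_{C,N}$, so their increments vanish in Cesàro average. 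Constraint~5 follows from $\ex\le\Ex{\sigma}{}{\lrIf{\reward}}\le\liminf_T\frac1T\sum_{t=1}^T\Ex{\sigma}{}{\reward(A_t)}$ by Fatou's lemma (the reward is bounded), after passing to the chosen subsequence. For the genuinely new constraints~6 and~7: since almost every run of $E_{C,N}$ eventually uses only actions of $C$, one has $\sum_{a\in C}x_{a,N}=\lim_T\frac1T\sum_{t=1}^T\Pr{\sigma}{}{A_t\in C\wedge\omega\in E_{C,N}}=p_{C,N}$; and for $i\in N$ the bound $\liminf_T\frac1T\sum_{t=1}^T\reward(A_t)_i\ge\sat_i$ holds on $E_{C,N}$, so Fatou applied to the reward averages restricted to $E_{C,N}$ (using again that these runs eventually stay in $C$) gives $\sum_{a\in C}x_{a,N}\reward(a)_i\ge p_{C,N}\sat_i=\sum_{a\in C}x_{a,N}\sat_i$, which is constraint~6; summing the frequency identity over all $C$ and all $N\ni i$ then gives $\sum_{a,\,N\ni i}x_{a,N}=\Pr{\sigma}{}{\lrIf{\reward}_i\ge\sat_i}\ge\psat_i$, which is constraint~7.

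For item~(3), given a solution of $L$, the standard structure of non-negative solutions of the recurrent-flow constraint~4 (as in~\cite{krish,EKVY08}) shows that each $x_{\cdot,N}$ is supported inside maximal end-components and restricts to a valid flow within each $C\in\mec$; put $p_{C,N}:=\sum_{a\in C}x_{a,N}$, so that by constraints~2 and~3 the $p_{C,N}$ sum to $1$ and $\sum_{s\in C}y_{s,N}=p_{C,N}$. For each pair $(C,N)$ with $p_{C,N}>0$, Corollary~\ref{cor:flow-strat} supplies a stochastic-update strategy $\sigma_{C,N}$ that plays inside $C$ and under which, almost surely, every $a\in C$ has long-run frequency $x_{a,N}/p_{C,N}$; hence almost surely $\lrIf{\reward}=\sum_{a\in C}\frac{x_{a,N}}{p_{C,N}}\reward(a)$, and, for each $i\in N$, constraint~6 makes $\lrIf{\reward}_i\ge\sat_i$ almost surely. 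Using constraints~1--3 one builds, as in~\cite{krish,EKVY08}, a strategy on the transient part realizing the flow $(y_a)_a$ that, upon reaching a state $s$ lying in some $C\in\mec$, ``commits'' to type $(C,N)$ with the appropriate probability (so that the overall probability of committing to $(C,N)$ equals $\sum_{s\in C}y_{s,N}=p_{C,N}$) and from then on plays $\sigma_{C,N}$; call the resulting stochastic-update strategy $\sigma$. The ``commit to $(C,N)$'' events partition the runs up to a null set, so conditioning on them gives $\Ex{\sigma}{}{\lrIf{\reward}}=\sum_{C,N}p_{C,N}\cdot\sum_{a\in C}\frac{x_{a,N}}{p_{C,N}}\reward(a)=\sum_{a,N}x_{a,N}\reward(a)\ge\ex$ by constraint~5 (recall $x_{a,N}=0$ for $a$ in no maximal end-component), and $\Pr{\sigma}{}{\lrIf{\reward}_i\ge\sat_i}\ge\sum_{N\ni i}\sum_C p_{C,N}=\sum_{a,\,N\ni i}x_{a,N}\ge\psat_i$ by constraint~7, so $\sigma$ is a witness.

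I expect the main obstacle to be item~(3): making the ``commit'' mechanism precise --- delivering exactly probability $\sum_{s\in C}y_{s,N}$ of the runs to $\sigma_{C,N}$ while keeping the transient occupation equal to $(y_a)_a$ --- and then justifying the conditional-expectation computation when \emph{different} sub-strategies $\sigma_{C,N}$ are executed on disjoint run-sets \emph{inside the same} end-component; this is exactly where the ``split variables plus conditional probability'' device is needed. Subsidiary technical points are the simultaneous choice of subsequence and the transient-flow splitting in item~(2), and the support-structure lemma for constraint~4 invoked at the start of item~(3).
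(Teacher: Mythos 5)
Your proposal is correct and follows the same overall architecture as the paper: the same variable assignment $x_{a,N}$ via Ces\`aro limits along a common convergent subsequence, the same splitting of the absorbed transient mass $y_{s,N}$ in proportion to $\Pr{\sigma}{}{\Omega_C\cap\Omega_N}/\Pr{\sigma}{}{\Omega_C}$ (the paper's Lemma~\ref{cl:strat-reach} plus \cite{EKVY08}), the same use of Fatou for Equations~5 and~6, the same identity $\sum_{a\in C}x_{a,N}=\Pr{\sigma}{}{\Omega_C\cap\Omega_N}$ for Equations~3, 6 and~7 (Lemma~\ref{cl:mec-reach-xaN-}), and for item~(3) the same pipeline of support lemma, exact-frequency recurrent strategies (Corollary~\ref{cor:flow-strat}), and a memoryless transient strategy with a stochastic-update ``commit'' to $(C,N)$ (Lemma~\ref{lem:transient}, Corollary~\ref{cl:m-reach-strat}). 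The one place where you genuinely diverge is the justification of the recurrent-flow constraint (Equation~4) for the \emph{conditional} frequencies: the paper devotes Lemmas~\ref{lem:cond-flow-simpler} and~\ref{lem:cond-flow} to showing, via a law-of-large-numbers and central-limit-theorem argument, that $\Pr{\sigma}{}{A_t=a\mid\Omega_N}\cdot|\Delta^N_t(a)(s)-\delta(a)(s)|\to0$, whereas you propose L\'evy's upward martingale convergence: with $M_t=\Pr{\sigma}{}{E_{C,N}\mid\mathcal F_t}\to\mathbbm{1}_{E_{C,N}}$ in $L^1$, the per-step flow discrepancy $\sum_a\delta(a)(s)\Pr{\sigma}{}{A_t=a\wedge E_{C,N}}-\Pr{\sigma}{}{S_{t+1}=s\wedge E_{C,N}}$ is bounded by $\expected\big[|M_{t+1}-M'_t|\big]\to0$, so its Ces\`aro average vanishes. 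This argument is sound (the event $E_{C,N}$ is measurable in the state-action filtration, and $\expected[\mathbbm{1}_{S_{t+1}=s}\mid\mathcal F'_t]=\delta(A_t)(s)$ holds regardless of the strategy's memory) and is arguably cleaner and shorter than the paper's route, which was delicate enough that the authors thank reviewers for pointing out gaps in it; note, though, that attributing it to \cite{krish} is inaccurate---the unconditional setting there needs no such lemma, so this step is really your own. Two smaller remarks: your ``commit upon reaching $s$ with the appropriate probability'' glosses over the fact that $y_{s,N}$ are absorbed masses rather than per-visit probabilities, which is exactly what the paper's Lemma~\ref{lem:transient} (first deciding to stay in the MEC, then distributing the switch over an enumeration $s_1,\ldots,s_k$ of its states) is there to repair---you correctly flag this as the main remaining obstacle; and a complete proof of item~(3) must also establish Corollary~\ref{cor:flow-strat} itself (the disconnected-flow issue, handled in the paper by Lemmas~\ref{lem:BSCC}--\ref{lem:limit-BSCC-}), which your plan invokes but does not prove.
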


\begin{example}[Running example]
The linear program $L$ for Example~\ref{ex:run-strategy} is shown in Appendix~\ref{app:lp}. Here we spell out some useful points we need later:
Equation 1 for state $s$
\begin{equation*}
1+0.5 y_\ell=y_\ell+y_r+y_{s,\emptyset}+y_{s,\{1\}}+y_{s,\{2\}}+y_{s,\{1,2\}}
\end{equation*}
expresses the Kirchhoff's law for the flow through the initial state. Equation 6 for the MEC $C=\{v,w,b,c,d,e\}$, $N=\{1,2\}$, $i=1$
\begin{equation*}
x_{b,\{1,2\}}\cdot1 
\geq (x_{b,\{1,2\}}+x_{c,\{1,2\}}+x_{d,\{1,2\}}+x_{e,\{1,2\}})\cdot0.5
\end{equation*}
expresses that runs ending up in $C$ and satisfying both satisfaction value thresholds have to use action $b$ at least half of the time. The same holds for $d$ and thus actions $c,e$ must be played with zero frequency on these runs. Equation 7 for $i=1$ sums up the gain of all actions on runs that have committed to exceed the satisfaction value threshold either for the first reward, or for the first \emph{and} the second reward.

Moreover, we show later in Lemma~\ref{lem:x-mec}, that variables $x_{\ell,N},x_{r,N}$ for any $N\subseteq[n]$ can be omitted from the system as they are zero for any solution. Intuitively, transient actions cannot be used in the recurrent flows.
\QEE\end{example}

\subsection{Proof overview}
Here, we briefly describe the main ideas of the proof of Theorem~\ref{thm:main}.

\subsection*{The first point} The complexity follows immediately from the syntax of $L$ and the existence of a~polynomial-time algorithm for linear programming \cite{Schrijver1986}.

\subsection*{The second point} 
Given a witness strategy $\sigma$, we construct values for variables so that a valid solution is obtained. The technical details can be found in Section~\ref{ssec:thm-proof2}. 

The proof of \cite[Proposition 4.5]{krish}, which inspires our proof, sets the values of $x_a$ to be the expected frequency of using $a$ by $\sigma$, i.e.\spacefl\myspace

$$
\lim_{T\to\infty}\frac1T\sum_{t=1}^T\Pr{\sigma}{}{A_t=a}$$
Since this Cesaro limit (expected frequency) may not be defined, a suitable value $f(a)$ between the limit inferior and superior has to be taken. 
In contrast to the approach of \cite{krish}, we need to distinguish among runs exceeding various subsets of the value thresholds $\sat_i,i\in[n]$.
For $N\subseteq[n]$, we call a run \emph{$N$-good} if $\lrIf{\reward}_i\geq\sat_i$ for exactly all $i\in N$.
$N$-good runs thus \emph{jointly} satisfy the $N$-subset of the constraints. 
Now instead of using frequencies $f(a)$ of each action $a$, we use frequencies $f_N(a)$ of the action $a$ on $N$-good runs separately, for each $N$. 
This requires some careful conditional probability considerations, in particular for Equations 1, 4, 6 and 7.

\begin{example}[Running example]
The strategy of Example~\ref{ex:run-strategy} induces the following $x$-values. For instance, action $a$ is played with a frequency $1$ on runs of measure $0.2$, hence $x_{a,\{1\}}=0.2$ and $x_{a,\emptyset}=x_{a,\{2\}}=x_{a,\{1,2\}}=0$. Action $d$ is played with frequency $0.5$ on runs of measure $0.6$ exceeding both value thresholds, and with frequency $1$ on runs of measure $0.2$ exceeding only the second value threshold. Consequently, $x_{d,\{1,2\}}=0.5\cdot0.6=0.3$ and $x_{d,\{2\}}=0.2$ whereas $x_{d,\emptyset}=x_{d,\{1\}}=0$.
\QEE\end{example}

Values for $y$-variables are derived from the expected number of taking actions during the ``transient'' behaviour of the strategy. Since the expectation may be infinite in general, an equivalent strategy is constructed, which is memoryless in the transient part, but switches to the recurrent behaviour in the same way. Then the expectations are finite and the result of \cite{EKVY08} yields values satisfying the transient flow equation. 
Further, similarly as for $x$-values, instead of simply switching to recurrent behaviour in a particular MEC, we consider switching in a MEC \emph{and} the set $N$ for which the following recurrent behaviour is $N$-good.  

\begin{example}[Running example]
The strategy of Example~\ref{ex:run-strategy} plays in $s$ for the first time $\ell$ with probability $0.4$ and $r$ with $0.6$, and next time $r$ with probability $1$. This is equivalent to a memoryless strategy playing $\ell$ with $1/3$ and $r$ with $2/3$. Indeed, both ensure reaching the left MEC with $0.2$ and the right one with $0.8$. Consequently, for instance for $r$, the expected number of taking this action is  \vspace*{-0.85em}

$$y_{r}=\frac23+\frac16\cdot\frac23+\left(\frac16\right)^2\cdot\frac23+\cdots=\frac56\,.$$\vspace*{-0.75em}

\noindent The values $y_{u,\{1\}}=0.2$, $y_{v,\{1,2\}}=0.6$, $y_{v,\{2\}}=0.2$ are given by the probability measures of each ``kind'' of runs (see Example~\ref{ex:run-strategy}). 
\QEE\end{example}

\subsection*{The third point} Given a solution to $L$, we construct a witness strategy $\sigma$, which has a particular structure. The technical details can be found in Section~\ref{ssec:thm-proof3}. The general pattern follows the proof method of \cite[Proposition 4.5]{krish}, but there are several important differences.

First, a strategy is designed to behave in a MEC so that the frequencies of actions match the $x$-values. The structure of the proof differs here and we focus on underpinning the following key principle. Note that the flow described by $x$-variables has in general several disconnected components within the MEC, and thus actions connecting them must not be played with positive frequency. Yet there are strategies that on almost all runs play actions of all components with exactly the given frequencies. The trick is to play the ``connecting'' actions with an increasingly negligible frequency. As a result, the strategy visits all the states of the MEC infinitely often, as opposed to strategies generated from the linear program in Fig.~\ref{fig:lp-old} in \cite{krish}, which is convenient for the analysis.

Second, the construction of the recurrent part of the strategy as well as switching to it has to reflect again the different parts of $L$ for different $N$, resulting in $N$-good behaviours.

\begin{example}[Running example]
A solution with $x_{b,\{1,2\}}=0.3, x_{d,\{1,2\}}=0.3$ induces two disconnected flows. Each is an isolated loop, yet we can play a strategy that plays both actions exactly half of the time. We achieve this by playing actions $c,e$ with probability $1/2^k$ in the $k-$th step. 
In Section~\ref{ssec:thm-proof3} we discuss the construction of the strategy from the solution in greater detail, necessary for later complexity discussion.
\QEE\end{example}

\subsection{Important aspects of our approach and its consequences}\label{ssec:oneLP} 
We now explain some important conceptual aspects of our result.
The previous proof idea from~\cite{krish} is as follows:
(1)~The problem for expectation semantics is solved by a linear program. 
(2)~The problem for satisfaction semantics is solved as follows: each MEC is considered, 
solved separately using a~linear program, and then a reachability problem is solved using a 
different linear program.
In comparison, our proof has two conceptual steps. 
Since our goal is to optimize the expectation (which intuitively requires a linear program), 
the first step is to come up with a single linear program for satisfaction semantics. 
The second step is to come up with a linear program that unifies the linear program for expectation semantics 
and the linear program for satisfaction semantics, allowing us to maximize expectation while ensuring satisfaction.



Since our solution captures all the frequencies separately within one linear program, we can work with all the flows at once. This has several consequences:
\begin{itemize}
\item While all the hard constraints are given as a part of the problem, we can easily find maximal solution with respect to a weighted reward expectation, i.e.\ $\vec w\cdot \lrIf{\reward}$, where $\vec w$ is the vector of weights for each reward dimension. Indeed, it can be expressed as the objective function $\vec w\cdot \sum_{a,N}x_{a,N}\cdot\reward(a)$ of the linear program. Further, it is also relevant for the construction of the Pareto curve.
\item We can also optimize satisfaction guarantees for given expectation thresholds. For more detail, see Section~\ref{sec:complexity-pareto}. 
\item \rstart We can easily add more satisfaction constraints (with different thresholds) on the same resource as well as add joint constraints of the form $\Pr{\sigma}{}{\bigwedge_{k_i} \lrIf{\reward_{k_i}}\geq \psatscalar}$. Both can be solved by adding a copy of Equation 7 for each subset $N$ of all the constraints. \rstop
\item The number of variables used in the linear program immediately yields an upper bound on the computational complexity of various subclasses of the general problem. Several polynomial bounds are proven in Section~\ref{sec:algc}. \QEE
\end{itemize}

\section{Proof of Theorem~\ref{thm:main}: Witness strategy induces solution to \texorpdfstring{$L$}{L}}\label{ssec:thm-proof2}

Now we present the technical proof of Theorem~\ref{thm:main}. We start with the second point and show how to construct a solution to $L$ from a witness strategy.

%
%
%
%
%




Let $\sigma$ be a strategy such that \rc{$\forall i\in [n]$
\begin{itemize}
 \item $\Pr{\sigma}{}{\lrIf{\reward}_i \geq \sat_i}\geq \psat_i$
 \item $\Ex{\sigma}{}{\lrIf{\reward}_i}\geq \ex_i$
\end{itemize}}
We construct a solution to the system $L$. The proof method roughly follows that of \cite[Proposition 4.5]{krish}.
However, separate flows for ``$N$-good'' runs require some careful conditional probability considerations, in particular for Equations 4, 6 and 7.

\subsection{Recurrent behaviour and Equations 4--7}\label{subs:47}

We start with constructing values for variables $x_{a,N},a\in A,N\subseteq[n]$.

In general, the frequencies 
of the actions may 
not be well defined, because the defining limits may not exist. 
\rc{Further, it may be unavoidable to have different frequencies for several sets of runs of positive measure.
There are two tricks to overcome this difficulty. Firstly, we partition the runs into several classes depending on which parts of the objective they achieve. Secondly, within each class we pick suitable values lying between $\lrIf{\reward}$ and $\lrSf{\reward}$ of these runs.}
\rc{
In order to achieve the first point, we define	
for $N\subseteq[n]$, 
$$\Omega_N=\{\omega\in \Pat\mid \forall i\in N: \lrIf{\reward} (\omega)_i\geq \sat_i \wedge 
\forall i\notin N:\lrIf{\reward}(\omega)_i  < \sat_i\}$$
Then $\Omega_N$, $N\subseteq[n]$ form a partitioning of $\Pat$. Further, observe that runs of $\Omega_N$ are the runs where joint satisfaction holds, for all rewards $i\in N$. This is important for the algorithm for \textbp{(multi-quant-joint)} from Section~\ref{sec:algc}.

\rstart In order to achieve the second point,
we define
$f_N(a)$, for every $a$, \rstop 
to be lying between values
$
\liminf_{T\rightarrow \infty} \frac{1}{T}\sum_{t=1}^{T} \Pr{\sigma}{s_0}{A_t=a\rstart\cap\Omega_N\rstop}
$
and
$
\limsup_{T\rightarrow \infty} \frac{1}{T}\sum_{t=1}^{T} \Pr{\sigma}{s_0}{A_t=a\rstart\cap\Omega_N\rstop}
$, which can be safely substituted for $x_{a,N}$ in~$L$. 
\rstart 
Let $A$ be written as $\{a_1,a_2,\ldots,a_{|A|}\}$
and let us first consider the case when $\Pr\sigma{s_0}{\Omega_N}>0$. 
Since every bounded infinite
sequence contains an infinite convergent subsequence, there is an increasing
sequence of indices, $T_0^1, T_1^1, T_2^1 \ldots$, such that
$\lim_{\ell\to\infty}
\frac{1}{T_\ell^1}
\sum_{t=1}^{T_\ell^1} \Pr\sigma{s_0}{A_t=a_1\mid \Omega_N}$ is well defined.
Then we can choose a subsequence $T_0^2, T_1^2, T_2^2 \ldots$ of the sequence $T_0^1, T_1^1, T_2^1 \ldots$ so that $\lim_{\ell\to\infty}
\frac{1}{T_\ell^1}
\sum_{t=1}^{T_\ell^1} \Pr\sigma{s_0}{A_t=a_1\mid \Omega_N}$ is well defined, too.
We continue this process for all actions and finally define the sequence $T_0, T_1, T_2 \ldots$ to be $T_0^{|A|}, T_1^{|A|}, T_2^{|A|} \ldots$.
Consequently, for each action $a\in A$,
the following limit exists
\rstop
\[
f_N(a)
\coloneqq
\lim_{\ell\to\infty}
\frac{1}{T_\ell}
\sum_{t=1}^{T_\ell} \Pr\sigma{s_0}{A_t=a\mid \Omega_N} \cdot \Pr\sigma{s_0}{\Omega_N}
\]
\rstart
and we set for all $a\in A$ 
\[
x_{a,N} \coloneqq f_N(a)
\]
Finally, for $N$ such that $\Pr\sigma{s_0}{\Omega_N}=0$, we set $x_{a,N}:=0$. 
Note that \rstop since actions not in MECs are almost surely taken only finitely many times, we have
\begin{align}
x_{a,N}=0 \qquad\text{ for }a\notin\bigcup\mec, N\subseteq[n] \label{eq:transient-x}
\end{align}

We show that (in)equations 4--7 of $L$ are satisfied.}



\subsection*{Equation 4}

For $N\subseteq[n],t\in \Nset,a\in A,s\in S$, let 
$$\Delta^N_t(a)(s):=\Pr{\sigma}{}{S_{t+1}=s\mid A_t=a,\ \Omega_N}$$
denote the ``transition probability'' at time $t$ restricted to runs in $\Omega_N$. 
In general, $\Delta^N_i(a)(s)$ may be different from $\delta(a)(s)$. 
\rstart
However, we show that if we use the action $a$ with positive frequency then $\Delta^N_i(a)(s)$ approximates $\delta(a)(s)$.

\begin{example}
Consider an action $a$ with $\delta(a)(u)=0.5$. 
Then we have $\Pr{\sigma}{}{S_{2}=u\mid A_1=a}=0.5$.
It may well be that for some set $\Omega\subseteq \Pat$ we have 
$\Pr{\sigma}{}{S_{2}=u\mid A_1=a,\ \Omega}=1$, but then $\Pr{\sigma}{}{\Omega}\leq0.5$.
Similarly, if $\Pr{\sigma}{}{S_{2}=u\mid A_1=a,\ \Omega}=\Pr{\sigma}{}{S_{3}=u\mid A_2=a,\ \Omega}=1$ then $\Pr{\sigma}{}{\Omega}\leq0.25$, and so on.
In general, whenever $\Pr{\sigma}{}{\Omega}>0$, the transition probabilities on $\Omega$ cannot differ from the actual transition probabilities too much all the time. \QEE


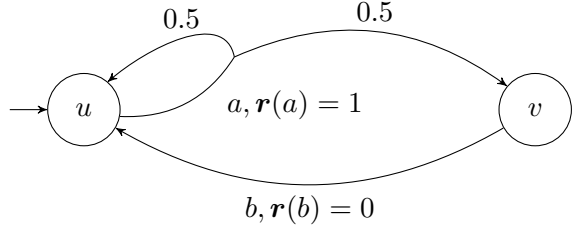
\begin{figure}[ht]
\centering
\begin{tikzpicture}
\node[state,initial,initial text=](u)at(0,0) {$u$};
\node[coordinate](c)at(2,0.7) {};
\node[state](v)at(6,0) {$v$};
\path
(u) edge[-,bend right] node[below right, pos=0.8]{$a,\reward(a)=1$} (c)
(c) edge[->,bend right,out=270] node[above]{$0.5$}(u)
(c) edge[->, bend left] node[above]{$0.5$}(v)
(v) edge[->, bend left]  node[below]{$b,\reward(b)=0$} (u)
;
\end{tikzpicture}
\caption{An MDP illustrating $\Delta$}\label{fig:ex-?}
\end{figure}
\end{example}
\smallskip

%

We first consider a simpler problem:
	\begin{lemma}\label{lem:cond-flow-simpler}
		Let $(\Delta_t)_{t\in\Nset}$ be i.i.d.~Bernoulli variables with expectation $\delta=\expected [\Delta_t]$. Then for any event $\Omega$ with $\pr{[\Omega]}>0$, we have  $\displaystyle\lim_{t\to\infty}\expected_\Omega [\Delta_t]=\delta
		$.
	\end{lemma}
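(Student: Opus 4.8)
The statement to prove is Lemma~\ref{lem:cond-flow-simpler}: for i.i.d.\ Bernoulli variables $(\Delta_t)_{t\in\Nset}$ with $\delta=\expected[\Delta_t]$ and an event $\Omega$ with $\pr[\Omega]>0$, we have $\lim_{t\to\infty}\expected_\Omega[\Delta_t]=\delta$, where $\expected_\Omega$ denotes expectation conditioned on $\Omega$. The key observation is that $\expected_\Omega[\Delta_t]=\pr[\Delta_t=1\mid\Omega]=\pr[\{\Delta_t=1\}\cap\Omega]/\pr[\Omega]$, so it suffices to show $\pr[\{\Delta_t=1\}\cap\Omega]\to\delta\cdot\pr[\Omega]$ as $t\to\infty$. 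The intuition driving this is that $\Delta_t$ becomes ``asymptotically independent'' of $\Omega$: any fixed event $\Omega$ in the product $\sigma$-algebra can be approximated by an event depending only on finitely many coordinates, and $\Delta_t$ for large $t$ is genuinely independent of those.

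\textbf{Main steps.} First I would fix $\varepsilon>0$ and invoke a standard approximation result: since $\Omega$ lies in the $\sigma$-algebra generated by the cylinder sets (the $\sigma$-algebra generated by $\Delta_1,\Delta_2,\ldots$), there exists $k\in\Nset$ and an event $\Omega_k$ depending only on $\Delta_1,\ldots,\Delta_k$ with $\pr[\Omega\,\triangle\,\Omega_k]<\varepsilon$. Second, for every $t>k$, the variable $\Delta_t$ is independent of $\Omega_k$, so $\pr[\{\Delta_t=1\}\cap\Omega_k]=\delta\cdot\pr[\Omega_k]$ exactly. Third, I would control the error introduced by replacing $\Omega$ with $\Omega_k$: using $|\pr[\{\Delta_t=1\}\cap\Omega]-\pr[\{\Delta_t=1\}\cap\Omega_k]|\le\pr[\Omega\,\triangle\,\Omega_k]<\varepsilon$ and similarly $|\pr[\Omega]-\pr[\Omega_k]|<\varepsilon$, we get for all $t>k$
\[
\bigl|\pr[\{\Delta_t=1\}\cap\Omega]-\delta\cdot\pr[\Omega]\bigr|\le\varepsilon+\delta\cdot\varepsilon\le 2\varepsilon.
\]
Since $\varepsilon$ was arbitrary, $\pr[\{\Delta_t=1\}\cap\Omega]\to\delta\cdot\pr[\Omega]$, and dividing by $\pr[\Omega]>0$ gives $\expected_\Omega[\Delta_t]\to\delta$.

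\textbf{Alternative route and the main obstacle.} An alternative that avoids the cylinder approximation is a martingale argument: let $X_t=\pr[\Omega\mid\Delta_{t+1},\Delta_{t+2},\ldots]$; by the reverse (Lévy downward) martingale convergence theorem, $X_t\to\pr[\Omega\mid\calT]$ where $\calT$ is the tail $\sigma$-algebra, which by Kolmogorov's $0$--$1$ law is trivial, so $X_t\to\pr[\Omega]$ a.s.\ and in $L^1$; then $\pr[\{\Delta_t=1\}\cap\Omega]=\expected[\mathbbm 1_{\{\Delta_t=1\}}X_{t-1}]=\delta\cdot\expected[X_{t-1}]\to\delta\cdot\pr[\Omega]$ using independence of $\Delta_t$ from $\sigma(\Delta_{t+1},\ldots)\supseteq\sigma(X_{t-1})$... wait, that indexing needs care, so I would actually use $X_t'=\pr[\Omega\mid\Delta_t,\Delta_{t+1},\ldots]$ and note $\Delta_t$ is \emph{not} independent of $X_t'$; the cleaner version conditions on $\Delta_{t+1},\Delta_{t+2},\ldots$ only. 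The main obstacle is precisely this bookkeeping: making sure the event we condition on for the tail argument genuinely excludes coordinate $t$ so that the independence $\pr[\{\Delta_t=1\}\cap E]=\delta\pr[E]$ can be applied, while still having that event's conditional expectation converge to $\pr[\Omega]$. The cylinder-approximation proof sidesteps this entirely and is the one I would write up, as it is elementary and self-contained; the only mild subtlety there is citing the standard fact that the cylinder algebra is dense (in the symmetric-difference pseudometric) in the full $\sigma$-algebra it generates.
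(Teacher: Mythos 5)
Your argument takes a genuinely different route from the paper's. You approximate $\Omega$ by a finite-dimensional cylinder event and then exploit the exact independence of $\Delta_t$ from that approximation for $t$ past the cut-off. The paper instead argues by contradiction: assuming $\expected_\Omega[\Delta_t]\ge\delta+2\varepsilon$ along a subsequence, it splits $\Omega$ at each time $i$ into runs whose empirical average of $\Delta_1,\dots,\Delta_i$ is ``high'' versus ``normal'', derives from the assumed bias a linear inequality bounding $\pr[\mathit{Normal}_i]$ by a constant multiple of $\pr[\mathit{High}_i]$, and then kills the high part with the law of large numbers, forcing $\pr[\Omega]=0$. That more hands-on argument is the one the paper needs, because it is the template for the relativized Lemma~\ref{lem:cond-flow}, where the i.i.d.\ structure is broken and the averaging must be done over selected index sets $I$; your argument is cleaner for the i.i.d.\ statement but does not obviously survive that generalization.

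As written, though, your proof has one genuine gap: the assertion that ``$\Omega$ lies in the $\sigma$-algebra generated by $\Delta_1,\Delta_2,\ldots$''. The lemma is stated for an arbitrary event of the underlying probability space, and in the intended application the event $\Omega_N$ is determined by the whole run while $\Delta_t=\mathbbm{1}_{\{S_{t+1}=s\}}$ records only one bit per step; so $\Omega$ is in general \emph{not} $\sigma(\Delta_1,\Delta_2,\ldots)$-measurable and the cylinder approximation cannot be applied to it directly. The repair is short: set $Y=\expected[\mathbbm{1}_\Omega\mid\sigma(\Delta_1,\Delta_2,\ldots)]$, observe $\expected[\Delta_t\mathbbm{1}_\Omega]=\expected[\Delta_t Y]$ since $\Delta_t$ is $\sigma(\Delta_1,\Delta_2,\ldots)$-measurable, approximate $Y$ in $L^1$ by $Y_k=\expected[Y\mid\Delta_1,\ldots,\Delta_k]$ via martingale convergence, and use $\expected[\Delta_t Y_k]=\delta\,\expected[Y_k]=\delta\,\pr[\Omega]$ for $t>k$; your error estimate then goes through verbatim. (An even shorter route with no measurability assumption: for $0<\delta<1$ the centred variables $\Delta_t-\delta$ form an orthogonal system in $L^2$, so Bessel's inequality gives $\sum_t\expected[(\Delta_t-\delta)\mathbbm{1}_\Omega]^2<\infty$, hence $\expected[\Delta_t\mathbbm{1}_\Omega]\to\delta\,\pr[\Omega]$.) With that repair your proof is correct.
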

	\begin{proof}
		For a contradiction,
		let w.l.o.g.\ $\limsup_{t\to\infty}\expected_\Omega [\Delta_t]=\delta+3\varepsilon$. 
		(If $\limsup_{t\to\infty}\expected_\Omega [\Delta_t]<\delta$, we can consider the variables $1-\Delta_t$ with this property).
		Moreover, we may safely assume that $\expected_\Omega [\Delta_t]\geq\delta+2\varepsilon$ for all $t\in\Nset$, otherwise we consider the respective subsequence. 
		Let $\mathit{High}_i\subseteq\Omega$ be the set of runs of $\Omega$ such that $\frac1i\sum_{t=1}^i \Delta_t>\delta+\varepsilon$ and similarly $\mathit{Normal}_i\subseteq\Omega$ be the set of runs of $\Omega$ such that $\frac1i\sum_{t=1}^i \Delta_t\leq\delta+\varepsilon$.
		Clearly, $\Omega=\mathit{High}_i\uplus \mathit{Normal}_i$ for every $i$.
		Then
		\begin{align*}
		\delta+2\varepsilon&\leq  \frac 1i \sum_{t=1}^i \expected_\Omega[ \Delta_t]=  
		\frac 1i  \expected_\Omega\left[ \sum_{t=1}^i \Delta_t\right]\\&=
		\frac{\frac 1i \expected_{\mathit{High}_i}[\sum_{t=1}^i \Delta_t]\cdot\pr[\mathit{High}_i]+
		\frac 1i \expected_{\mathit{Normal}_i}[\sum_{t=1}^i \Delta_t]\cdot\pr[\mathit{Normal}_i]}  {\pr[\mathit{High}_i]+\pr[\mathit{Normal}_i]}\\&
		\leq \frac {1\cdot\pr[\mathit{High}_i] + (\delta+\varepsilon)\cdot\pr[\mathit{Normal}_i] }   {\pr[\mathit{High}_i]+\pr[\mathit{Normal}_i]}
		\end{align*}
		Altogether, by comparing the first and the last expression, we get 
		\begin{equation}
		\pr[\mathit{Normal}_i]\leq \frac{1-\delta-2\varepsilon}{\varepsilon} \cdot \pr[\mathit{High}_i] \label{eq:dense-runs-proportion1}
		\end{equation} where the fraction is constant for all $i$.
		Since by the law of large numbers $\lim_{i\to \infty} \pr[\mathit{High}_i]=0$, we obtain $\lim_{i\to \infty} \pr[\mathit{Normal}_i]=0$ and thus $\pr[\Omega]=0$, a contradiction.
	\end{proof}
	
Now we apply the preceding lemma to MDPs:

\begin{lemma}\label{lem:cond-flow}
Let $N\subseteq[n]$ be such that $\Pr{\sigma}{}{\Omega_N}>0$. 
Then for every $a\in A,s\in S$
, we have  
$\displaystyle\lim_{t\to\infty} \Pr{\sigma}{}{A_t=a\mid \Omega_N}\cdot|\Delta^N_t(a)(s)-\delta(a)(s)|=0$.
\end{lemma}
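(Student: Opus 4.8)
The plan is to deduce Lemma~\ref{lem:cond-flow} from Lemma~\ref{lem:cond-flow-simpler} by, for each fixed pair $(a,s)$, packaging the relevant ``did the transition from $a$ land in $s$'' events into a sequence that behaves enough like i.i.d.\ Bernoulli variables for the previous argument to apply. Concretely, for $t\in\Nset$ define the indicator random variable $\Delta_t := \kd{}(S_{t+1}=s)$ restricted to the sub-runs on which $A_t=a$. The subtlety is that $\Delta_t$ is only ``active'' on the event $\{A_t=a\}$, whose probability varies with $t$; so the quantity we want to control, $\Pr{\sigma}{}{A_t=a\mid\Omega_N}\cdot|\Delta^N_t(a)(s)-\delta(a)(s)|$, is a product of a (possibly shrinking) weight and a conditional-transition deviation. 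First I would observe that conditioned on $\{A_t=a\}$ (and on any prefix of the history), the next state is $s$ with probability exactly $\delta(a)(s)$, by definition of $\delta$ and the product form of $P$ in $G^\sigma$; thus the transition coin is genuinely independent of the past given that $a$ was just played.

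The key technical step is to condition on $\Omega_N$ and mimic the $\mathit{High}_i/\mathit{Normal}_i$ dichotomy of Lemma~\ref{lem:cond-flow-simpler}, but now with the running \emph{count} of how often action $a$ has been used. Let $K_i(\omega)=|\{t\le i: A_t(\omega)=a\}|$ and $Z_i(\omega)=|\{t\le i: A_t(\omega)=a,\ S_{t+1}(\omega)=s\}|$; then $Z_i/K_i$ is an average of $K_i$ genuine $\delta(a)(s)$-coins, so by a strong-law / martingale argument $Z_i/K_i\to\delta(a)(s)$ almost surely on the event that $a$ is used infinitely often, and hence also conditionally on $\Omega_N$ (which has positive measure). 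Supposing for contradiction that $\Pr{\sigma}{}{A_t=a\mid\Omega_N}\cdot|\Delta^N_t(a)(s)-\delta(a)(s)|$ does not tend to $0$, we extract a subsequence along which it stays $\ge 3\varepsilon$; since the weight is at most $1$, along this subsequence $|\Delta^N_t(a)(s)-\delta(a)(s)|\ge 3\varepsilon$ as well, and the weight is bounded below by, say, $\varepsilon$ (absorbing it the other way is the symmetric case). Then summing the conditional deviations over $t\le i$ forces the conditional expectation of $|Z_i - \delta(a)(s)K_i|$ to grow linearly in (the expected value of) $K_i$, while the a.s.\ convergence $Z_i/K_i\to\delta(a)(s)$ on $\Omega_N$, together with the fact that $K_i\to\infty$ a.s.\ on the subset of $\Omega_N$ where $a$ is used infinitely often, gives via dominated convergence that $\expected_{\Omega_N}[|Z_i-\delta(a)(s)K_i|]/\expected_{\Omega_N}[K_i]\to 0$ — exactly the contradiction, in the same shape as inequality~\eqref{eq:dense-runs-proportion1}.

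There is a clean alternative route I would actually prefer to write up: reduce directly to Lemma~\ref{lem:cond-flow-simpler} by a time-change. Enumerate the (random) times $\tau_1<\tau_2<\cdots$ at which $A_t=a$; on the event that $a$ occurs infinitely often, set $\Delta_k=\kd{}(S_{\tau_k+1}=s)$, which are i.i.d.\ Bernoulli$(\delta(a)(s))$ and independent of $\Omega_N\cap\{\tau_k<\infty\}$ in the required sense, so Lemma~\ref{lem:cond-flow-simpler} gives $\expected_{\Omega_N}[\Delta_k]\to\delta(a)(s)$, i.e.\ $\lim_k \Pr{\sigma}{}{S_{\tau_k+1}=s\mid A_{\tau_k}=a,\Omega_N}=\delta(a)(s)$. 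Translating this ``along occurrences of $a$'' statement back to the ``along all times $t$'' statement is where the weight $\Pr{\sigma}{}{A_t=a\mid\Omega_N}$ does its job: for times $t$ at which this weight is bounded away from $0$ the deviation is forced to $0$ by the time-changed limit, and for times at which the weight tends to $0$ the whole product tends to $0$ trivially; a short subsequence argument stitches the two cases together to yield $\lim_{t\to\infty}\Pr{\sigma}{}{A_t=a\mid\Omega_N}\cdot|\Delta^N_t(a)(s)-\delta(a)(s)|=0$.

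The main obstacle, and the place where care is genuinely needed rather than routine, is the interaction between conditioning on $\Omega_N$ — a \emph{tail} event defined through $\liminf$ of averages of rewards — and the ``i.i.d.\ given the past'' property of the transition coins: the event $\Omega_N$ is not independent of the $\Delta$'s, and conditioning on it can bias early transitions (exactly as the toy Example preceding the lemma illustrates). The resolution, in both routes above, is that the bias is only ``affordable'' finitely often: $\Pr{\sigma}{}{\Omega_N}>0$ forces the conditional transition probabilities along occurrences of $a$ to relax back to $\delta(a)(s)$, because otherwise a fixed-size deviation repeated infinitely often would, by the law of large numbers, crush $\pr[\Omega_N]$ to $0$ — which is precisely the contradiction engine of Lemma~\ref{lem:cond-flow-simpler}, so the proof reduces to invoking that lemma after the time-change and then handling the vanishing-weight times separately.
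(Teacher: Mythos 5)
Your high-level plan coincides with the paper's: by the Markov property the transition outcomes along successive occurrences of $a$ form i.i.d.\ Bernoulli$(\delta(a)(s))$ coins (the paper's variables $X_j$), and the contradiction engine of Lemma~\ref{lem:cond-flow-simpler} is then to be run against the positive-measure event $\Omega_N$. However, both of your routes leave unproved exactly the step where the paper does its real work: the translation from the occurrence-indexed statement back to the wall-clock-indexed statement $\lim_{t\to\infty}\Pr{\sigma}{}{A_t=a\mid\Omega_N}\cdot|\Delta^N_t(a)(s)-\delta(a)(s)|=0$. At a fixed large time $t$, the conditional quantity $\Delta^N_t(a)(s)$ averages over runs for which this is an early use of $a$, runs that will use $a$ only finitely often, and runs using it infinitely often; the occurrence-indexed limit $\expected_{\Omega_N}[\Delta_k]\to\delta(a)(s)$ controls only the last class, and only in a sense averaged over $k$, so your dichotomy ``weight bounded below $\Rightarrow$ deviation forced to $0$ by the time-changed limit'' is not immediate. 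The paper resolves precisely this by restricting to finite index sets $I_m^n\subseteq T$ of bad times whose starting point $t_m$ is pushed to infinity (killing the contribution of the runs in $\mathit{Fin}_{\geq m}$ via the $\mathit{middle}(m)$ device) and by weighting the events $\mathit{High}_i^{I}$ by the number $i$ of tries.

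Second, your quantitative claims are not justified as stated. In your first route, $\expected_{\Omega_N}[|Z_i-\delta(a)(s)K_i|]/\expected_{\Omega_N}[K_i]\to0$ does not follow from dominated convergence: $|Z_i-\delta(a)(s) K_i|$ is not dominated uniformly in $i$, the runs on which $a$ is used only finitely often can carry a non-negligible share of $\expected_{\Omega_N}[K_i]$ without the ``start the window late'' device, and the almost-sure convergence $Z_i/K_i\to\delta(a)(s)$ (law of large numbers) only yields $\Pr{\sigma}{}{\mathit{High}_i^I}\to0$, whereas the argument needs the \emph{summability} of $i\cdot \Pr{\sigma}{}{\mathit{High}_i^I}$, which the paper obtains from the central limit theorem bound $\Phi(-\sqrt i\,\hat\varepsilon)$ rather than from the law of large numbers. (There is also a sign-cancellation issue if you sum the deviations over all $t\le i$ rather than only over the bad set $T$, where they have a common sign.) These are not presentational details: the example preceding the lemma shows that conditioning on $\Omega_N$ genuinely can bias individual transitions, and the entire content of the lemma is the quantitative bookkeeping showing that the bias cannot persist with non-vanishing weight. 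Until the stitching argument and the summability estimate are supplied, the proof is incomplete.
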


\noindent\textit{Proof plan.}
Note that if $\Pr{\sigma}{}{A_t=a\mid\Omega_N}=1$ for all $t$ then the result follows directly from the previous lemma where we set $\Delta_t(\omega)$ to $1$ if $S_{t+1}=s$ and $0$ otherwise.
Indeed, then $\expected[\Delta_t]=\delta(a)(s)$ and $\expected_{\Omega_N}[\Delta_t]=\Delta_t^N(a)(s)$.
Consequently, $\lim_{t\to\infty} \Pr{\sigma}{}{A_t=a\mid \Omega_N}\cdot|\Delta^N_t(a)(s)-\delta(a)(s)|=1\cdot0$.

In the general case, the probability of taking $a$ on the runs can vary over time.
In order to cope with that, we consider sets $I\subset\Nset$ of positions where $a$ is taken with high enough probability (i.e., in ``many'' runs).
The first step of the proof is thus to derive (\ref{eq:dense-runs-proportion2}), an analogue of (\ref{eq:dense-runs-proportion1}), but now relativized to positions in $I$.
In the previous lemma, the second step consisted in applying the law of large numbers to conclude that probability of overly high preference of some outcome has zero probability, causing a contradiction with (\ref{eq:dense-runs-proportion1}).
In this proof, the second step will require more math to conclude that, due to the relativization.

\begin{proof}
Suppose for a contradiction, that for some $a\in A,s\in S$ there are infinitely many $t$ for which 
$\Pr{\sigma}{}{A_t=a\mid \Omega_N}\cdot|\Delta^N_t(a)(s)-\delta(a)(s)|>\xi$ for some $\xi>0$. 
Denote the set of these $t$'s by $T$.
Since both factors are bounded by $0$ and $1$, there are $\zeta>0$ and $\varepsilon>0$ such that for all $t\in T$ we have
$\Pr{\sigma}{}{A_t=a\mid \Omega_N}>\zeta$ and w.l.o.g.\ $\Delta^N_t(a)(s)>\delta(a)(s)+2\varepsilon$ (if $\Delta^N_t(a)(s)<\delta(a)(s)$ then there is another successor $s'$ of $a$ with this property).
Consequently, for every $t\in T$, we have 
$$\frac {\Pr{\sigma}{}{\Omega_N\cap A_t=a\cap S_{t+1}=s}} {\Pr{\sigma}{}{\Omega_N\cap A_t=a} \hspace*{11mm}}>\delta(a)(s)+2\varepsilon$$ 
\smallskip

\noindent\textit{First step.}
Now we derive (\ref{eq:dense-runs-proportion2}), a version of (\ref{eq:dense-runs-proportion1}) relativized to finite sets $I\subseteq T$.
The positive probability of taking $a$ in these positions guarantees that overly high preference of the outcome $s$ is well defined.

Formally, similarly to the previous inequality for each $t\in T$, the same holds for the average over any finite set of indices $I\subseteq T$:
\[
\delta(a)(s)+2\varepsilon < \frac {\sum_{t\in I} \Pr{\sigma}{}{\Omega_N\cap A_t=a\cap S_{t+1}=s}} {\sum_{t\in I} \Pr{\sigma}{}{\Omega_N\cap A_t=a} \hspace*{18mm}} =(*)
\]
Denoting 
\begin{align*}
i\text{-Tries-In-}I&=\{\omega\in\Omega_N \mid |\{t\in I\mid A_t=a\}|=i\}\\
i\text{-Successes-In-}I&=\{\omega\in\Omega_N \mid |\{t\in I\mid A_t=a\cap S_{t+1}=s\}|=i\}
\end{align*}
we can rewrite the term ($*$) by grouping runs with same ``frequencies'' as
\[
(*)=\frac
{\sum_{i=1}^{|I|}  i\cdot 
	\Pr{\sigma}{}{i\text{-Successes-In-}I}
}
{\sum_{i=1}^{|I|}  i\cdot 
	\Pr{\sigma}{}{i\text{-Tries-In-}I}
	\hspace*{7mm}
}
=(**)
\]
Similarly to the previous lemma, we introduce runs with ``success rate'' higher and lower than $\delta(a)(s)+\varepsilon$, now relative to the indices of $I$. Formally,
\begin{align*}
\mathit{High}_i^I&= i\text{-Tries-In-}I\cap \bigcup_{k>i\cdot \big(\delta(a)(s)+\varepsilon\big)} k\text{-Successes-In-}I\\
\mathit{Normal}_i^I&= i\text{-Tries-In-}I\cap \bigcup_{k\leq i\cdot \big(\delta(a)(s)+\varepsilon\big)} k\text{-Successes-In-}I
\end{align*}
allows us to rewrite
\[
(**)=
\frac
{\sum_{i=1}^{|I|}  (i\cdot \mathit{HighRate}_i)\cdot\Pr{\sigma}{}{\mathit{High}_i^I}
+\sum_{i=1}^{|I|}  (i\cdot \mathit{NormalRate}_i)\cdot\Pr{\sigma}{}{\mathit{Normal}_i^I}
}
{\sum_{i=1}^{|I|}  i\cdot \Pr{\sigma}{}{\mathit{High}_i^I}
	+\sum_{i=1}^{|I|}  i\cdot\Pr{\sigma}{}{\mathit{Normal}_i^I}
}
=({**}*)
\]
where each $\mathit{HighRate}_i\in(\delta(a)(s)+\varepsilon,1]$ and $\mathit{NormalRate}_i\in[0,\delta(a)(s)+\varepsilon]$ are the average portions of ``successes'' among the ``tries'' in the respective $\mathit{High}_i^I$ and $\mathit{Normal}_i^I$. 
Hence we can safely use the upper bounds to show
\[
({**}*)\leq
\frac
{1\cdot\sum_{i=1}^{|I|}  i\cdot\Pr{\sigma}{}{\mathit{High}_i^I}
	+(\delta(a)(s)+\varepsilon)\cdot\sum_{i=1}^{|I|}  i\cdot\Pr{\sigma}{}{\mathit{Normal}_i^I}
}
{\sum_{i=1}^{|I|}  i\cdot \Pr{\sigma}{}{\mathit{High}_i^I}
	+\sum_{i=1}^{|I|}  i\cdot\Pr{\sigma}{}{\mathit{Normal}_i^I}
}
=({**}{**})
\]
Since $({**}{**})\geq(*)\geq \delta(a)(s)+2\varepsilon$, we get by the same computation as for obtaining (\ref{eq:dense-runs-proportion1})
\begin{equation}
\sum_{i=1}^{|I|}  i\cdot \Pr{\sigma}{}{\mathit{Normal}_i^I}\leq \frac{1-\delta-2\varepsilon}{\varepsilon} \cdot \sum_{i=1}^{|I|}  i\cdot \Pr{\sigma}{}{\mathit{High}_i^I} \label{eq:dense-runs-proportion2}
\end{equation}
for every finite $I\subseteq T$.
\smallskip

\noindent\textit{Second step.}
Now we consider particular $I$'s leading to a contradiction.
Let $T$ be written as $\{t_1,t_2,\ldots\}$ so that $t_1< t_2<\cdots$.
For $m<n$, we consider finite subsets $I_m^n=\{t_m,t_{m+1},\ldots,t_n\}$ of $T$ and will prove that 
\begin{equation}
\lim_{m\to\infty} \lim_{n\to\infty} \sum_{i=1}^{|I_m^n|} i\cdot \Pr{\sigma}{}{\mathit{High}_i^{I_m^n}}=0 \label{eq:step2}
\end{equation}
As a consequence of (\ref{eq:dense-runs-proportion2}) we obtain also 
$\lim_{m\to\infty} \lim_{n\to\infty} \sum_{i=1}^{|I_m^n|} i\cdot \Pr{\sigma}{}{\mathit{Normal}_i^{I_m^n}}=0$ and thus 
$\lim_{m\to\infty}\lim_{n\to\infty}\sum_{i=1}^{|I_m^n|}i\cdot\Pr{\sigma}{}{i\text{-Tries-In-}I_m^n}=0$, i.e.\ with growing $m$ the average number of tries after $m$ approaches $0$, a contradiction with $\Pr{\sigma}{}{A_t=a\mid \Omega_N}>\zeta$ for infinitely many $t$ and $\Pr{\sigma}{}{\Omega_N}>0$.

It remains to prove (\ref{eq:step2}).
Intuitively, we consider index sets that start later (at position $m\to\infty$) to avoid initial potentially large elements.
Summands with high $i$'s, i.e.\ runs with many tries, below denoted by $\mathcal C$, will be shown negligible by the central limit theorem (in the previous lemma the law of large numbers was sufficient).
Further, we will have to argue that even summands with low $i$'s are small for high enough $m$. 
This is due to the fact that either $a$ is taken frequently enough on some runs ($\mathcal A$) or for high enough indices not any more on the other runs ($\mathcal B$).

Formally, let $\mathit{Inf}=\Omega_N\cap\{A_t=a\text{ for infinitely many }t\}$ and $\mathit{Fin}_{\geq k}=\Omega_N\cap\{A_t=a\text{ for only finitely many }t\}\cap\{A_t=a\text{ for some }t\geq k\}$. 
We split the sum $\sum_{i=1}^{|I_m^n|} i\cdot \Pr{\sigma}{}{\mathit{High}_i^{I_m^n}}$ into
\[
\underbrace{\sum_{i=1}^{\mathit{middle}(m)} i\cdot \Pr{\sigma}{}{\mathit{High}_i^{I_m^n}\cap \mathit{Inf}}}_{\mathcal A}
	+
\underbrace{\sum_{i=1}^{\mathit{middle}(m)} i\cdot \Pr{\sigma}{}{\mathit{High}_i^{I_m^n}\cap \mathit{Fin}_{\geq m}}}_{\mathcal B}
	+	
\underbrace{\sum_{i=\mathit{middle}(m)+1}^{|I_m^n|} i\cdot \Pr{\sigma}{}{\mathit{High}_i^{I_m^n}}}_{\mathcal C}
\]
by defining an appropriate $\mathit{middle}:\Nset\to\Nset$.
We show that each term approaches zero.
\begin{enumerate}[label=$\mathcal{\alph*}:$]
	\item[$\mathcal A$:] 	
	Observe that for every $i$ and $m$, we have $\lim_{n\to\infty} \Pr{\sigma}{}{ i\text{-Tries-In-}I_m^n\cap\mathit{Inf} }=0$.
	Hence also $\lim_{n\to\infty}\mathcal A=0$ for every $m$ and irrespective of the choice of $\mathit{middle}(m)$, and thus $\lim_{m\to\infty}\lim_{n\to\infty}\mathcal A=0$.
	\item[$\mathcal B$:] 
	We define $\mathit{middle}(m)$ to be the largest number such that $\sum_{i=1}^{\mathit{middle}(m)} i\cdot \Pr{\sigma}{}{ \mathit{Fin}_{\geq m}} < 1/m$.
	This trivially ensures $\lim_{m\to\infty} \mathcal B\leq\lim_{m\to\infty}1/m=0$.
	\item[$\mathcal C$:] 
	Since $\lim_{m\to\infty}\Pr{\sigma}{}{ \mathit{Fin}_{\geq m}}=0$, we obtain by the definition of $\mathit{middle}$ that for $m\to\infty$ also $\mathit{middle}(m)\to\infty$.
	Consequently, it is sufficient to prove that 
	\begin{equation}
	\lim_{n\to\infty}\sum_{i=k}^{|I_m^n|} i\cdot \Pr{\sigma}{}{\mathit{High}_i^{I_m^n}}\to 0 \text{ for } k\to\infty \text{ uniformly for all }m\,.\label{eq:clt}
	\end{equation} 
	Fix an arbitrary $m$.
	Let $X_j$ denote the indicator random variable of the event that $j$th use of action $a$, when looking only at time points $t_m, t_{m+1},t_{m+2}\ldots$, resulted in the successor $s$. Precisely, let $T_j$ be an auxiliary random variable with value $t_{\ell}$ such that $|\{q\mid m\leq q\leq \ell, A_{t_q}=a\}|=j$ and $A_{t_q}=a$; then $X_j$ is $1$ if $S_{T_j+1}=s$ and $0$ otherwise. Due to the Markov property, $X_j$ are Bernoulli i.i.d.\ with mean $\delta(a)(s)$.
	Further, \[\mathit{High}_i^{I_m^n}\subseteq\left\{\frac{\sum_{j=1}^i X_j}{i}>\delta(a)(s)+\varepsilon\right\}\]
	Therefore, by central limit theorem 
	\[
	\Pr{\sigma}{}{\mathit{High}_i^I}\lessapprox 
	\Phi(-\sqrt i \cdot \hat \varepsilon)
	\]
	where $\hat\varepsilon=\varepsilon/{\sqrt{\delta(a)(s)\cdot(1-\delta(a)(s))}}$ and $\Phi$ is the cumulative distribution function of the standard normal distribution and $\lessapprox$ denotes that the inequality $\leq$ holds ``only for large $i$'', i.e.\ in the limit.
	Consequently, for large $k$, we have
	\[\lim_{n\to\infty}\sum_{i=k}^{|I_m^n|} i\cdot \Pr{\sigma}{}{\mathit{High}_i^{I_m^n}}\lessapprox
	\sum_{i=k}^{\infty} i\cdot\Phi(-\sqrt i\cdot \hat{\varepsilon}) 
	\]
	where the right-hand side does not depend on $m$ and is thus a uniform bound for all $m$.
	Further, since $\Phi(-\sqrt i\cdot \hat{\varepsilon})$ decreases exponentially in $\sqrt i$, the right-hand side approaches $0$ as $k\to0$ (independently of $m$) and (\ref{eq:clt}) follows.
\qedhere
\end{enumerate}


%
%

\end{proof}

\rstop

\medskip

\noindent Now we show, that Equation 4 is satisfied. For all $s\in S$ and $N\subseteq[n]$ such that $\Pr\sigma{s_0}{\Omega_N}=0$, we have trivially
\begin{equation*}\label{eq:invf}
\sum_{a\in A} \rstart x_{a,N}\rstop\cdot \delta(a)(s) = \sum_{a \in \act{s}} \rstart x_{a,N}\rstop  
\end{equation*}
 and whenever $\Pr\sigma{s_0}{\Omega_N}>0$ we have
\begin{align*}
&\frac1{\Pr\sigma{s_0}{\Omega_N}}\sum_{a\in A} f_{\rc N}(a)\cdot \delta(a)(s) \\
   & = \frac1{\Pr\sigma{s_0}{\Omega_N}} \sum_{a\in A}\lim_{\ell\to\infty}  \frac{1}{T_\ell} \sum_{t=1}^{T_\ell} \Pr\sigma{s_0}{A_t=a\rc{\ \mid \Omega_N}}
   \cdot {\Pr\sigma{s_0}{\Omega_N}} \cdot \delta(a)(s)\tag{definition of $f_N$}\\
   & = \sum_{a\in A} \lim_{\ell\to\infty} \frac{1}{T_\ell} \sum_{t=1}^{T_\ell}
    \Pr\sigma{s_0}{A_t=a\rc{\ \mid \Omega_N}}
   \cdot \delta(a)(s) \tag{linearity of the limit}\\
   & \rstart = \sum_{a\in A} \lim_{\ell\to\infty} \frac{1}{T_\ell} \sum_{t=1}^{T_\ell}
    \Pr\sigma{s_0}{A_t=a\rc{\ \mid \Omega_N}}
   \cdot \Delta^N_t(a)(s) \tag{Lemma~\ref{lem:cond-flow}}\\
   & \rstart =  \lim_{\ell\to\infty} \frac{1}{T_\ell} \sum_{t=1}^{T_\ell}
   \sum_{a\in A} \Pr\sigma{s_0}{A_t=a\rc{\ \mid \Omega_N}}
   \cdot \Delta^N_t(a)(s) \tag{definition of $T_\ell$}\\ \rstop
   %
   &=\lim_{\ell\to\infty} \frac{1}{T_\ell} \sum_{t=1}^{T_\ell} \Pr\sigma{s_0}{S_{t+1}=s \rc{\ \mid\Omega_{N}}} \tag{definition of $\Delta^N_t$}\\
   & = 
   \lim_{\ell\to\infty} \frac{1}{T_\ell} \sum_{t=1}^{T_\ell}
   \Pr\sigma{s_0}{S_{t}=s\rc{\ \mid\Omega_{N
   }}}\tag{reindexing and Cesaro limit} \\
   & = 
   \lim_{\ell\to\infty} \frac{1}{T_\ell} \sum_{t=1}^{T_\ell}
   \sum_{a \in \act{s}}
   \Pr\sigma{s_0}{A_{t}=a\rc{\ \mid\Omega_{N
   }}} \tag{$s$ must be followed by $a\in \act{s}$}\\
   & =
   \frac1{\Pr\sigma{s_0}{\Omega_N}} 
   \sum_{a \in \act{s}}
   \lim_{\ell\to\infty} \frac{1}{T_\ell} \sum_{t=1}^{T_\ell}
   \Pr\sigma{s_0}{A_{t}=a\rc{\ \mid\Omega_{N}}} \cdot {\Pr\sigma{s_0}{\Omega_N}}
   \tag{linearity of the limit}\\
   & =  
   \frac1{\Pr\sigma{s_0}{\Omega_N}} \sum_{a \in \act{s}}
    f_N(a)\tag{definition of $f_N$}
    \;.
\end{align*}

\subsection*{Equation 5}

For all $i\in[n]$, we have 
\begin{equation*}\label{eq:freq-lrinf}
\rc{\sum_{N\subseteq[n]}} \sum_{a\in A} x_{a,N}\cdot\reward_i(a) \ \geq \ \Ex\sigma{}{\lrIf{\vec{r}_i}} \geq \ex_i
\end{equation*}
where the second inequality is due to $\sigma$ being a witness strategy and the first inequality follows from the following:
\begin{align*}
&\sum_{N\subseteq[n]} \sum_{a\in A} x_{a,N}\cdot\reward_i(a)\\
& = \sum_{\substack{N\subseteq[n]\\\Pr\sigma{}{\Omega_N}>0}} \sum_{a \in A} f_N(a)\cdot\reward_i(a) \tag{definition of $x_{a,N}$}\\
& = \sum_{\substack{N\subseteq[n]\\\Pr\sigma{}{\Omega_N}>0}} \sum_{a \in A} \reward_i(a)\cdot 
\lim_{\ell\to\infty} \frac{1}{T_\ell} \sum_{t=1}^{T_\ell} \Pr\sigma{s_0}{A_t=a\mid \Omega_N} \cdot \Pr\sigma{s_0}{\Omega_N} \tag{definition of $f_N$}\\
& = 
\rc{\sum_{\substack{N\subseteq[n]\\\Pr\sigma{}{\Omega_N}>0}} \Pr\sigma{s_0}{\Omega_N} \cdot}
\lim_{\ell\to\infty} \frac{1}{T_\ell} \sum_{t=1}^{T_\ell} 
 \sum_{a \in A} \reward_i(a)\cdot \Pr\sigma{s_0}{A_t=a\mid \Omega_N} \tag{linearity of the limit}\\
& \geq \sum_{\substack{N\subseteq[n]\\\Pr\sigma{}{\Omega_N}>0}} \Pr\sigma{s_0}{\Omega_N} \cdot
\liminf_{T\to\infty} \frac{1}{T} \sum_{t=1}^T  \sum_{a \in A} \reward_i(a)\cdot 
\Pr\sigma{s_0}{A_t=a\mid \Omega_N} \tag{definition of $\liminf$}\\
& = \sum_{\substack{N\subseteq[n]\\\Pr\sigma{}{\Omega_N}>0}} \Pr\sigma{s_0}{\Omega_N} \cdot
\liminf_{T\to\infty} \frac{1}{T} \sum_{t=1}^T 
\Ex{\sigma}{}{\reward_i(A_t) \mid \Omega_N}\tag{definition of the expectation}\\
& \geq \rc{\sum_{\substack{N\subseteq[n]\\\Pr\sigma{}{\Omega_N}>0}} \Pr\sigma{s_0}{\Omega_N} \cdot}
\Ex\sigma{}{\lrIf{\reward_i} \mid \Omega_N}\tag{Fatou's lemma}\\
& = \Ex\sigma{}{\lrIf{\reward_i}}\tag{$\Omega_N$'s partition $\Pat$}
\end{align*}
Although Fatou's lemma (see, e.g.~\cite[Chapter~4, Section~3]{Royden88})
requires the function $\reward_i(A_t)$ be non-negative, we can replace
it with the non-negative function $\reward_i(A_t)-\min_{a\in A}\reward_i(a)$ and add the subtracted
constant afterwards.

\bigskip



\rstart
In order to show that Equations 6 and 7 hold, we prove the following lemma. This lemma is further necessary when relating the $x$-variables to the transient flow in Equation 3 later.
\rstop

\begin{lemma}\label{cl:mec-reach-xaN-}
	For $N\subseteq[n]$ and $C\in\mec$, we have $$\sum_{a\in C}x_{a,N}=\Pr\sigma{s_0}{\Omega_N\cap\Omega_C}\,.$$
\end{lemma}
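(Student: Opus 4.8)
The plan is to unwind the definition of the $x$-variables, collapse the finite sum over the actions of the MEC into the conditional probability of ``being inside $C$ at time $t$'', and identify its limit with $\Pr\sigma{s_0}{\Omega_C\mid\Omega_N}$ using the fact that almost every run eventually remains in a single MEC.

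First I would dispose of the degenerate case $\Pr\sigma{s_0}{\Omega_N}=0$: here $x_{a,N}=0$ for every $a$ by construction, so the left-hand side is $0$, while the right-hand side is at most $\Pr\sigma{s_0}{\Omega_N}=0$; hence both sides vanish. So assume $\Pr\sigma{s_0}{\Omega_N}>0$, so that $x_{a,N}=f_N(a)$ with the limit along the subsequence $(T_\ell)$ fixed in Section~\ref{subs:47} well defined for each $a$. Summing these finitely many limits over $a\in C$, factoring out the constant $\Pr\sigma{s_0}{\Omega_N}$, and using that the events $\{A_t=a\}$ for $a\in C$ are pairwise disjoint, I obtain
\[
\sum_{a\in C}x_{a,N}=\Pr\sigma{s_0}{\Omega_N}\cdot\lim_{\ell\to\infty}\frac1{T_\ell}\sum_{t=1}^{T_\ell}\Pr\sigma{s_0}{A_t\in C\mid\Omega_N}\,.
\]

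Then I would analyse the sequence $b_t:=\Pr\sigma{s_0}{A_t\in C\mid\Omega_N}$. The key claim is $\kd{A_t\in C}\to\kd{\Omega_C}$ pointwise $\pr^\sigma$-almost surely: by the facts recalled in Section~\ref{chap:prel}, almost every run has a suffix contained in exactly one MEC, distinct MECs are disjoint (so the action played is eventually in $C$ precisely on the runs of $\Omega_C$), and actions outside $\bigcup\mec$ occur only finitely often. Since these indicators are bounded by $1$ and almost-sure statements persist under conditioning on the positive-measure event $\Omega_N$, the dominated convergence theorem applied to the conditional measure gives $b_t=\Ex{\sigma}{}{\kd{A_t\in C}\mid\Omega_N}\to\Ex{\sigma}{}{\kd{\Omega_C}\mid\Omega_N}=\Pr\sigma{s_0}{\Omega_C\mid\Omega_N}$. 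A convergent sequence has the same Cesaro limit, and this is inherited along the subsequence $(T_\ell)$, so the displayed limit equals $\Pr\sigma{s_0}{\Omega_N}\cdot\Pr\sigma{s_0}{\Omega_C\mid\Omega_N}=\Pr\sigma{s_0}{\Omega_N\cap\Omega_C}$, which is exactly the claim.

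The only genuinely delicate point is the pointwise a.s.\ convergence $\kd{A_t\in C}\to\kd{\Omega_C}$, i.e.\ equating ``eventually playing an action of $C$'' with ``eventually remaining in $C$''; this rests on the disjointness of distinct MECs and on transient actions being taken only finitely often, both established in Section~\ref{chap:prel} (see also~(\ref{eq:transient-x})). Everything else---swapping finite sums with limits, dominated convergence, and Cesaro averaging along $(T_\ell)$---is routine.
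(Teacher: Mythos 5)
Your proof is correct and follows essentially the same route as the paper's: both reduce the claim to the Cesaro limit of $\Pr\sigma{s_0}{A_t\in C\mid\Omega_N}$ and evaluate it via dominated convergence together with the fact that almost every run is absorbed in a unique MEC. The only (harmless) organizational difference is that you establish term-wise convergence of $\Pr\sigma{s_0}{A_t\in C\mid\Omega_N}$ to $\Pr\sigma{s_0}{\Omega_C\mid\Omega_N}$ directly, whereas the paper first splits the conditioning event into $\Omega_N\cap\Omega_C$ and $\Omega_N\setminus\Omega_C$ and treats the two Cesaro averages separately.
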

\begin{proof}
	\rstart
	The proof is trivial for the case with $\Pr\sigma{}{\Omega_N}=0$. Let us now assume $\Pr\sigma{}{\Omega_N}>0$:
	\rstop
	\begin{align*}
	&\sum_{a\in C} x_{a,N}\\	
	& = \rstart \lim_{\ell\to\infty} \frac{1}{T_\ell} \sum_{t=1}^{T_\ell}  \sum_{a \in C} 
	\Pr\sigma{s_0}{A_t=a\mid \Omega_N}\cdot \Pr\sigma{s_0}{\Omega_N}
	\tag{definition of $x_{a,N}$ and $T_\ell$}\\
	& = \rstart \lim_{\ell\to\infty} \frac{1}{T_\ell} \sum_{t=1}^{T_\ell}  \sum_{a \in C} 
	\Big( \Pr\sigma{s_0}{A_t=a\mid \Omega_N\cap\Omega_C}\cdot \frac{\Pr\sigma{s_0}{\Omega_N\cap\Omega_C}}{\Pr\sigma{s_0}{\Omega_N}}
	+ \\&\rstart \qquad\qquad
	 \Pr\sigma{s_0}{A_t=a\mid \Omega_N\setminus\Omega_C}\cdot \frac{\Pr\sigma{s_0}{\Omega_N\setminus\Omega_C}}{\Pr\sigma{s_0}{\Omega_N}}
	\Big) \cdot \Pr\sigma{s_0}{\Omega_N} 
	\tag{partitioning of $\Pat$}\\
	& \rstart = \lim_{\ell\to\infty} \frac{1}{T_\ell} \sum_{t=1}^{T_\ell}  \sum_{a \in C} 
	 \Pr\sigma{s_0}{A_t=a\mid \Omega_N\cap\Omega_C}\cdot \Pr\sigma{s_0}{\Omega_N\cap\Omega_C}
	\tag{$\displaystyle\lim_{T\to\infty} \frac1T\sum_{t=1}^T \Pr\sigma{s_0}{A_t=a\mid \Omega_N\setminus\Omega_C}=0$ for $a\in C$}\\	
	& = \Pr\sigma{s_0}{\Omega_N\cap\Omega_C} \cdot  \rstart\lim_{\ell\to\infty} \frac{1}{T_\ell} \sum_{t=1}^{T_\ell} \rstop  \sum_{a \in C} 
	\Pr\sigma{s_0}{A_t=a\mid \Omega_N\cap\Omega_C}
	\tag{linearity of the limit}\\
	& = \Pr\sigma{s_0}{\Omega_N\cap\Omega_C} \cdot  \rstart \lim_{\ell\to\infty} \frac{1}{T_\ell} \sum_{t=1}^{T_\ell} \rstop
	\Pr\sigma{s_0}{A_t\in C\mid \Omega_N\cap\Omega_C}
	\tag{taking two different actions at time $t$ are disjoint events}\\
	& = \Pr\sigma{s_0}{\Omega_N\cap\Omega_C} 
	\tag{since $A_t\in C$ for all but finitely many $t$ on $\Omega_C$, \rstart see below}
	\rstop
	\end{align*}
\rstart It remains to prove that the last limit is equal to $1$.
We have
\[
1\geq \lim_{\ell\to\infty} \frac{1}{T_\ell} \sum_{t=1}^{T_\ell} 
\Pr\sigma{s_0}{A_t\in C\mid \Omega_N\cap\Omega_C}
= \lim_{\ell\to\infty} \frac{1}{T_\ell} \sum_{t=1}^{T_\ell} 
\Ex\sigma{}{\sum_{a\in C}\kda(A_t)\mid \Omega_N\cap\Omega_C}
\]	
which is by dominated convergence theorem equal to
\[
\Ex\sigma{}{\lim_{\ell\to\infty} \frac{1}{T_\ell} \sum_{t=1}^{T_\ell} 
	\sum_{a\in C}\kda(A_t)\mid \Omega_N\cap\Omega_C} =
\Ex\sigma{}{1}=1
\]
by definition of $\Omega_C$.
\rstop
\end{proof}

\subsection*{Equation 6}

For all $C\in\mec,N\subseteq[n], i\in N$
$$\sum_{a\in C}x_{a, N} \cdot \reward_i(a) \geq \sum_{a\in C}x_{a, N} \cdot \sat_i$$
follows trivially for $\Pr\sigma{s_0}{\Omega_N}=0$, and whenever $\Pr\sigma{s_0}{\Omega_N}>0$ we have
\begin{align*}
&\sum_{a\in C} x_{a,N}\cdot\reward_i(a)\\
& \geq 
\liminf_{T\to\infty} \frac{1}{T} \sum_{t=1}^T  \sum_{a \in C} \reward_i(a)\cdot 
\Pr\sigma{s_0}{A_t=a\mid \Omega_N} \cdot \Pr\sigma{s_0}{\Omega_N} \tag{as above for Eq.~5, by def.\ of $x_{a,N},f_N$, linearity of $\lim$, def.\ of $\liminf$}\\
& = \liminf_{T\to\infty} \frac{1}{T} \sum_{t=1}^T  \sum_{a \in C} 
\reward_i(a)\cdot \Pr\sigma{s_0}{A_t=a\mid \Omega_N\cap\Omega_C}\cdot \Pr\sigma{s_0}{\Omega_N\cap\Omega_C} 
\hspace*{3cm}
\tag{\rstart as above in Lemma~\ref{cl:mec-reach-xaN-}, by partitioning $\Pat$, now with additional factor $\reward_i(a)$} \rstop\\
& \geq \Pr\sigma{s_0}{\Omega_N\cap\Omega_C} \cdot
\Ex\sigma{}{\lrIf{\reward_i} \mid \Omega_N\cap\Omega_C}\tag{as above for Eq.~5, by def.\ of expectation and Fatou's lemma}\\
& \geq \Pr\sigma{s_0}{\Omega_N\cap\Omega_C} \cdot \sat_i \tag{by definition of $\Omega_N$ and $i\in N$}\\
&\rstart =\sum_{a\in C} x_{a,N} \cdot \sat_i \tag{by Lemma~\ref{cl:mec-reach-xaN-}}
\rstop
\end{align*}


\subsection*{Equation 7}

For every $i\in[n]$, by assumption on the strategy $\sigma$ 
$$\sum_{N\subseteq[n]:i\in N}\Pr\sigma{s_0}{\Omega_N} = 
\Pr\sigma{s_0}{\omega\in\Pat\mid\lrIf{\reward}(\omega)_i \geq \sat_i}\geq {\psat}_i$$
and the first term actually equals
\begin{align*}
\sum_{N\subseteq[n]:i\in N}\sum_{a\in A}x_{a,N} 
& =\sum_{N\subseteq[n]:i\in N}\sum_{C\in\mec}\sum_{a\in C}x_{a,N}
\tag{by (\ref{eq:transient-x})}\\
& =\sum_{N\subseteq[n]:i\in N}\sum_{C\in\mec} \Pr\sigma{s_0}{\Omega_N\cap\Omega_C}
\tag{by Lemma \ref{cl:mec-reach-xaN-}}\\
& =\sum_{N\subseteq[n]:i\in N} \Pr\sigma{s_0}{\Omega_N}
\tag{$\Omega_C$'s partition almost all $\Pat$}
\end{align*}

\color{black}


\subsection{Transient behaviour and Equations 1--3}





Now we set the values for $y_\chi$, $\chi\in A\cup (S\times 2^{[n]})$,
and prove that they satisfy Equations 1--3 of $L$ when the values $f_N(a)$ are 
assigned to $x_{a,N}$. 
One could obtain the values $y_\chi$ using the methods of \cite[Theorem 9.3.8]{Puterman}, which requires the machinery of deviation matrices.
Instead, we can first simplify the behaviour of $\sigma$ in the transient part to memoryless using \cite{krish} and then obtain $y_\chi$ directly, like in \cite{EKVY08}, as expected numbers of taking actions. 
To this end, for a state $s$ we define $\reach s$ to be the set of runs that contain $s$. 

Similarly to \cite[Proposition 4.2 and 4.5]{krish},
we modify the MDP $G$ into another MDP $\overline{ G}$ as follows:
For each $s\in S,N\subseteq[n]$, we add a new absorbing state $f_{s,N}$.
The only available action for $f_{s,N}$ leads back to $f_{s,N}$ with probability $1$. 
We also add a new action $a_{s,N}$ to every $s\in S$ for each $N\subseteq[n]$.
The distribution associated with $a_{s,N}$ assigns probability $1$ to $f_{s,N}$.
Finally, we remove all unreachable states.
The construction of \cite{krish} is the same but with only a single value used for $N$.
We denote the copy of each state $s$ of $G$ in $\overline{ G}$ by $\overline{ s}$.

\begin{lemma}\label{cl:strat-reach}
There is a strategy $\overline{ \sigma}$ in $\overline{ G}$ such that for every $C\in\mec$ and $N\subseteq[n]$,
$$\sum_{s\in C} \PrS{\overline{\sigma}}{\overline{ \inits}}{\reach f_{s,N}} = \PrS{\sigma}{\inits}{\Omega_C\cap\Omega_N}\,.$$
\end{lemma}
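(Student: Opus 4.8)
The statement is the finer-grained version of \cite[Proposition~4.2 and 4.5]{krish}, in which the commitment records not only the MEC $C$ in which $\sigma$ eventually settles but also the subset $N\subseteq[n]$ of satisfaction constraints that the limit-average of the tail is $N$-good for; the plan is to follow the same construction and to make the commitment a pair $(C,N)$. As in \cite{krish}, the delicate point is that both $\Omega_C$ and $\Omega_N$ are tail events, so no strategy in $\overline G$ can commit correctly after finitely many steps; this will be handled by a martingale coupling together with the optional stopping theorem.

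Concretely, I would let $\overline\sigma$ simulate $\sigma$ (its memory therefore records the current $G$-history, which suffices to evaluate the conditional probabilities below) and, in addition, carry an auxiliary variable $U$ uniform on $[0,1]$, revealed lazily bit by bit so that $\overline\sigma$ remains a stochastic-update strategy with countable memory. At time $t$, along a history $w$ ending in a state $s$, set $M^{C,N}_t:=\PrS{\sigma}{\inits}{\Omega_C\cap\Omega_N\mid\calF_t}$ for all $C\in\mec$, $N\subseteq[n]$, where $(\calF_t)_t$ is the natural filtration of the $\sigma$-play. Since the sets $\Omega_C\cap\Omega_N$ partition $\Pat$ up to a null set, $\sum_{C,N}M^{C,N}_t=1$, so these values cut $[0,1]$ into consecutive blocks $I^{C,N}_t$ of lengths $M^{C,N}_t$ in a fixed order of the pairs; let $(C^*_t,N^*_t)$ be the block currently containing $U$. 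If $s$ belongs to the MEC $C^*_t$, then with probability $\tfrac12$ the strategy \emph{commits}, i.e.\ plays the fresh action $a_{s,N^*_t}$ into $f_{s,N^*_t}$; otherwise it keeps following $\sigma$.

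For correctness I would check three things. (i)~The commitment time $\tau$ is almost surely finite: by L\'evy's upward theorem each $(M^{C,N}_t)_t$ converges a.s.\ to $\mathbbm 1_{\Omega_C\cap\Omega_N}$, so on almost every run the block $(C^*_t,N^*_t)$ eventually stabilizes to the pair $(C^\dagger,N^\dagger)$ with $\omega\in\Omega_{C^\dagger}\cap\Omega_{N^\dagger}$; since on $\Omega_{C^\dagger}$ the run is eventually confined to $C^\dagger$ and revisits it infinitely often, the current block's MEC is hit infinitely often and the geometric commitment fires a.s. (ii)~$U$ and the commitment coins are independent of the $\sigma$-play, so $(M^{C,N}_t)_t$ is still a martingale with respect to the enlargement $(\mathcal G_t)_t$ of $(\calF_t)_t$ by $U$ and by the commitment coins tossed up to $t$, and $\tau$ is a stopping time for $(\mathcal G_t)_t$; the martingale is bounded, so optional stopping yields $\expected[M^{C,N}_\tau]=M^{C,N}_0=\PrS{\sigma}{\inits}{\Omega_C\cap\Omega_N}$. (iii)~Conditionally on $\mathcal G_\tau$ the variable $U$ is still uniform on $[0,1]$, hence the probability that $\overline\sigma$ commits into some $f_{s,N}$ with $s\in C$ equals $\expected[|I^{C,N}_\tau|]=\expected[M^{C,N}_\tau]$; as each $f_{s,N}$ is absorbing and entered at most once, $\sum_{s\in C}\PrS{\overline\sigma}{\overline{\inits}}{\reach f_{s,N}}=\PrS{\sigma}{\inits}{\Omega_C\cap\Omega_N}$. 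The cases with $\PrS{\sigma}{\inits}{\Omega_C\cap\Omega_N}=0$ are immediate.

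The hard part is item (i) together with the precise way the tail-event nature of $\Omega_C\cap\Omega_N$ is side-stepped: one has to verify that the $U$-coupling does not destroy the martingale property relative to the enlarged filtration (so that optional stopping applies) and that $\tau<\infty$ a.s., which rests on the a.s.\ stabilization of $(C^*_t,N^*_t)$ and on the fact that under $\sigma$ almost every run is eventually trapped in a MEC and revisits it infinitely often. A routine secondary point is implementing the lazily-revealed $U$ and the simulation of $\sigma$ inside the paper's countable-memory, stochastic-update strategy model; alternatively, having $\overline\sigma$ in hand, one may replace it by the strategy of \cite{EKVY08} that is memoryless on the transient part of the finite MDP $\overline G$ with reachability targets $\{f_{s,N}\}$, which preserves every value $\PrS{\cdot}{\overline{\inits}}{\reach f_{s,N}}$ and provides the expected action counts used to define the $y_\chi$.
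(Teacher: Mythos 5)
Your step (iii) is where the argument breaks. The claim that $U$ is still uniform conditionally on $\mathcal G_\tau$ cannot be right: the eligibility test ``the current state lies in $C^*_t$'' makes the event $\{\tau=t\}$ depend on which block contains $U$, so $\tau$ and the committed pair $(C^*_\tau,N^*_\tau)$ are correlated with $U$, and $\Pr[U\in I^{C,N}_\tau]\neq\expected\bigl[|I^{C,N}_\tau|\bigr]$ in general. Optional stopping does give $\expected[M^{C,N}_\tau]=\PrS{\sigma}{\inits}{\Omega_C\cap\Omega_N}$, but that expectation is not the commitment probability. Concrete counterexample: states $u,v$ with self-loops $a\in\act{u}$, $b\in\act{v}$ and an action $c\in\act{u}$ leading to $v$; MECs $C_1=\{u,a\}$, $C_2=\{v,b\}$; one reward with $\reward(a)=1$, $\reward(b)=\reward(c)=0$, $\sat_1=1$; let $\sigma$ play $a$ and $c$ each with probability $1/2$ at the first step and $a$ forever afterwards, so $\PrS{\sigma}{\inits}{\Omega_{C_1}\cap\Omega_{\{1\}}}=\PrS{\sigma}{\inits}{\Omega_{C_2}\cap\Omega_{\emptyset}}=1/2$. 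At time $1$ both conditional probabilities equal $1/2$ and the play is in $C_1$. If $U$ lies in the $(C_1,\{1\})$-block (probability $1/2$), your strategy is already eligible and commits to $f_{u,\{1\}}$ with probability $1/2$ before the first $\sigma$-action is taken; otherwise it follows $\sigma$ and then commits according to the realized outcome. This yields $\PrS{\overline\sigma}{\overline{\inits}}{\reach f_{u,\{1\}}}=\tfrac12\bigl(\tfrac12+\tfrac12\cdot\tfrac12\bigr)+\tfrac12\cdot\tfrac12=\tfrac58\neq\tfrac12$; the coupling systematically over-commits to MECs visited before the martingales $M^{C,N}_t$ have settled, and the bias persists for any per-step commitment probability $p>0$.

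The paper sidesteps all of this by a two-line reduction: it first invokes the already-established Lemma~4.6 of \cite{krish} on the intermediate MDP $G'$ with a single sink $f_s$ per state, obtaining $\sigma'$ with $\sum_{s\in C}\PrS{\sigma'}{\inits'}{\reach f_s}=\PrS{\sigma}{\inits}{\Omega_C}$, and then refines $\sigma'$ by replacing each use of $a_s$ with probability $p$ by the actions $a_{s,N}$ taken with probabilities $p\cdot\PrS{\sigma}{\inits}{\Omega_C\cap\Omega_N}/\PrS{\sigma}{\inits}{\Omega_C}$; summing over $s\in C$ gives the claim. In other words, the genuinely hard tail-event issue is delegated to the cited lemma, and only the proportional split over the sets $N$ is new. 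If you want a self-contained construction, the decision to commit must be made independently of $U$ (commit only once the play has provably settled in a MEC, and only then use the conditional proportions to choose $N$); as designed, letting eligibility depend on which block currently contains $U$ destroys the identity you need.
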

\begin{proof}
First, we consider an MDP $G'$ created from $G$ in the same way as $\overline{ G}$, but instead of $f_{s,N}$ for each $s\in S,N\subseteq[n]$, we only have a single $f_s$; similarly for actions $a_s$.
As in \cite[Lemma 4.6]{krish}, we obtain a strategy $\sigma'$ in $G'$ such that 
$\sum_{s\in C} \PrS{\sigma'}{\inits'}{\reach f_{s}} = \PrS{\sigma}{\inits}{\Omega_C}$.
We modify $\sigma'$ into $\overline{\sigma}$ as follows. It behaves as $\sigma'$, but instead of taking action $a_s$ with probability $p$, we take each action $a_{s,N}$ with probability 
$p\cdot\frac{\PrS{\sigma}{\inits}{\Omega_C\cap\Omega_N}}{\PrS{\sigma}{\inits}{\Omega_C}}$. (For $\PrS{\sigma}{\inits}{\Omega_C}=0$, we define $\overline{\sigma}$ arbitrarily.)
Then 
\[
\sum_{s\in C} \PrS{\overline{\sigma}}{\overline{ \inits}}{\reach f_{s,N}}
 = \sum_{s\in C}
 \frac{\PrS{\sigma}{\inits}{\Omega_C\cap\Omega_N}}{\PrS{\sigma}{\inits}{\Omega_C}}
 \cdot \PrS{\sigma'}{\inits'}{\reach f_{s}} =
 \PrS{\sigma}{\inits}{\Omega_C\cap\Omega_N}\vspace{-26 pt} 
\]
%
%
%
\end{proof}

By \cite[Theorem~3.2]{EKVY08}, there is a memoryless strategy $\overline{\sigma}$ satisfying the lemma above such that
\begin{align*}
y_a&:=\sum_{t=1}^\infty \PrS{\overline{\sigma}}{\overline{ s}}{A_t=a} \text{\qquad(for actions $a$ preserved in $\overline{ G}$)}\\ 
y_{s,N}&:=\PrS{\overline{\sigma}}{\overline{ s_0}}{\reach f_{s,N}}
\end{align*}
are finite values satisfying Equations 1 and 2, and, moreover,  
$$y_{s,N}\geq \sum_{s\in C} \Pr{\overline{\sigma}}{\overline{s_0}}{\reach f_{s,N}}.$$
By Lemma~\ref{cl:strat-reach} for each $C\in\mec$ we thus have 
$$\sum_{s\in C}y_{s,N} \geq \Pr{\sigma}{s_0}{\Omega_C\cap\Omega_N}$$
and summing up over all $C$ and $N$ we have
$$\sum_{N\subseteq[n]}\sum_{s\in S}y_{s,N}\geq \sum_{N\subseteq[n]} \Pr{\sigma}{s_0}{\Omega_N}$$
where the first term is $1$ by Equation 2, the second term is $1$ by partitioning of $\Pat$, hence they are actually equal and thus
$$\sum_{s\in C}y_{s,N} = \Pr{\sigma}{s_0}{\Omega_C\cap\Omega_N}=\sum_{a\in C}x_{a,N}$$
where the last equality follows by Lemma~\ref{cl:mec-reach-xaN-}, yielding Equation 3.

\section{Proof of Theorem~\ref{thm:main}: Solution to \texorpdfstring{$L$}{L} induces witness strategy}\label{ssec:thm-proof3}

Now we proceed to the proof of the third point of Theorem~\ref{thm:main}.
Let $\bar x_{a,N},\bar y_a,\bar y_{s,N},s\in S,a\in A,N\subseteq[n]$ be a solution to the system $L$. We show how it effectively induces a witness strategy $\sigma$. 

We start with the recurrent part. We prove that even if the flow of Equation 4 is ``disconnected'' we may still play the actions with the exact frequencies $x_{a,N}$ on almost all runs. To formalize the frequency of an action $a$ on a run, recall $\kda$ is the indicator function of $a$, i.e. $\kda(a)=1$ and $\kda(b)=0$ for $a\neq b\in A$. Then $\vfreq_a=\lrIf{\kda}$ defines a vector random variable, indexed by $a\in A$. For the moment, we focus on strongly connected MDPs, i.e.\ the whole MDP is a MEC, and with $N\subseteq[n]$ fixed.

Firstly, we construct a strategy for each ``strongly connected'' part of the solution $x_{a,N}$ and connect the parts, thus averaging the frequencies. This happens at a cost of a small error used for transiting between the strongly connected parts. Secondly, we eliminate this error as we let the transiting happen with measure vanishing over time.


\subsection{\texorpdfstring{$x$}{x}-values and recurrent behaviour}

To begin with, we show that $x$-values describe the recurrent behaviour only:

\begin{lemma}\label{lem:x-mec}
Let $\bar x_{a,N}, a\in A,N\subseteq[n]$ be a non-negative solution to Equation 4 of system $L$. Then for any fixed $N$, $X_N:=\{s,a\mid \bar x_{a,N}>0,a\in\act{s}\}$ is a union of end components.

In particular, $X_N\subseteq\bigcup\mec$, and for every $a\in A \setminus \bigcup\mec$ and $N\subseteq[n]$, we have $\bar x_{a,N}=0$.
\end{lemma}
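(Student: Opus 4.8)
The statement splits into two parts: (i) $X_N$ is a union of end components, and (ii) $X_N \subseteq \bigcup\mec$, which immediately yields $\bar x_{a,N}=0$ for transient actions $a$. Part (ii) is a direct consequence of (i): every end component is contained in some MEC, so a union of end components lies in $\bigcup\mec$, and any action outside $\bigcup\mec$ cannot appear in $X_N$, forcing $\bar x_{a,N}=0$. So the work is all in part (i).

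For part (i), fix $N$ and write $\bar x_a := \bar x_{a,N}$ for brevity. The set of states appearing in $X_N$ is $T := \{s \mid \exists a\in\act{s},\ \bar x_a > 0\}$, and the set of actions is $B := \{a \mid \bar x_a > 0\}$. I would first check the ``closure under successors'' property: if $a\in B$ and $\delta(a)(s') > 0$, then $s'\in T$. This follows from Equation 4 summed appropriately — more precisely, Equation 4 reads $\sum_{a\in A}\bar x_a\,\delta(a)(s') = \sum_{a\in\act{s'}}\bar x_a$ for each $s'$; if the left-hand side is strictly positive (which it is when some $a\in B$ has $\delta(a)(s')>0$, since all $\bar x_a\ge 0$), then the right-hand side is positive, so some action in $\act{s'}$ lies in $B$, hence $s'\in T$. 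This is exactly property (1) in the definition of an end component, applied componentwise to the connected pieces of $X_N$.

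The remaining task is the strong-connectivity property (2): I need to show that $X_N$ decomposes into end components, i.e. that within each ``communicating class'' of $X_N$ every state can reach every other via actions of $B$ staying inside $T$. The standard argument: consider the graph on $T$ whose edges are $s \to s'$ whenever there is $a\in\act{s}\cap B$ with $\delta(a)(s')>0$. Decompose into bottom strongly connected components (BSCCs) of this graph restricted to $T$ — by the closure property just established, once you are in $T$ you stay in $T$, so BSCCs exist and are reached with positive "flow". For a BSCC $R$, the restriction of $\bar x$ to actions enabled in $R$ still satisfies Equation 4 locally (the flow into $R$ equals the flow out, and since $R$ is bottom there is no flow leaving $R$), so $R$ together with $\{a\in B\mid a\in\act{s}, s\in R\}$ forms an end component. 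For states in $T$ that are not in any BSCC: I claim these cannot exist with positive $\bar x$, because the recurrent flow equation forces the total flow to be "conserved" and a transient-within-$T$ state would leak flow with nothing replenishing it in steady state — formally, summing Equation 4 over a down-closed set and comparing in/out flow gives a contradiction unless the flow is supported on BSCCs. This "flow supported on recurrent classes" fact is the classical characterization of stationary distributions / circulations on digraphs.

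The main obstacle, I expect, is making the last step rigorous and clean: arguing that a nonnegative circulation (solution to Equation 4) is necessarily supported on a union of strongly connected subgraphs each closed under successors. One clean way is induction on the number of states in $T$: pick any state $s\in T$, find a cycle through $s$ using only $B$-actions staying in $T$ (possible because from $s$ one can keep following positive-flow edges and, by finiteness plus the closure property, must eventually revisit a state; the closure property guarantees we never leave $T$), observe that the states/actions on this cycle form (or are contained in) an end component, then "peel off" the minimal positive flow value along some well-chosen cycle or BSCC and induct on the reduced system. Alternatively — and perhaps more smoothly — one invokes directly the known structural lemma from the companion literature (this is essentially \cite[analogue of the flow/MEC correspondence]{krish, EKVY08}) that solutions to a recurrent-flow equation live on end components. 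I would present the self-contained BSCC argument since it is short and the paper will need the explicit end-component structure anyway for the subsequent strategy construction. Either way, once (i) is in hand, (ii) and the final sentence are immediate.
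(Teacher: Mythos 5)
Your proposal is correct and follows essentially the same route as the paper: closure of $X_N$ under successors falls out of Equation~4 exactly as you describe, and the strong-connectivity part is handled in the paper by the same flow-conservation cut argument you sketch (the paper sums Equation~4 over the set of states that can reach a given state $\hat s$ and derives a strict inequality from the one-way leak, rather than phrasing it via BSCCs, but the computation is identical). The only cosmetic difference is organizational — the paper argues directly from an assumed non-returnable path instead of first decomposing the positive-flow graph into bottom SCCs.
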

\begin{proof}
Denoting $ \bar x_{s,N}:=\sum_{a\in\act{s}}\bar x_{a,N}=\sum_{a\in A}\bar x_{a,N}\cdot\delta(a)(s)$ for each $s\in S$, we can write 
$$X_N=\{a\mid x_{a,N}>0\}\cup\{s\mid \bar x_{s,N}>0\}\,.$$
Firstly, we need to show that for all $a\in X_N$, whenever $\delta(a)(s')>0$ then $s'\in X_N$. Since $\bar x_{s',N}\geq \bar x_{a,N}\cdot\delta(a)(s')>0$, we have $s'\in X_N$.

Secondly, let there be a path from $\hat s$ to $\hat t$ in $X_N$. We need to show that there is a path from $\hat t$ to $\hat s$ in $X_N$. Assume the contrary and denote $T\subseteq X_N$ the set of states with no path to $\hat s$ in $X_N$; we assume $\hat t\in T$. We write the path from $\hat s$ to $\hat t$ as $\hat s\cdots s'bt'\cdots \hat t$ where $s'\in X_N\setminus T$ and $t'\in T$. Then $b\in \act{s'}$ and $\delta(b)(t')>0$. Consequently, 

\rstart
\begin{align*}
\sum_{s\in X_N\setminus T}\sum_{a\in A} x_a\cdot \delta(a)(s)
& = \sum_{s\in X_N\setminus T}\sum_{a\in\act{s}} x_a
\tag{by summing Equation 4 over $s\in X_N\setminus T$}\\
&= \sum_{s\in X_N\setminus T}\sum_{a\in\act{s}} \sum_{\overline s\in X_N\setminus T}\! x_a \cdot \delta(a)(\overline s) +\!\! \sum_{s\in X_N\setminus T}\sum_{a\in\act{s}} \sum_{\overline s\in  T} x_a \cdot \delta(a)(\overline s) 
\tag{case split over target states}\\
& > \sum_{s\in X_N\setminus T}\sum_{a\in\act{s}} \sum_{ \overline s\in X_N\setminus T} \!x_a \cdot \delta(a)(\overline s)
\tag{by $\delta(b)(t')>0$}\\
& = \sum_{ \overline s\in X_N\setminus T} \sum_{\substack{a\in\act{s}:\\s\in X_N\setminus T}} \! x_a \cdot \delta(a)(\overline s)
\tag{rearranging}\\
& = \sum_{ \overline s\in X_N\setminus T} \sum_{a\in A}  x_a \cdot \delta(a)(\overline s)
\tag{see below}
\end{align*}
which is a contradiction. The last equality follows by definition of $T$: actions enabled in $T$ cannot lead to $X_N\setminus T$ since from $X_N\setminus T$ there is always a path to $\hat s$ and from $T$ there is no path to $\hat s$.
\rstop
%
%
%
\end{proof}

We thus start with the construction of the recurrent behaviour from $x$-values. For the moment, we restrict to strongly connected MDP and focus on Equation 4 for a particular fixed $N\subseteq[n]$.  Note that for a fixed $N\subseteq[n]$ we have a system of equations equivalent to the form 
\begin{equation}
\sum_{a\in A}x_{a}\cdot \delta(a)(s)=\sum_{a\in \act{s}} x_{a}\qquad \text{for each }s\in S. \label{eq:4a}
\end{equation}
We set out to prove Corollary~\ref{cor:flow-strat}. This crucial observation states that even if the flow of Equation 4 is ``disconnected'', we may still 
play the actions with the exact frequencies $x_{a,N}$ on almost all runs.

\medskip

Firstly, we construct a strategy for each ``strongly connected'' part of the solution $\bar x_a$ (each end-component of $X_N$ of Lemma~\ref{lem:x-mec}). 

\begin{lemma}\label{lem:BSCC}
In a strongly connected MDP $G$, let $\bar x_{a,N}, a\in A$ be a non-negative solution to Equation 4 of system $L$ for a fixed $N\subseteq[n]$ and $\sum_{a\in A}x_{a,N}>0$.
It induces a memoryless strategy $\zeta$ such that for every
BSCCs $D$ of $G^{\zeta}$, every $a\in D\cap A$, and almost all runs in $D$ holds
$$\vfreq_a
 = \frac{\bar{x}_{a,N}}{\sum_{a\in D\cap A} \bar{x}_{a,N}}$$
\rstart i.e. $\Pr{\zeta}{}{\vfreq_a
	= \frac{\bar{x}_{a,N}}{\sum_{a\in D\cap A} \bar{x}_{a,N}}\mid \Omega_D}=1$. Moreover, if all $x_{a,N}$'s are positive then $G^{\zeta}$ is a BSCC and $\vfreq_a$ is almost surely constant. \rstop
\end{lemma}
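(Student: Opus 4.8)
The plan is to exhibit $\zeta$ as the \emph{flow strategy} determined by the $\bar x$-values, show that the bottom strongly connected components of $G^{\zeta}$ are exactly the maximal end components of the support $X_N$ from Lemma~\ref{lem:x-mec}, and then identify the action frequencies with the unique stationary distribution on each such component via the Markov-chain ergodic theorem.

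First I would set $\bar x_{s,N}:=\sum_{a\in\act{s}}\bar x_{a,N}$, which by (\ref{eq:4a}) also equals $\sum_{a\in A}\bar x_{a,N}\cdot\delta(a)(s)$, and define the memoryless $\zeta$ by $\zeta(s)(a):=\bar x_{a,N}/\bar x_{s,N}$ whenever $\bar x_{s,N}>0$. On the remaining states (those with $\bar x_{s,N}=0$) I would let $\zeta$ coincide with a fixed memoryless strategy reaching the state set of $X_N$ almost surely from every state; such a strategy exists since $G$ is strongly connected and $X_N\neq\emptyset$ (as $\sum_a\bar x_{a,N}>0$), and its only purpose is to rule out spurious BSCCs disjoint from $X_N$. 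Two observations: (i)~if $\bar x_{s,N}>0$ then $\zeta$ plays with positive probability only actions $a$ with $\bar x_{a,N}>0$, and any successor $s'$ of such $a$ has $\bar x_{s',N}\ge\bar x_{a,N}\cdot\delta(a)(s')>0$, so the positive-flow region is closed under $\zeta$ and, by Lemma~\ref{lem:x-mec}, $\zeta$ in fact stays inside $X_N$ there; (ii)~consequently every run of $G^{\zeta}$ almost surely enters $X_N$ and then never leaves it.

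By Lemma~\ref{lem:x-mec}, $X_N$ is the disjoint union of its maximal end components $C_1,\dots,C_k$. At a state of $C_j$ the strategy $\zeta$ assigns positive probability precisely to the actions of $C_j$ enabled there, so each $C_j$ is closed and strongly connected in the finite Markov chain $G^{\zeta}$, i.e.\ a BSCC; together with observation~(ii) this shows that the BSCCs of $G^{\zeta}$ are exactly $C_1,\dots,C_k$, and in particular $\sum_{a\in D\cap A}\bar x_{a,N}>0$ for every BSCC $D$, so the claimed value is well defined. Fixing $D=C_j$, I would check, using (\ref{eq:4a}) restricted to $C_j$ together with the fact that each action is enabled in a unique state, that $a\mapsto\bar x_{a,N}$ (restricted to $D\cap A$) is an invariant measure of $G^{\zeta}$ restricted to $D$; since that chain is finite and irreducible it has a unique stationary distribution, which is therefore $a\mapsto\bar x_{a,N}/\sum_{b\in D\cap A}\bar x_{b,N}$. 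Then the ergodic theorem (the strong law of large numbers for finite irreducible Markov chains) gives that, conditioned on $\Omega_D$, the Cesaro average $\frac1T\sum_{t=1}^T\kda(A_t)$ converges almost surely to this stationary value; in particular $\vfreq_a=\lrIf{\kda}$ equals it almost surely on $\Omega_D$. For the \emph{moreover} clause, if all $\bar x_{a,N}$ are positive then $X_N$ is all of $G$, so $\zeta$ is the flow strategy everywhere, assigns positive probability to every action, and $G^{\zeta}$ is irreducible; its single BSCC is all of $G^{\zeta}$, so $\vfreq_a$ is almost surely the constant $\bar x_{a,N}/\sum_b\bar x_{b,N}$.

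The main obstacle — really the only delicate point — is the bookkeeping around states carrying no flow: one must ensure the free part of $\zeta$ does not create a BSCC $D$ with $\sum_{a\in D\cap A}\bar x_{a,N}=0$, which is why such states are routed toward $X_N$ by an almost-sure reachability strategy, available precisely because $G$ is strongly connected. The rest is the standard correspondence between a recurrent flow and a stationary distribution, with the one additional care that, since the statement speaks of $\lrIf{\cdot}$, we need genuine almost-sure convergence of the time averages rather than a mere bound on the $\liminf$; this is exactly what the Markov-chain ergodic theorem supplies. This lemma will then be combined, in Corollary~\ref{cor:flow-strat}, with a device that connects the several components $C_j$ using actions played with frequency tending to $0$.
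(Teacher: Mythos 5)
Your proposal is correct and follows essentially the same route as the paper: the paper obtains $\zeta$ by citing \cite[Lemma 4.3]{krish} (which is precisely the flow strategy $\zeta(s)(a)=\bar x_{a,N}/\sum_{b\in\act{s}}\bar x_{b,N}$ you construct) and then invokes the ergodic theorem to upgrade the expected frequency to an almost-sure constant on each $\Omega_D$. Your write-up simply unpacks that citation in a self-contained way --- identifying the BSCCs with the maximal end components of $X_N$, verifying that the normalized flow is the unique stationary distribution, and handling the zero-flow states --- all of which is sound.
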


\begin{proof}
By \cite[Lemma 4.3]{krish} applied to Equation (\ref{eq:4a}), we get a memoryless strategy $\zeta$ such that $\Ex{\zeta}{}{\vfreq_a\mid \Omega_D} = \bar{x}_{a,N}/\sum_{a\in D\cap A} \bar{x}_{a,N}$.
Furthermore, by the ergodic theorem, $\vfreq_a$ returns the same value for almost all runs in $\Omega_D$, hence is equal to  $\Ex{\zeta}{}{\vfreq_a\rstart\mid\Omega_D\rstop}$.
\rstart Finally, if all $x_{a,N}$'s are positive then all actions of $G$ are used. Consequently, since $G$ is strongly connected, $G^\zeta$ is also strongly connected. \rstop
\end{proof}

\smallskip

Secondly, we connect the parts (more end components of Lemma~\ref{lem:x-mec} within one MEC) and thus average the frequencies. This happens at a cost of small error used for transiting between the strongly connected parts.

\begin{lemma}\label{lem:connecting-BSCCs-}
In a strongly connected MDP, let $\bar x_{a,N}, a\in A$ be a non-negative solution to Equation 4 of system $L$ for a fixed $N\subseteq[n]$ and $\sum_{a\in A}x_{a,N}>0$. For every $\varepsilon>0$, there is a memoryless strategy $\zeta^{\varepsilon}$ such that for all $a\in A$ almost surely 
$$\vfreq
_a>\frac {x_{a,N}}  {\sum_{a\in A} x_{a,N}}  -\varepsilon$$
\end{lemma}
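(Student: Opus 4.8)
\emph{Proof plan.}
Write $w:=\sum_{a\in A}\bar x_{a,N}>0$ and $p_a:=\bar x_{a,N}/w$; since Equation 4 is homogeneous, $(p_a)_{a\in A}$ is again a non-negative solution with $\sum_{a\in A}p_a=1$, and the quantity to be lower-bounded is exactly $p_a=x_{a,N}/\sum_{a}x_{a,N}$. By Lemma~\ref{lem:x-mec} the support $X_N=\{(s,a):p_a>0\}$ splits into end components $D_1,\dots,D_k$ of $G$; put $w_j:=\sum_{a\in D_j\cap A}p_a>0$, so $\sum_{j=1}^{k}w_j=1$. Each $D_j$, being an end component, is a strongly connected sub-MDP, and the restriction of $(p_a)$ to $D_j$ is a strictly positive solution of Equation 4 there; hence Lemma~\ref{lem:BSCC} yields a memoryless strategy $\zeta_j$ on $D_j$ whose play is a single BSCC on which, almost surely, $\vfreq_a=p_a/w_j$ for every $a\in D_j\cap A$.

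The plan is to glue the $D_j$ together with vanishing ``leak'' probabilities. Fix, for $j=1,\dots,k$, shortest paths in $G$ routing every state of $D_j$ into $D_{j+1\bmod k}$, and a path from $\inits$ into $D_1$; these exist since $G$ is strongly connected, and routing along a shortest-path tree is memoryless. For a parameter $\lambda>0$ let $\zeta^{\varepsilon}$ be the memoryless strategy that, at a state $s$ of $D_j$, plays as $\zeta_j$ with probability $1-\lambda/w_j$ and takes the first edge of the fixed path from $s$ towards $D_{j+1\bmod k}$ with probability $\lambda/w_j$, and that follows the fixed paths deterministically on all connecting states and on the initial route from $\inits$. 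The reachable part of $G^{\zeta^{\varepsilon}}$ then consists of a transient initial segment and one BSCC $B$ containing all the $D_j$ and all connecting states: $B$ is strongly connected because each $D_j$ remains strongly connected (we only added transitions) and the leaks chain the $D_j$ cyclically.

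For the finite irreducible chain $G^{\zeta^{\varepsilon}}$ on $B$, the ergodic theorem gives $\vfreq_a=q_a$ almost surely, where $q_a$ is the stationary probability of action $a$, so it remains to show $q_a\to p_a$ as $\lambda\to0$ for $a\in X_N$ and $q_a\to0$ for the connecting actions. I would establish this by a renewal--reward argument over one full cycle $D_1\to\cdots\to D_k\to D_1$: leaving $D_j$ is geometric with per-step parameter $\lambda/w_j$, so the mean sojourn in $D_j$ is $w_j/\lambda$ steps; during a sojourn the empirical frequency of $a\in D_j\cap A$ tends to $p_a/w_j$ as $\lambda\to0$, the leak being an $O(\lambda)$ perturbation of $\zeta_j$ on the strongly connected $D_j$; and the transits add only $O(1)$ steps per cycle, with the $O(1)$ bounded uniformly in $\lambda$ since it depends only on $G$ and the $\zeta_j$. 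Hence the expected cycle length is $1/\lambda+O(1)$ and the expected number of visits to $a\in D_j\cap A$ per cycle is $(w_j/\lambda)(p_a/w_j)+O(1)=p_a/\lambda+O(1)$, whence $q_a\to p_a$; connecting actions, having $p_a=0$, get $q_a=O(\lambda)\to0$. Choosing $\lambda$ small enough then gives $\vfreq_a=q_a>p_a-\varepsilon$ for all $a\in A$ simultaneously.

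The main obstacle is this last estimate. Because the recurrent class of $G^{\zeta^{\varepsilon}}$ breaks into the $k$ disjoint pieces $D_j$ as $\lambda\to0$, the stationary distribution is singular at $\lambda=0$ and a mere continuity argument does not apply; the precise calibration of the leak rates as $\lambda/w_j$ is exactly what forces the limiting time-fractions among the $D_j$ to be $w_1,\dots,w_k$, and the renewal--reward bookkeeping is what makes this rigorous while ensuring the error terms are uniform in $\lambda$ (one must, in particular, be careful that the transits are genuinely $O(1)$ and do not themselves spend non-negligible time inside the other $D_l$).
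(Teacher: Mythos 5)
Your gluing construction has a genuine gap. You route the walk cyclically $D_1\to D_2\to\cdots\to D_k\to D_1$, and your renewal--reward bookkeeping relies on each cycle visiting every $D_j$ exactly once, with transits contributing only $O(1)$ steps. But in a general strongly connected MDP the connecting paths between the end components of the support $X_N$ may be forced to pass through \emph{other} end components of $X_N$, and since your strategy is memoryless, upon entering a state of $D_l$ en route the walk is immediately subject to $D_l$'s own dynamics (play $\zeta_l$ with probability $1-\lambda/w_l$): it is captured for a sojourn of mean length $w_l/\lambda$, not $O(1)$. Concretely, take three components with $D_2$ separating $D_1$ from $D_3$, i.e.\ every path between $D_1$ and $D_3$ meets $D_2$ (this is compatible with the ambient MDP being strongly connected and with $w_1=w_2=w_3=1/3$). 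Whatever cyclic order you fix, the induced ``meta-chain'' on $\{1,\dots,k\}$ --- whose transition from $j$ leads to the first component hit after leaking --- is then not the $k$-cycle; in this example one component becomes transient at the meta level, is visited during only finitely many sojourns, and so its actions get long-run frequency $0$ almost surely while their $p_a$ sum to $1/3$. The conclusion fails outright for small $\varepsilon$. The difficulty you flag at the end (that transits must not ``spend non-negligible time inside the other $D_l$'') is therefore not a technicality to be verified but an obstruction to the construction as stated; repairing it would require either non-memoryless routing or re-solving for leak rates against a routing-dependent, possibly non-irreducible meta-chain.

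The paper avoids all of this much more cheaply. Since Equation 4 (for fixed $N$) is a linear \emph{homogeneous} system, it takes an arbitrarily small strictly positive solution $x'_a$ --- namely $\Ex{\tau}{}{\vfreq_a}/M$ for the uniform memoryless strategy $\tau$ and $M$ large --- and applies Lemma~\ref{lem:BSCC} to the strictly positive solution $\bar x_{a,N}+x'_a$. Because all entries are positive, the induced memoryless strategy has a single BSCC and its frequencies are \emph{exactly} $(\bar x_{a,N}+x'_a)/\sum_{a'}(\bar x_{a',N}+x'_{a'})$ almost surely; choosing $\sum_a x'_a\le\frac{\varepsilon}{1-\varepsilon}\sum_a\bar x_{a,N}$ yields the stated bound by a two-line computation. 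In effect the ``leak'' is itself a stationary flow, so the perturbed flow automatically satisfies the balance equations and no renewal--reward analysis, singular-perturbation argument, or routing choice is needed. I suggest reworking your proof along these lines.
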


\begin{proof}
 We obtain $\zeta^{\varepsilon}$ by a suitable perturbation of the strategy
$\zeta$ from previous lemma in such a way that all actions get positive probabilities and
the frequencies of actions change only slightly, similarly as in \cite[Proposition 5.1, Part 2]{krish}.

There exists an arbitrarily small
(strictly) positive solution $x'_a$ of Equation (\ref{eq:4a}).
Indeed, it suffices to consider a strategy $\tau$ which always
takes the uniform distribution over the actions
in every state and then assign $\Ex{\tau}{}{\vfreq_a}/M$ to
$x_a'$ for sufficiently large $M$.
As the system of Equations (\ref{eq:4a}) is linear and homogeneous,
assigning $\bar{x}_{a,N}+x'_a$ to $x_{a,N}$ also solves this system (and thus Equation 4 as well) \rstart and all values are positive. 
Consequently, \rstop Lemma~\ref{lem:BSCC} gives us a memoryless strategy $\zeta^{\varepsilon}$
satisfying almost surely (with $\pr^{\zeta^\varepsilon}$-probability 1)
$$\vfreq_a=\frac{(\bar{x}_{a,N}+x'_a)}{\sum_{a'\in A}\big(\bar{x}_{a',N}+x'_{a'}\big)}\,.$$
We may safely assume that
\rc{ $\sum_{a\in A} x'_{a}\leq \frac{\varepsilon}{1-\varepsilon} \cdot  \sum_{a\in A} \bar x_{a,N}.$}
\rstart Then almost surely \rstop
\rc{
\begin{align*}
\vfreq_a
&=    \frac{\bar{x}_{a,N}+x'_a}{\sum_{a\in A}(\bar{x}_{a,N}+x'_{a})} \tag{by Lemma~\ref{lem:BSCC}}\\
&> \frac{\bar{x}_{a,N}}{\sum_{a\in A}\bar{x}_{a,N}+\sum_{a\in A}x'_{a}} \tag{by $x'_a>0$}\\
&\geq  \frac{\bar{x}_{a,N}}{\sum_{a\in A}\bar{x}_{a,N}+\frac{\varepsilon}{1-\varepsilon} \cdot  \sum_{a\in A} \bar x_{a,N}} \tag{by  $\sum_{a\in A} x'_{a}\leq \frac{\varepsilon}{1-\varepsilon} \cdot  \sum_{a\in A} \bar x_{a,N}$}\\
 &=   \frac{\bar{x}_{a,N}}{\frac{1}{1-\varepsilon} \cdot \sum_{a\in A} \bar x_{a,N}} \tag{rearranging }\\
 &=  \frac{\bar{x}_{a,N}}{\sum_{a\in A}\bar{x}_{a,N}}- \varepsilon \cdot \frac{\bar{x}_{a,N}}{\sum_{a\in A}\bar{x}_{a,N}} \tag{rearranging }\\
  &\geq  \frac{\bar x_{a,N}}{\sum_{a\in A}\bar{x}_{a,N}}-\varepsilon \tag{by $\frac{\bar{x}_{a,N}}{\sum_{a\in A}\bar{x}_{a,N}} \leq 1$}
\end{align*} }
\end{proof}

\medskip

Thirdly, we eliminate this error as we let the transiting (by $x_a'$) happen with probability vanishing over time.

\begin{lemma}\label{lem:limit-BSCC-}
In a strongly connected MDP, let $\xi_i$ be a sequence of strategies, each with $\vfreq=\vec f^i$ almost surely, and 
such that $\lim_{i\to\infty} \vec{f}^i$ is well defined. Then there is Markov strategy $\xi$ such that almost surely 
$$\vfreq=\lim_{i\to\infty} \vec f^i\,.$$
\end{lemma}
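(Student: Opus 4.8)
The plan is an explicit \emph{phased} construction: $\xi$ runs in consecutive phases, and during the $i$-th phase it copies $\xi_i$. Intuitively, if the phase lengths grow fast enough, then by the end of phase $i$ the Cesàro averages of the actions along the run are close to $\vec f^i$, hence --- since $\vec f^i\to\vec f:=\lim_i\vec f^i$ --- close to $\vec f$, and they stay close to $\vec f$ forever after; this follows the spirit of the limit-of-strategies constructions of \cite{krish}. We may assume each $\xi_i$ is memoryless: if necessary, replace it by the memoryless strategy of Lemma~\ref{lem:connecting-BSCCs-} whose frequency vector is, say, within $1/i$ of $\vec f^i$, which leaves $\lim_i$ unchanged and makes $G^{\xi_i}$ a single BSCC containing every state. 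Then ``run $\xi_i$ on a block of time steps'' is itself a Markov prescription, so the phased strategy $\xi$ --- with memory $\Nset$ counting the step index and $\sigma_n(s,t):=(\xi_i)_n(s)$ for $t$ in the $i$-th block --- is a genuine Markov strategy. Note that, $G^{\xi_i}$ being a single BSCC over all states, the ergodic theorem gives that under $\xi_i$, \emph{from every starting state}, $\tfrac1u\sum_{t=1}^u\kda(A_t)\to\vec f^i_a$ almost surely; this uniformity in the entry state is needed because phase $i$ starts wherever phase $i-1$ happened to end.

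\textbf{Choosing the phases.} Fix sequences $\varepsilon_i\downarrow 0$ and $\delta_i>0$ with $\sum_i\delta_i<\infty$. By the a.s.\ convergence above and finiteness of $S$, there are deterministic bounds $M_i$ with $\pr^{\xi_i}\big[\sup_{u\ge M_i}\big\|\big(\tfrac1u\sum_{t=1}^u\kda(A_t)\big)_a-\vec f^i\big\|_\infty>\varepsilon_i\big]<\delta_i$ from every starting state. I would then pick the block lengths $N_i$ inductively, making $N_i$ so large that, writing $S_i:=\sum_{j\le i}N_j$, both $S_{i-1}\le 2^{-i}S_i$ and $M_{i+1}\le 2^{-i}S_i$ hold --- the latter being possible because $M_{i+1}$ depends only on the already-fixed strategy $\xi_{i+1}$. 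The strategy $\xi$ plays $\xi_i$ on the interval $(S_{i-1},S_i]$, resetting $\xi_i$'s trivial memory at its start.

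\textbf{Verification, and the main difficulty.} Let $G_i$ be the event that after $M_i$ steps of the $i$-th block the in-block averages of $\xi_i$ are $\varepsilon_i$-close to $\vec f^i$; then $\pr^\xi[G_i]\ge 1-\delta_i$, so by the Borel--Cantelli lemma almost every run lies in all but finitely many $G_i$. On this a.s.\ event one checks, by elementary bounds on weighted averages, that for all large $i$ the running average $\tfrac1t\sum_{\tau\le t}\kda(A_\tau)$ stays within some $\eta_i\to0$ of $\vec f_a$ for \emph{every} $t$ in phase $i$: at the end of phase $i$ the $\le S_{i-1}$ pre-phase steps form a $\le 2^{-i}$ fraction of the total while the remaining steps deviate from $\vec f^i$ (hence from $\vec f$) by at most $\varepsilon_i$ (resp.\ $\|\vec f^i-\vec f\|_\infty$) on $G_i$; early in phase $i$, where $\xi_i$'s in-block average has not yet stabilised (at most $M_i\le 2^{-i}S_{i-1}$ such steps), the running average is dominated by the pre-phase history, which by the induction hypothesis is already within $\eta_{i-1}$ of $\vec f$, and the new steps shift it by at most $M_i/S_{i-1}\le 2^{-i}$; and the switch from phase $i$ to phase $i{+}1$ causes no jump precisely because $M_{i+1}\le 2^{-i}S_i$, so the initial unconverged steps of $\xi_{i+1}$ are a vanishing fraction of the accumulated history. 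Letting $i\to\infty$ yields $\vfreq=\lim_i\vec f^i$ almost surely. The whole weight of the argument lies in this interlocked choice of block lengths --- each $N_i$ must dominate both the past $S_{i-1}$ and the future convergence bound $M_{i+1}$ --- so that no fluctuation of the running average is visible across a phase boundary; the randomness of the true per-phase convergence times is exactly what the Borel--Cantelli step absorbs.
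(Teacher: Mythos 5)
Your proof is correct and follows essentially the same route as the paper's: a phased Markov strategy that copies $\xi_i$ for a block of length $n_i$, with block lengths chosen to dominate both the accumulated past and the next strategy's stabilization time, followed by a Borel--Cantelli argument and a weighted-average case analysis at and across phase boundaries. The only cosmetic differences are your explicit memoryless reduction to handle the entry state of each phase (the paper instead states its stabilization bound uniformly over all states $s$) and your use of general $\varepsilon_i,\delta_i$ in place of the paper's dyadic rates obtained by passing to a subsequence.
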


\begin{proof}
This proof very closely follows the computation in \cite[Proposition 5.1, Part ``Moreover'']{krish}, but for general $\xi_i$.

Given $a\in A$, let $\lfa:=\lim_{i\to\infty} \vec f^i_a$.
By definition of limit and the assumption that $\vfreq_a=\lrIf{\kda}$ is almost surely equal to $\vec f^i_a$ for each $\xi_i$, there is a subsequence $\xi_j$ of the sequence $\xi_i$ such that 
$\Pr{\xi_j}{s}{\lrIf{\kda} \ge \lfa - 2^{-j-1}}=1$.
Note that for every $j\in \Nset$ there is $\stabi{j}\in\Nset$ such that for all
$a\in A$ and $s\in S$ we get
\[
 \Pr{\xi_{j}}{s}{\inf_{T \ge \stabi{j}} \frac{1}{T}\sum_{t=0}^T \kda(A_t) \ge  \lfa  - 2^{-j}} \ge 1-2^{-j}
 .
\]

Let us consider a sequence $n_0,n_1,\ldots$ of numbers where
$n_j\ge \kappa_j$ and 
\begin{align}
\frac{\sum_{k<j} \rstart n_k \rstop}{n_j}&\leq 2^{-j} \label{eq:k1}\\
\frac{\stabi{j+1}}{n_j}&\leq 2^{-j} \label{eq:k2}
\end{align}  
We define $\xi$ to behave as $\xi_1$ for the first $n_1$ steps, then as $\xi_2$ for the
next $n_2$ steps, etc. In
general, denoting by $N_j$ the sum $\sum_{k<j} n_k$, the strategy
$\xi$ behaves as $\xi_j$ between the $N_j$-th step (inclusive) and
$N_{j+1}$-th step (non-inclusive). Note that such strategy is a Markov strategy.

Let us give some intuition behind $\xi$. The numbers in the sequence $n_0,n_1,\ldots$ grow rapidly
so that after $\xi_j$ is simulated for $n_j$ steps, the part of the history when $\xi_k$ for $k<j$ were simulated
becomes relatively small and has only minor impact on the current average reward (this is
ensured by the condition $\frac{\sum_{k<j} n_k}{n_j}\leq 2^{-j}$). This gives us that almost every
run has infinitely many prefixes on which the average reward w.r.t. $\kda$ is arbitrarily close to
$\rc{ \lfa }$ infinitely often. To get that $\rc{ \lfa }$ is also the long-run average reward,
one only needs to be careful when the strategy $\xi$ ends behaving
as $\xi_j$ and starts behaving as $\xi_{j+1}$, because then up to the $\kappa_{j+1}$ steps we have
no guarantee that the average reward is close to $\rc{ \lfa }$. This part is taken
care of by picking $n_j$ so large that the contribution (to the average reward) of the $n_j$ steps according to
$\xi_j$ prevails over fluctuations introduced by the first
$\kappa_{j+1}$ steps according to $\xi_{j+1}$ (this is ensured by the condition 
$\frac{\stabi{j+1}}{n_j}\leq 2^{-j}$).

Let us now prove the correctness of the definition of $\xi$ formally.
We prove that almost all runs $\omega$ of $G^{\xi}$ satisfy 
\[
\liminf_{T\rightarrow\infty} \frac{1}{T}\sum_{t=0}^T \kda(A_t(\omega))\geq  \rc{\lfa}\,.
\]
Denote by $E_k$ the set of all runs $\omega=s_0a_0s_1a_1\cdots$ of $G^{\xi}$ such that for some $\stabi{k}\le d \le n_k$ we have
\[
 \frac{1}{d}\sum_{j=N_j}^{N_j+d\rstart -1 \rstop}\kda(a_{k})\quad < \quad \rc{\lfa} - 2^{-k}
 .
\]
We have $\Pr{\xi}{s_0}{E_j} \le 2^{-j}$ and thus
$\sum_{j=1}^{\infty} \Pr{\xi}{s_0}{E_j}=\frac{1}{2}<\infty$ holds. By the
Borel-Cantelli lemma \cite{Royden88}, almost surely only
finitely many of $E_{j}$ take place. Thus, almost
every run $\omega=s_0a_0s_1a_1\cdots$ of $G^{\xi}$ satisfies the following: there is $\ell$ such that for all $j\ge\ell$ and all $\stabi{j}\le d \le n_j$
we have that
\begin{equation}
 \frac{1}{d}\sum_{k=N_j}^{N_j+d\rstart -1 \rstop}\kda(a_{k})\quad \geq \quad \rc{ \lfa } - 2^{-j}\,.\label{eq:k3}
\end{equation}
Consider $T\in \Nset$ such that $N_{j}\leq T<N_{j+1}$ where $j>\ell$. 
Below, we prove the following inequality
\begin{equation}\label{eq:fin}
\frac{1}{T} \sum_{t=0}^T  \kda(a_t)\quad \geq\quad (\rc{ \lfa }-2^{\rstart 1 \rstop-j})(1-2^{1-j})\,.
\end{equation}
Taking the limit of (\ref{eq:fin}) where $T$ (and thus also $j$) 
goes to $\infty$, we obtain 
\[
 \vfreq_a(\omega)=\liminf_{T\rightarrow\infty} \frac{1}{T}\sum_{t=0}^{T}\kda(a_t) \ge \liminf_{j\rightarrow\infty} (\rc{ \lfa }-2^{\rstart 1 \rstop-j})(1-2^{1-j}) = \rc{ \lfa }=\lim_{i\to\infty} \vec f^i_a
\]
yielding the lemma. It remains to prove (\ref{eq:fin}).
First, note that
\[
\frac{1}{T} \sum_{t=0}^T  \kda(a_t) \quad \geq\quad
\frac{1}{T}\sum_{t=N_{j-1}}^{N_{j}-1} \kda(a_t)+\frac{1}{T}\sum_{t=N_j}^{T} \kda(a_t)
\]
and that by (\ref{eq:k3})
\begin{align*}
\frac{1}{T}  \sum_{t=N_{j-1}}^{N_{j}-1}  \kda(a_t) & = \frac{1}{n_j} \sum_{t=N_{j-1}}^{N_{j}-1} \kda(a_t)\cdot \frac{n_j}{T} 
   \geq (\rc{ \lfa }-2^{\rstart 1 \rstop-j}) \frac{n_j}{T}
\end{align*}
which gives
\begin{equation}\label{eq:main}
\frac{1}{T} \sum_{t=0}^T  \kda(a_t) \ \geq \  
(\rc{ \lfa }-2^{\rstart 1 \rstop-j}) \frac{n_j}{T}\, +\, \frac{1}{T}\sum_{t=N_j}^{T} \kda(a_t)
.
\end{equation}
Now, we distinguish two cases.
First, if $T-N_j\leq \stabi{j+1}$, then 
\begin{align*}
 \frac{n_j}{T}\geq\frac{n_j}{N_j+\stabi{j+1}}=\frac{n_j}{N_{j-1}+n_j +\stabi{j+1}} = 1-\frac{N_{j-1}+\stabi{j+1}}{N_{j-1}+n_j+\stabi{j+1}} \geq (1-2^{1-j})
\end{align*}
by (\ref{eq:k1}) and (\ref{eq:k2}). Therefore, by (\ref{eq:main}),
\[
\frac{1}{T} \sum_{t=0}^T  \kda(a_t)  \quad \geq \quad (\rc{ \lfa }-2^{\rstart 1 \rstop-j})(1-2^{1-j})
.
\]
Second, if $T-N_j\geq \stabi{j+1}$, then 
\begin{align*}
\frac{1}{T}\sum_{t=N_j}^{T} \kda(a_t) & =\frac{1}{T-N_j\rstart +1 \rstop}\sum_{t=N_j}^{T} \kda(a_t)\cdot \frac{T-N_j\rstart +1 \rstop}{T} \\
 &  \geq (\rc{ \lfa }-2^{-j})\left(1-\frac{N_{j-1}+n_j}{T}\right) \tag{by (\ref{eq:k3})}\\
 &  \geq (\rc{ \lfa }-2^{-j})\left(1-2^{-j}-\frac{n_j}{T}\right) \tag{by (\ref{eq:k1})}
\end{align*}
and thus, by (\ref{eq:main}),
\begin{align*}
\frac{1}{T} \sum_{t=0}^T  \kda(a_t) & \geq 
(\rc{ \lfa }-2^{\rstart 1 \rstop-j}) \frac{n_j}{T} + (\rc{ \lfa }-2^{-j\rstart + \rstop 1})\left(1-2^{-j}-\frac{n_j}{T}\right) \\
& \geq (\rc{ \lfa }-2^{\rstart 1 \rstop-j}) \left(\frac{n_j}{T} + \left(1-2^{-j}-\frac{n_j}{T}\right) \right) \\
& \geq (\rc{ \lfa }-2^{\rstart 1 \rstop-j}) (1-2^{\rstart 1 \rstop-j}) 
\end{align*}
which finishes the proof of (\ref{eq:fin}).
\end{proof}

Now we know that strategies within an end component can be merged into a strategy with frequencies corresponding to the solution of Equation 4 for each fixed $N$.

\begin{corollary}\label{cor:flow-strat}
For a strongly connected MDP, let $\bar x_{a,N}, a\in A$ be a non-negative solution to Equation 4 of system $L$ for a fixed $N\subseteq[n]$ and $\sum_{a\in A}x_{a,N}>0$. Then there is Markov strategy $\xi_N$ such that for each $a\in A$ almost surely 
$$\vfreq
_a
=\frac {x_{a,N}}   {\sum_{a\in A} x_{a,N}} \,. $$ 
\end{corollary}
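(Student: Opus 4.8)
The plan is to realize $\xi_N$ by stitching together the three preceding lemmas. Since $m:=\sum_{a\in A}\bar x_{a,N}>0$ by hypothesis, the frequency profile $\vec g$ defined by $\vec g_a:=\bar x_{a,N}/m$ is well defined and satisfies $\sum_{a\in A}\vec g_a=1$; this is exactly what we must realize almost surely on the runs of the Markov strategy we build. First I would fix a vanishing sequence $\varepsilon_i\to 0$, say $\varepsilon_i:=1/i$, and apply Lemma~\ref{lem:connecting-BSCCs-} to each $\varepsilon_i$, obtaining memoryless strategies $\zeta^{\varepsilon_i}$ with $\vfreq_a>\vec g_a-\varepsilon_i$ almost surely for every $a$.

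Unwinding the proof of Lemma~\ref{lem:connecting-BSCCs-}, each $\zeta^{\varepsilon_i}$ is precisely the strategy that Lemma~\ref{lem:BSCC} produces from the strictly positive solution $\bar x_{a,N}+x'^{(i)}_a$ of Equation~(\ref{eq:4a}), where the perturbation can be chosen with $\sum_{a\in A}x'^{(i)}_a\le\frac{\varepsilon_i}{1-\varepsilon_i}\cdot m$. Because all the numbers $\bar x_{a,N}+x'^{(i)}_a$ are strictly positive and $G$ is strongly connected, the ``moreover'' clause of Lemma~\ref{lem:BSCC} applies: $G^{\zeta^{\varepsilon_i}}$ is a single BSCC and $\vfreq$ is almost surely equal to the \emph{constant} vector $\vec f^i$ with
\[
\vec f^i_a=\frac{\bar x_{a,N}+x'^{(i)}_a}{\sum_{a'\in A}\bigl(\bar x_{a',N}+x'^{(i)}_{a'}\bigr)}\,.
\]
Since $\sum_{a\in A}x'^{(i)}_a\le\frac{\varepsilon_i}{1-\varepsilon_i}\cdot m\to 0$, the numerators tend to $\bar x_{a,N}$ and the denominators to $m>0$, so $\lim_{i\to\infty}\vec f^i=\vec g$ is well defined. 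I would then feed the sequence $\xi_i:=\zeta^{\varepsilon_i}$ into Lemma~\ref{lem:limit-BSCC-}, which outputs a Markov strategy, to be named $\xi_N$, with $\vfreq=\lim_{i\to\infty}\vec f^i=\vec g$ almost surely, i.e.\ $\vfreq_a=\bar x_{a,N}/\sum_{a'\in A}\bar x_{a',N}$ for every $a\in A$; this is the claim of the corollary. (We rely here on the fact, already contained in Lemma~\ref{lem:limit-BSCC-}, that the componentwise bound $\vfreq_a\ge\lim_i\vec f^i_a$ together with $\sum_a\vfreq_a\le 1=\sum_a\lim_i\vec f^i_a$ forces equality.)

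I expect no genuinely new mathematical difficulty at this level: all the hard work --- realizing a flow whose support is a disjoint union of end components by transiting between these components with a frequency that vanishes in the limit --- has already been discharged in Lemmas~\ref{lem:BSCC}--\ref{lem:limit-BSCC-}. The two points requiring care are (i) that the perturbations $x'^{(i)}$ can be taken to vanish while still respecting the bound imposed in the proof of Lemma~\ref{lem:connecting-BSCCs-}, which is automatic since that bound is of order $\varepsilon_i$, and (ii) that each $\zeta^{\varepsilon_i}$ genuinely induces an almost surely \emph{constant} frequency vector, so that it is admissible as a $\xi_i$ in Lemma~\ref{lem:limit-BSCC-}. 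The second point is the one spot where the argument would fail if one tried to use the unperturbed $\bar x$ directly: then $X_N$ (the support of the flow) may split into several disjoint end components, $G^{\zeta}$ may have several BSCCs, and $\vfreq$ need not be constant --- which is exactly why the perturbation to a strictly positive flow is needed.
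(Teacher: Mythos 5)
Your proposal is correct and follows exactly the route of the paper, whose proof is the one-liner ``take $\xi_i:=\zeta^{1/i}$ from Lemma~\ref{lem:connecting-BSCCs-} and apply Lemma~\ref{lem:limit-BSCC-}''. You in fact supply more detail than the paper does, and the two points you flag as requiring care --- that the perturbed flows are strictly positive so each $\zeta^{\varepsilon_i}$ has an almost surely constant frequency vector, and that $\lim_i\vec f^i$ exists and equals the normalized $\bar x_{\cdot,N}$ --- are precisely the facts that make the composition of the two lemmas legitimate.
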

\begin{proof}
The strategy $\xi_N$ is constructed by Lemma~\ref{lem:limit-BSCC-} taking $\xi_i$ to be $\zeta^{1/i}$ from Lemma~\ref{lem:connecting-BSCCs-}. 
\end{proof}

\begin{remark}
Note that \rstart using such strategy, \rstop all actions and states in \rstart the single \rstop MEC are visited infinitely often. (This will be later useful for the strategy complexity analysis.)
\end{remark}

Since the fraction is independent of the initial state of the MDP, the frequency is almost surely the same also for all initial states.
The reward of $\xi_N$ is almost surely 
$$\lrIf{\reward}(\omega)=\frac{\sum_{a}\bar x_{a,N}\cdot\reward(a)}{\sum_{a}\bar x_{a,N} }\,.$$
{\sloppy
When the MDP is not strongly connected, we obtain such $\xi_N$ in each
MEC $C$ with $\sum_{a\in C}\bar x_{a,N}>0$ and the respective reward
of almost all runs in $C$ is thus
}
\begin{align}
&\Ex{\xi_N}{}{\lrIf{\reward}\mid \Omega_C}=\frac{\sum_{a\in C\cap A}\bar x_{a,N}\cdot\reward(a)}{\sum_{a\in C\cap A}\bar x_{a,N} }\label{eq:xiN-reward-ex}\,.
\end{align}
Moreover, the long-run average reward is the same for almost all runs, which is a stronger property than in \cite[Lemma 4.3]{krish}, which does not hold for the induced strategy there. We need this property here in order to combine the satisfaction requirements.
\begin{align}
&\Pr{\xi_N}{}{\lrIf{\reward}=\frac{\sum_{a\in C\cap A}\bar x_{a,N}\cdot\reward(a)}{\sum_{a\in C\cap A}\bar x_{a,N}}\mid \Omega_C}=1 \label{eq:xiN-reward-pr}\,.
\end{align}

\subsection{\texorpdfstring{$y$}{y}-values and transient behaviour}

We now consider the transient part of the solution that plays $\xi_N$'s with various probabilities. Let ``$\text{switch to }\xi_N\text{ in }C$'' denote the event that a strategy updates its memory, while in $C$, into such an element that it starts playing exactly as $\xi_N$. We can stitch all $\xi_N$'s together as follows:

\begin{lemma}\label{lem:transient}
Let $\xi_N,N\subseteq[n]$ be strategies. Then every non-negative solution $\bar y_a,\bar y_{s,N}$, $a\in A,s\in S,N\subseteq[n]$ to Equation 1 effectively induces a strategy $\sigma$ such that 
$$\pr^\sigma[\text{switch to }\xi_N\text{ in }s]=\bar y_{s,N}$$
and $\sigma$ is memoryless before the switch.
\end{lemma}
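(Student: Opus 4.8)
The plan is to realize the flow $(\bar y_a,\bar y_{s,N})$ by a strategy that is memoryless in the transient phase, exactly in the spirit of \cite[Proposition~4.2 and Lemma~4.6]{krish}; the only new feature is that the single ``switch to the recurrent behaviour'' of \cite{krish} is refined into one possible target $\xi_N$ per subset $N\subseteq[n]$, with the variable $\bar y_{s,N}$ governing the switch to $\xi_N$ at $s$. Concretely, I would give $\sigma$ a memory consisting of a fresh element $\bot$ together with disjoint copies of the memories of all the $\xi_N$, starting in $\bot$. Writing $\bar y_s:=\sum_{a\in\act s}\bar y_a+\sum_{N\subseteq[n]}\bar y_{s,N}$, the behaviour of $\sigma$ in memory $\bot$ at a state $s$ with $\bar y_s>0$ is: play each $a\in\act s$ with probability $\bar y_a/\bar y_s$ (staying in $\bot$), and with probability $\bar y_{s,N}/\bar y_s$ move the memory to the (randomly drawn) initial memory element of $\xi_N$ and thereafter copy $\xi_N$; for $s$ with $\bar y_s=0$ the behaviour in $\bot$ is irrelevant and fixed arbitrarily (the initial switch at $\inits$ being folded into the initial memory distribution, as in \cite{krish}). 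By construction $\sigma_n(s,\bot)$ depends only on $s$, so $\sigma$ is memoryless before the switch, and nothing needs checking after a switch since $\sigma$ then just runs $\xi_N$.

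To compute the switching probabilities I would first extract some bookkeeping from Equation~1. Taking $s=\inits$ gives $\bar y_\inits\ge 1>0$, and $\bar y_a>0$ together with $\delta(a)(s)>0$ gives $\bar y_s>0$; since $\sigma$ uses only actions with $\bar y_a>0$ before the switch, it never leaves $S^+:=\{s:\bar y_s>0\}$, in fact never leaves the set $R\subseteq S^+$ of states reachable from $\inits$ along transient moves. Summing Equation~1 over $s$ yields $\sum_s\bar y_s=1+\sum_a\bar y_a$, which combined with $\sum_s\bar y_s=\sum_a\bar y_a+\sum_{s,N}\bar y_{s,N}$ gives $\sum_{s,N}\bar y_{s,N}=1$. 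Let $P_{s,s'}:=\sum_{a\in\act s}\frac{\bar y_a}{\bar y_s}\,\delta(a)(s')$ be the substochastic transition matrix of $\sigma$ on $R$ before the switch and let $v(s)$ be the expected number of visits to $s$ before the switch. Then $v$ is the minimal non-negative solution of $v(s)=\kd{\inits}(s)+\sum_{s'}v(s')P_{s',s}$, while a direct rewriting of Equation~1 shows $\bar y|_R$ solves the same system (here one uses that no flow escapes $R$: flow balance on $S^+\setminus R$ forces its escape/switch mass to vanish, which in particular gives $\bar y_{s,N}=0$ for $s\in S^+\setminus R$). Minimality gives $v\le\bar y|_R$, hence $v$ is finite, hence $\sum_k\pr^\sigma[\text{not yet switched after step }k]=\sum_s v(s)<\infty$, so the transient phase terminates almost surely; consequently $P$ has no closed class, $I-P^{\mathsf T}$ is invertible on $\Rset^R$, and $v=\bar y$ on $R$. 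Therefore $\pr^\sigma[\text{switch to }\xi_N\text{ in }s]=v(s)\cdot\bar y_{s,N}/\bar y_s=\bar y_{s,N}$ for $s\in R$, and both sides vanish for $s\notin R$, which is the claim.

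The routine parts are the construction itself and the algebraic rewritings of Equation~1; the point that needs genuine care -- the main obstacle -- is making sure the transient phase really behaves well, and it is twofold. First, one must rule out that $\sigma$ stays transient forever: this follows above from the fact that $v$, being the minimal non-negative solution and bounded above by the finite solution $\bar y|_R$, has finite total mass. Second, one must check that the portion of the given flow unreachable from $\inits$ neither carries any switch mass nor leaks flow into $R$; both are consequences of flow-balance bookkeeping on Equation~1, and they replace an assumption implicitly available in \cite{krish}. Once these are in place, invertibility of $I-P^{\mathsf T}$ on the reachable part pins down $v=\bar y$ and the statement follows.
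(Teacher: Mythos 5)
Your proof is correct, but it follows a genuinely different route from the paper's. The paper does not construct the transient strategy from scratch: it first invokes \cite[Lemma 4.4]{krish} to obtain a stochastic-update strategy $\vartheta$ that eventually stays in each MEC $C$ with probability $y_C=\sum_{s\in C}\sum_{N}\bar y_{s,N}$, then, once committed to $C$, tours its states $s_1,\ldots,s_k$ in a fixed order and switches to $\xi_N$ at $s_i$ with the conditional probability $\bar y_{s_i,N}/(y_C-\sum_{j<i}\sum_N\bar y_{s_j,N})$, so that the unconditional switch probability telescopes to $\bar y_{s_i,N}$; memorylessness before the switch is then obtained by a further citation of \cite{krish} rather than by construction. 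You instead define the pre-switch behaviour directly from the normalized flow ($\bar y_a/\bar y_s$ for actions, $\bar y_{s,N}/\bar y_s$ for switching) and prove that the expected occupation measure of the resulting substochastic chain coincides with $\bar y$ on the reachable part, via the minimal-nonnegative-solution argument; this is essentially the technique of \cite{EKVY08}, which the paper itself uses only for the converse direction (extracting $y$-values from a strategy). Your route is more self-contained and yields memorylessness by construction; it also realizes switch mass at arbitrary states rather than only inside MECs, and your explicit treatment of the unreachable circulating flow (showing it carries no switch mass and does not leak into the reachable part) is a point the paper's citation-based argument never has to confront. The paper's route is shorter because it black-boxes the reachability machinery of \cite{krish}, at the price of an indirection through MECs and a two-stage construction. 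Both arguments establish the lemma.
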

\begin{proof}
The idea is similar to \cite[Proposition 4.2, Step 1]{krish}. However, instead of switching in $s$ to $\xi$ with some probability $p$, here we have to branch this decision and switch to $\xi_N$ with probability $p\cdot\frac{\bar y_{s,N}}{\sum_{N\subseteq[n]}\bar y_{s,N}}$. 

Formally,
\rc{for every MEC $C$ of $G$, we denote the
number $\sum_{s\in C} \sum_{N\subseteq[n]} \bar{y}_{s,N}$ by $y_C$. 
According to the Lemma 4.4 of~\cite{krish} we have a
stochastic-update strategy $\vartheta$ which stays eventually in each MEC
$C$ with probability $y_C$. }

Then the strategy $\overline{\sigma}$ works as follows. It plays according to $\vartheta$ until a
 BSCC of $G^{\vartheta}$ is reached. This means that every possible 
 continuation of the path stays in the current MEC $C$ of $G$.
Assume that $C$ has states $s_1,\ldots,s_k$.  
At this point, the strategy
$\overline{\sigma}$ changes its behaviour as follows: First, it strives to reach $s_1$ with probability one. Upon
reaching $s_1$, it chooses randomly with probability
\rc{$\frac{\bar{y}_{s_1,N}}{y_C}
$ to behave as $\xi_N$ forever, or otherwise to
follow on to $s_2$.}
If the strategy $\overline{\sigma}$ chooses to go on to $s_2$, it strives 
to reach $s_2$ with probability one. Upon reaching $s_2$,
it chooses with probability 
\rc{$\frac{\bar{y}_{s_2,N}}{y_C-\sum_{N\subseteq[n]}\bar{y}_{s_1,N}}$} to behave as $\xi_N$ forever, or to follow on to $s_3$, and so on, till 
$s_k$. That is, the probability of switching to $\xi_N$ in $s_i$ is 
\rc{$$\frac{\bar{y}_{s_i,N}}{y_C-\sum_{j=1}^{i-1}\sum_{N\subseteq[n]} \bar{y}_{s_j,N}}\,.$$}

Since $\vartheta$ stays in a MEC $C$ with probability $y_C$, the
probability that the strategy $\overline{\sigma}$ switches to $\xi_N$ 
in $s_i$ is equal to $\bar{y}_{s_i,N}$. 
Further, as in \cite{krish} we can transform the part of $\overline{\sigma}$ before switching to $\xi_N$ to a memoryless strategy and thus get strategy $\sigma$.
\end{proof}

\begin{corollary}\label{cl:m-reach-strat}
Let $\xi_N,N\subseteq[n]$ be strategies. 
Then every non-negative solution \\
$\bar y_a,\bar y_{s,N},\bar x_{a,N},a\in A,s\in S,N\subseteq[n]$ to Equations 1 and 3 effectively induces a strategy $\sigma$ such that for every MEC $C$
$$\pr^\sigma[\text{switch to }\xi_N\text{ in }C]=\sum_{a\in C\cap A} x_{a,N}$$
and $\sigma$ is memoryless before the switch.
\end{corollary}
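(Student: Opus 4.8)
The plan is to read the corollary off Lemma~\ref{lem:transient} by summing the per-state switching probabilities over each MEC and then invoking Equation~3 of $L$.

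First I would apply Lemma~\ref{lem:transient} to the part $\bar y_a,\bar y_{s,N}$ (for $a\in A$, $s\in S$, $N\subseteq[n]$) of the given solution, which by hypothesis satisfies Equation~1. This produces a single strategy $\sigma$ that is memoryless before the switch and satisfies
$$\pr^\sigma[\text{switch to }\xi_N\text{ in }s]=\bar y_{s,N}\qquad\text{for all }s\in S,\ N\subseteq[n].$$
The ``memoryless before the switch'' clause of the corollary is inherited verbatim from Lemma~\ref{lem:transient}, so it only remains to compute, for each MEC $C$, the probability that $\sigma$ switches to $\xi_N$ somewhere inside $C$.

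Second, fix $C\in\mec$ and $N\subseteq[n]$. Along any run, $\sigma$ performs the switch to $\xi_N$ at most once (afterwards it plays exactly as $\xi_N$ forever), so the events ``switch to $\xi_N$ in $s$'', with $s$ ranging over $C$, are pairwise disjoint; additivity therefore gives
$$\pr^\sigma[\text{switch to }\xi_N\text{ in }C]=\sum_{s\in C}\pr^\sigma[\text{switch to }\xi_N\text{ in }s]=\sum_{s\in C}\bar y_{s,N}.$$
Finally, Equation~3 of $L$ asserts precisely $\sum_{s\in C}\bar y_{s,N}=\sum_{a\in C}\bar x_{a,N}$, which is the claimed identity, and the construction is effective since the one in Lemma~\ref{lem:transient} is.

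Since every step is a one-line appeal either to Lemma~\ref{lem:transient} or to a defining equation of $L$, I do not expect any real obstacle here; the only point that warrants an explicit sentence is the pairwise disjointness of the per-state switching events within $C$, which is immediate from the fact that a run switches to its recurrent behaviour only once.
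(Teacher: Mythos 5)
Your proof is correct and is exactly the paper's argument: the paper's entire proof reads ``By Lemma~\ref{lem:transient} and Equation 3,'' which is precisely your chain of applying Lemma~\ref{lem:transient}, summing the (disjoint) per-state switching events over the MEC, and invoking Equation~3. You have merely made explicit the disjointness observation that the paper leaves implicit.
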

\begin{proof}
By Lemma~\ref{lem:transient} and Equation 3.
\end{proof}

\subsection{Proof of witnessing}

We now prove that the strategy $\sigma$ of Corollary~\ref{cl:m-reach-strat} with $\xi_N,N\subseteq[n]$ of Corollary~\ref{cor:flow-strat} is indeed a witness strategy. 
Note that existence of $\xi_N$'s depends on the sums of $\bar x$-values being positive. This follows by Equation 2 and 3. 
We evaluate the strategy $\sigma$ as follows: 
\begin{align*}
&\Ex{\sigma}{}{\lrIf{\reward}}\\
&= \sum_{C\in \mec} \sum_{N\subseteq[n]} \Pr{\sigma}{}{\text{switch to }\xi_N\text{ in }C}\cdot  \Ex{\xi_N}{}{\lrIf{\reward}\mid \Omega_C}
\tag{by Equation 2, $\displaystyle\sum_{N\subseteq [n]}\Pr{\sigma}{}{\text{switch to }\xi_N}=1$}\\
&= \sum_{C\in \mec}\sum_{N\subseteq[n]} \Big(\sum_{a\in C\cap A}\bar x_{a,N}\Big)\cdot \Ex{\xi_N}{}{\lrIf{\reward}\mid \Omega_C} \tag{by Corollary~\ref{cl:m-reach-strat}}\\
&= \sum_{C\in \mec} \sum_{\substack{N\subseteq[n]:\\\sum_{a\in C\cap A}\bar x_{a,N}>0}} \Big(\sum_{a\in C\cap A}\bar x_{a,N}\Big)\cdot 
\Big(\sum_{a\in C\cap A}\bar x_{a,N}\cdot\reward(a)/\sum_{a\in C\cap A}\bar x_{a,N}\Big) \tag{by (\ref{eq:xiN-reward-ex})}\\
&= \sum_{N\subseteq[n]}\sum_{C\in \mec} \sum_{a\in C\cap A}\bar x_{a,N} \cdot\reward(a)\\
&= \sum_{N\subseteq[n]} \sum_{a\in A\cap\bigcup\mec}\bar x_{a,N}\cdot\reward(a)\\
&= \sum_{N\subseteq[n]} \sum_{a\in A}\bar x_{a,N}\cdot\reward(a) \tag{by Lemma~\ref{lem:x-mec}}\\
&\geq \ex \tag{by Equation 5}
\end{align*}

and for each $i\in[n]$ we have
\begin{align*}
&\Pr{\sigma}{}{\lrIf{\reward}_i\geq\sat_i}= \\
&\sum_{C\in \mec}\sum_{N\subseteq[n]} \Pr{\sigma}{}{\text{switch to }\xi_N\text{ in }C}\cdot \Pr{\xi_N}{}{\lrIf{\reward}_i\geq\sat_i\mid \Omega_C}
\tag{by Equation 2, $\displaystyle\sum_{N\subseteq [n]}\Pr{\sigma}{}{\text{switch to }\xi_N}=1$} \\
&= \sum_{C\in \mec}\sum_{N\subseteq[n]} \Big(\sum_{a\in C\cap A}\bar x_{a,N}\Big)\cdot \Pr{\xi_N}{}{\lrIf{\reward}_i\geq\sat_i\mid \Omega_C}
\tag{by Corollary~\ref{cl:m-reach-strat}} \\
&=\sum_{C\in \mec} \sum_{\substack{N\subseteq[n]:\\\sum_{a\in C\cap A}\bar x_{a,N}>0}} \Big(\sum_{a\in C\cap A}\bar x_{a,N}\Big)\cdot \Pr{\xi_N}{}{\sum_{a\in C\cap A}\bar x_{a,N}\cdot\reward(a)_i\big/\sum_{a\in C\cap A}\bar x_{a,N} \geq\sat_i}\tag{by (\ref{eq:xiN-reward-pr})}\\
&\geq \sum_{C\in \mec} \sum_{\substack{i\in N\subseteq[n]:\\\sum_{a\in C\cap A}\bar x_{a,N}>0}} \Big(\sum_{a\in C\cap A}\bar x_{a,N}\Big)\cdot \Pr{\xi_N}{}{\sum_{a\in C\cap A}\bar x_{a,N} \cdot\sat_i/\sum_{a\in C\cap A}\bar x_{a,N} \geq\sat_i}\tag{by Equation 6}\\
&= \sum_{i\in N\subseteq[n]}\sum_{C\in \mec} \sum_{a\in C\cap A}\bar x_{a,N} \\
&= \sum_{i\in N\subseteq[n]}\sum_{a\in A\cap\bigcup\mec}\bar x_{a,N}\\
&= \sum_{i\in N\subseteq[n]}\sum_{a\in A} \bar x_{a,N} \tag{by Lemma~\ref{lem:x-mec}}\\
&\geq \psat_i \tag{by Equation 7}
\end{align*}

\begin{remark}\label{construction}
The proof of the corresponding claim for $\varepsilon$-witness strategies proceeds as above. We get that the strategy $\sigma$ of Corollary~\ref{cl:m-reach-strat} with $\zeta^\varepsilon_N,N\subseteq[n]$ of Lemma~\ref{lem:connecting-BSCCs-} is an $\varepsilon$-witness strategy.
\QEE
\end{remark}



\section{Algorithmic complexity}\label{sec:algc}

In this section, we discuss the solutions to and complexity of all the introduced problems.

\subsection{Solution to \textbp{(multi-quant-conjunctive)}}

As we have seen, there are $\mathcal O(|G|\cdot n)\cdot2^n$ variables in the linear program $L$.
By Theorem~\ref{thm:main}, the upper bound on the algorithmic time complexity is polynomial in the number of variables in system $L$.
Hence, 
the
realizability problem for \textbp{(multi-quant-conjunctive)} can be decided in time polynomial in $|G|$ and exponential in $n$.

\subsection{Solution to \textbp{(multi-quant-joint)} and the special cases}\label{joint}

In order to decide \textbp{(multi-quant-joint)}, the only subset of runs to exceed the probability threshold is the set of runs with all long-run rewards exceeding their thresholds, i.e.\ $\Omega_{[n]}$ (introduced in Section~\ref{subs:47}). The remaining runs need not be partitioned and can be all considered to belong to $\Omega_\emptyset$ without violating any constraint. Intuitively, each $x_{a,\emptyset}$ now stands for the original sum $\sum_{N\subseteq[n]: N\neq[n]}x_{a,N}$; similarly for $y$-variables.
Consequently, the only non-zero variables of $L$ indexed by $N$ satisfy $N=[n]$ or $N=\emptyset$. 
The remaining variables can be left out of the system.

Requiring all variables $y_{a},y_{s,N},x_{a,N}$ for $a\in A,s\in S,\rstart N\in\{\emptyset,[n]\} \rstop$ be non-negative, the program is the following:
\begin{enumerate}
 \item transient flow: for $s\in S$
 $$\kd{s_0}(s)+\sum_{a\in A}y_a\cdot \delta(a)(s)=\sum_{a\in \act{s}} y_a+ y_{s,\emptyset}+y_{s,[n]}$$
 \item almost-sure switching to recurrent behaviour:
 $$\sum_{s\in C}y_{s, \emptyset}+y_{s,[n]}=1$$
 \item probability of switching in a MEC is the frequency of using its actions: for $C\in \mec$
 $$\sum_{s\in C}y_{s,{\emptyset}}=\sum_{a\in C}x_{a,\emptyset}$$
 $$\sum_{s\in C}y_{s,{[n]}}=\sum_{a\in C}x_{a,[n]}$$
 \item recurrent flow: for $s\in S$
 $$\sum_{a\in A}x_{a, \emptyset}\cdot \delta(a)(s)=\sum_{a\in \act{s}} x_{a, \emptyset}$$
 $$\sum_{a\in A}x_{a, [n]}\cdot \delta(a)(s)=\sum_{a\in \act{s}} x_{a, [n]}$$
 \item expected rewards:
 $$\sum_{a\in A}\Big(x_{a, \emptyset}+x_{a,[n]}\Big)\cdot\reward(a)\geq\ex$$
 \item commitment to satisfaction: for $C\in\mec$ and $i\in[n]$
 $$\sum_{a\in C}x_{a, [n]} \cdot \reward(a)_{ i}\geq \sum_{a\in C}x_{a, [n]} \cdot \sat_{ i}$$
 \item satisfaction: 
 $$\sum_{a\in A}x_{a,[n]}\geq \psatscalar$$
\end{enumerate}

Since there are now $\mathcal O(|G|\cdot n)$ variables, the problem as well as its special cases can be decided in polynomial time.

Similarly, for \textbp{(mono-quant)} it is sufficient to consider $N=[n]=\{1\}$ and $N=\emptyset$ only. Consequently, for \textbp{(multi-qual)} $N=[n]$, and for \textbp{(mono-qual)} $N=[n]=\{1\}$ are sufficient, thus the index $N$ can be removed completely.




\begin{theorem}\label{thm:timepoly}
The \textbp{(multi-quant-joint)} realizability problem (and thus also all its special cases) can be decided in time polynomial in $|G|$ and $n$.
\QED
\end{theorem}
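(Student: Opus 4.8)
The plan is to reduce \textbp{(multi-quant-joint)} to an instance of \textbp{(multi-quant-conjunctive)} whose linear program has only polynomially many variables, and then invoke Theorem~\ref{thm:main} together with the polynomial-time solvability of linear programs~\cite{Schrijver1986}. Concretely, I would first observe that the (joint-SAT) constraint $\Pr{\sigma}{}{\lrIf{\reward}\geq\sat}\geq\psatscalar$ speaks only about the single event $\Omega_{[n]}$, namely the runs on which \emph{all} dimensions jointly exceed their thresholds. Hence, when adapting the proof of Theorem~\ref{thm:main}, the only partition class $\Omega_N$ that matters for the satisfaction requirement is $N=[n]$; all remaining classes can be lumped together into a single class, which we index by $\emptyset$. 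This collapse is exactly what the modified linear program displayed just before the theorem statement encodes: the variables $x_{a,N},y_{s,N}$ survive only for $N\in\{\emptyset,[n]\}$, with $x_{a,\emptyset}$ playing the role of $\sum_{N\subsetneq[n]}x_{a,N}$ (and likewise for the $y$'s), while Equation~6 is kept only for $N=[n]$ and Equation~7 becomes the single scalar inequality $\sum_{a\in A}x_{a,[n]}\geq\psatscalar$.

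The key steps, in order, are: (1)~state the collapsed linear program $L'$ (done in the excerpt) and argue its correctness by re-running the two directions of Theorem~\ref{thm:main}. For the ``witness $\Rightarrow$ solution'' direction, given a strategy $\sigma$ realizing (EXP) and (joint-SAT), define $x_{a,[n]}:=f_{[n]}(a)$ exactly as in Section~\ref{subs:47} and $x_{a,\emptyset}:=\sum_{N\subsetneq[n]}f_N(a)$; the flow Equations~1--4 hold because they are linear and closed under summation over $N$, Equation~6 for $N=[n]$ holds verbatim, and Equation~7 is precisely $\Pr\sigma{s_0}{\Omega_{[n]}}\geq\psatscalar$ via Lemma~\ref{cl:mec-reach-xaN-}. (2)~For the ``solution $\Rightarrow$ witness'' direction, feed the $x_{\cdot,\emptyset}$ and $x_{\cdot,[n]}$ flows into Corollary~\ref{cor:flow-strat} to obtain recurrent strategies $\xi_\emptyset,\xi_{[n]}$ in each MEC, stitch them together via Corollary~\ref{cl:m-reach-strat}, and verify (EXP) and (joint-SAT) by the same chain of equalities as in Section~\ref{ssec:thm-proof3}, now with the sum over $N$ ranging only over $\{\emptyset,[n]\}$. (3)~Count variables: $L'$ has $\mathcal O(|A|)$ variables $y_a$, $\mathcal O(|S|)$ variables $y_{s,\emptyset},y_{s,[n]}$, and $\mathcal O(|A|)$ variables $x_{a,\emptyset},x_{a,[n]}$, i.e.\ $\mathcal O(|G|)$ in total, and $\mathcal O(|G|\cdot n)$ constraints (Equation~6 contributes one per MEC per dimension). (4)~Since the MEC decomposition is computable in polynomial time (Remark~\ref{rem:endcomp}) and linear feasibility is decidable in polynomial time, the whole procedure runs in time polynomial in $|G|$ and $n$. (5)~Finally, note that each special case (\textbp{(multi-qual)}, \textbp{(mono-quant)}, \textbp{(mono-qual)}) is obtained by further specialization — $\psatscalar=1$, $n=1$, or both — which only removes variables or fixes $N$, so the polynomial bound is inherited; this is exactly the parenthetical ``and thus also all its special cases''.

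I expect the only genuinely non-routine point to be the bookkeeping in step~(1)--(2): one must check that collapsing all classes $N\subsetneq[n]$ into a single $\emptyset$-class does not silently strengthen or weaken any constraint. The subtle place is Equation~6, which in the full program $L$ is imposed for every pair $(C,N)$ with $i\in N$; after the collapse we retain it only for $N=[n]$, and we must argue this is sound — i.e.\ that the $\emptyset$-class is genuinely unconstrained by satisfaction because runs there are permitted to violate the joint threshold. Since (joint-SAT) makes no demand on runs outside $\Omega_{[n]}$, dropping Equation~6 for the $\emptyset$-class is exactly right, and the ``witness $\Rightarrow$ solution'' direction confirms that the aggregated $x_{a,\emptyset}$ still satisfies Equations~1--5. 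Everything else is a direct restriction of the already-proved Theorem~\ref{thm:main}.
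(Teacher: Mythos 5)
Your proposal is correct and follows essentially the same route as the paper: collapse the partition classes to $N\in\{\emptyset,[n]\}$, keep Equation~6 only for $N=[n]$ and replace Equation~7 by the single scalar inequality $\sum_{a}x_{a,[n]}\geq\psatscalar$, then re-run both directions of Theorem~\ref{thm:main} and count the $\mathcal O(|G|\cdot n)$ variables. The paper presents exactly this collapsed linear program and the same variable count; your additional bookkeeping for the two directions just makes explicit what the paper leaves implicit.
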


\subsection{Solution to \textbp{(multi-quant-conjunctive-joint)}}\label{sec:conjunctivejoint}
The linear program for this ``combined'' problem can be easily derived from the program $L$ in Fig.~\ref{fig:lp-new} as follows.

The first step consists in splitting the recurrent flow into two parts, $\mathit{yes}$ and $\mathit{no}$
Requiring all variables be non-negative, the program is the following:

\begin{enumerate}
 \item transient flow: for $s\in S$
 $$\kd{s_0}(s)+\sum_{a\in A}y_a\cdot \delta(a)(s)=\sum_{a\in \act{s}} y_a+ {\sum_{N\subseteq[n]}(y_{s,N,\mathit{yes}}}+y_{s,N,\mathit{no}})$$
 \item almost-sure switching to recurrent behaviour:
 $$\sum_{\substack{s\in C\in \mec\\{N\subseteq[n]}}}(y_{s, N,\mathit{yes}}+y_{s,N,\mathit{no}})=1$$
 \item probability of switching in a MEC is the frequency of using its actions: for $C\in \mec,{N\subseteq[n]}$
 $$\sum_{s\in C}y_{s,{N},\mathit{yes}}=\sum_{a\in C}x_{a,N,\mathit{yes}}$$
 $$\sum_{s\in C}y_{s,{N},\mathit{no}}=\sum_{a\in C}x_{a,N,\mathit{no}}$$
 \item recurrent flow: for $s\in S,{N\subseteq[n]}$
 $$\sum_{a\in A}x_{a, N,\mathit{yes}}\cdot \delta(a)(s)=\sum_{a\in \act{s}} x_{a, N,\mathit{yes}}$$
 $$\sum_{a\in A}x_{a, N,\mathit{no}}\cdot \delta(a)(s)=\sum_{a\in \act{s}} x_{a, N,\mathit{no}}$$
 \item expected rewards:
 $$\sum_{\substack{a\in A,\\ N\subseteq[n]}}(x_{a, N,\mathit{yes}}+x_{a,N,\mathit{no}})\cdot\reward(a)\geq\ex$$
 \item commitment to satisfaction: for $C\in\mec$, $N\subseteq[n]$, $i\in N$
 $$\sum_{a\in C}x_{a, N,\mathit{yes}} \cdot \reward(a)_{ i}\geq \sum_{a\in C}x_{a, N,\mathit{yes}} \cdot \sat_{ i}$$
 $$\sum_{a\in C}x_{a, N,\mathit{no}} \cdot \reward(a)_{ i}\geq \sum_{a\in C}x_{a, N,\mathit{no}} \cdot \sat_{ i}$$
 \item satisfaction: for $i\in[n]$
 $$\sum_{\substack{a\in A,\\ N\subseteq[n]:i\in N}}x_{a, N,\mathit{yes}}+x_{a,N,\mathit{no}}\geq\psat_i$$
\end{enumerate}
Note that this program has the same set of solutions as the original program, considering substitution $\alpha_{\beta,N}=\alpha_{\beta,N,\mathit{yes}}+\alpha_{\beta,N,\mathit{no}}$.

\medskip

The second step consists in using the ``$\mathit{yes}$'' part of the flow for ensuring satisfaction of the (\ref{eq:krish-sat}) constraint. Formally, we add the following additional equations (of type 6 and 7, respectively):

\begin{enumerate}
\item[$(\widetilde 6)$]
$$\sum_{a\in C}x_{a,N,\mathit{yes}} \cdot \reward(a)_{i}\geq \sum_{a\in C}x_{a,N,\mathit{yes}} \cdot \widetilde\sat_{ i} \text{\qquad for }i\in [n] \text{ and }N\subseteq[n]$$
\item[$(\widetilde 7)$] 
$$\sum_{\substack{a\in A\\N\subseteq[n]}}x_{a,N,\mathit{yes}}\geq \widetilde\psatscalar$$
\end{enumerate}

Note that the number of variables is double that for \textbp{(multi-quant-conjunctive)}.
Therefore, the complexity remains essentially the same: 
\begin{corollary}\label{cor:conjjoint}
The algorithmic complexity for the \textbp{(multi-quant-conjuctive-joint)} is polynomial in the size of the MDP and exponential in $n$.
\QED
\end{corollary}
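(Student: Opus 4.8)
The plan is to follow the proof of Theorem~\ref{thm:main} almost verbatim, the only genuinely new ingredient being one further refinement of the run partition used in Section~\ref{ssec:thm-proof2}, so I would organize the argument into the same three points. For the complexity bound I would just count: the displayed program together with $(\widetilde 6)$ and $(\widetilde 7)$ uses variables $y_a$, $y_{s,N,\mathit{yes}}$, $y_{s,N,\mathit{no}}$, $x_{a,N,\mathit{yes}}$, $x_{a,N,\mathit{no}}$ over $a\in A$, $s\in S$, $N\subseteq[n]$, i.e.\ $\mathcal O(|G|\cdot n)\cdot 2^n$ of them (exactly twice the count for \textbp{(multi-quant-conjunctive)}) and equally many constraints, all with coefficients drawn from $G$ and the given thresholds. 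Invoking the polynomial-time algorithm for linear programming~\cite{Schrijver1986} then yields constructibility and solvability in time polynomial in $|G|$ and exponential in $n$; this gives the claimed bound once the two soundness directions below are in place.

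For \emph{witness strategy induces a solution} I would take a strategy $\sigma$ satisfying (\ref{eq:EXP}), (\ref{eq:SAT}) and the added (\ref{eq:krish-sat})-constraint $\Pr{\sigma}{}{\lrIf{\reward}\geq\widetilde\sat}\geq\widetilde\psatscalar$, and refine the partition $\{\Omega_N\}_{N\subseteq[n]}$ of $\Pat$ (from Section~\ref{subs:47}) into $\Omega_{N,\mathit{yes}}=\Omega_N\cap\{\lrIf{\reward}\geq\widetilde\sat\}$ and $\Omega_{N,\mathit{no}}=\Omega_N\setminus\Omega_{N,\mathit{yes}}$, which is still a finite measurable partition. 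Running the construction of Section~\ref{ssec:thm-proof2} with this finer index set produces frequencies $f_{N,\mathit{yes}}(a)$, $f_{N,\mathit{no}}(a)$ and the associated $y$-values; Lemmas~\ref{lem:cond-flow} and~\ref{cl:mec-reach-xaN-} carry over unchanged, since their proofs only use that $\Pat$ is split into finitely many classes of positive measure. Equations 1--5 and both copies of Equations 6 and 7 then follow as before under the identification $x_{a,N}=x_{a,N,\mathit{yes}}+x_{a,N,\mathit{no}}$; $(\widetilde 6)$ holds because the Equation-6 computation, applied on $\Omega_{N,\mathit{yes}}\cap\Omega_C$, now gives $\Ex{\sigma}{}{\lrIf{\reward_i}\mid\Omega_{N,\mathit{yes}}\cap\Omega_C}\geq\widetilde\sat_i$; and $(\widetilde 7)$ holds since $\sum_{a,N}x_{a,N,\mathit{yes}}=\sum_{N}\Pr{\sigma}{s_0}{\Omega_{N,\mathit{yes}}}=\Pr{\sigma}{}{\lrIf{\reward}\geq\widetilde\sat}\geq\widetilde\psatscalar$ by Lemma~\ref{cl:mec-reach-xaN-}.

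For \emph{solution induces a witness strategy} I would apply Corollary~\ref{cor:flow-strat} in each MEC to each positive-mass flow $(x_{a,N,\mathit{yes}})_a$ and $(x_{a,N,\mathit{no}})_a$, obtaining Markov strategies $\xi_{N,\mathit{yes}}$, $\xi_{N,\mathit{no}}$ whose long-run reward inside $C$ is almost surely the corresponding weighted reward average, which is $\geq\sat_i$ for $i\in N$ by Equation 6 and, for the $\mathit{yes}$ variant, additionally $\geq\widetilde\sat$ by $(\widetilde 6)$. Stitching these via Lemma~\ref{lem:transient} and Corollary~\ref{cl:m-reach-strat} with the $y$-values (switching to $\xi_{N,\mathit{yes}}$, resp.\ $\xi_{N,\mathit{no}}$, in $C$ with probability $\sum_{a\in C}x_{a,N,\mathit{yes}}$, resp.\ $\sum_{a\in C}x_{a,N,\mathit{no}}$) yields a strategy $\sigma$, and evaluating it exactly as in Section~\ref{ssec:thm-proof3} gives $\Ex{\sigma}{}{\lrIf{\reward}}\geq\ex$ from Equation 5 and $\Pr{\sigma}{}{\lrIf{\reward}_i\geq\sat_i}\geq\psat_i$ from Equations 6 and 7, while the new joint constraint follows from
\[
\Pr{\sigma}{}{\lrIf{\reward}\geq\widetilde\sat}\ \geq\ \sum_{C\in\mec}\sum_{N\subseteq[n]}\sum_{a\in C}x_{a,N,\mathit{yes}}\ =\ \sum_{\substack{a\in A,\\N\subseteq[n]}}x_{a,N,\mathit{yes}}\ \geq\ \widetilde\psatscalar
\]
using $(\widetilde 7)$ together with the fact that all $\mathit{yes}$-runs reaching some MEC almost surely clear $\widetilde\sat$ jointly.

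I expect the only subtle point — and it is the same care already exercised for Theorem~\ref{thm:main} — to be verifying that the conditional-probability manipulations behind Lemmas~\ref{lem:cond-flow} and~\ref{cl:mec-reach-xaN-} tolerate conditioning on the finer classes $\Omega_{N,\mathit{yes}}$, $\Omega_{N,\mathit{no}}$; since those proofs invoke nothing beyond finiteness of the partition and positivity of the conditioned event, no new work is required. With correctness thus established, the variable count settles the complexity and the corollary follows.
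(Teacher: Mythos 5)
Your proposal is correct and follows essentially the same route as the paper: the paper obtains the corollary by splitting the flow variables into $\mathit{yes}$/$\mathit{no}$ copies, adding the constraints $(\widetilde 6)$ and $(\widetilde 7)$, and observing that the variable count merely doubles relative to \textbp{(multi-quant-conjunctive)}, with the correctness of the extended program resting (as you spell out in more detail) on refining the partition of runs by the joint-satisfaction event and reusing the machinery of Theorem~\ref{thm:main} and Remark~\ref{rem:cj-struct}. Your write-up simply makes explicit the soundness/completeness argument that the paper leaves implicit.
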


\begin{remark}\label{rem:cj-struct}
The strategies for the case of \textbp{(multi-quant-conjunctive-joint)} are very similar to that of \textbp{(multi-quant-conjunctive)}.
Indeed, the structure of the constructed ($\varepsilon$-)witness strategies is the same: the memoryless strategy for reaching the desired MECs is followed by a stochastic-update switch to strategies for the recurrent behaviour. The only difference is the following.
($\varepsilon$-)witness strategies for \textbp{(multi-quant-conjunctive)} switch to strategies $\xi_N$ (or $\zeta^\varepsilon_N$), each given by values of $x$-variables indexed by a fixed $N\subseteq[n]$.
In contrast, strategies for \textbp{(multi-quant-conjunctive-joint)} switch to strategies $\xi_{N,b}$ (or $\zeta^\varepsilon_{N,b}$), each given by values of $x$-variables indexed by a fixed $N\subseteq[n]$ \emph{and} $b\in\{\mathit{yes},\mathit{no}\}$.
\QEE
\end{remark}

Furthermore, we can also allow multiple constraints, i.e.\ more (\ref{eq:krish-sat}) constraints or more (\ref{eq:SAT}), thus specifying probability thresholds for more value thresholds for each reward. Then instead of subsets of $[n]$ as so far, we consider subsets of the set of all constraints. The number of variables is then exponential in the number of constraints rather than just in the dimension of the rewards.

\subsection{Hardness}

The \textbp{(multi-quant-conjunctive-joint)} problem is also of significant theoretical interest since we can also prove the following hardness result:

\begin{theorem}\label{thm:hard}
The \textbp{(multi-quant-conjunctive-joint)} problem is NP-hard (even without the (\ref{eq:EXP}) constraint).
\end{theorem}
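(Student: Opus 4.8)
The plan is to give a polynomial-time reduction from an NP-complete problem — concretely from \textsc{Sat} (a knapsack-style source works equally well) — to the \textbp{(multi-quant-conjunctive-joint)} realizability problem. Since we are allowed to ignore the \eqref{eq:EXP} constraint (set $\ex$ to the componentwise minimum of the action rewards, which is trivially satisfied), the whole burden is carried by the satisfaction constraints. Given a formula $\varphi$ with variables $v_1,\dots,v_m$ and clauses $c_1,\dots,c_k$, I would build an MDP $G_\varphi$ of size polynomial in $|\varphi|$ whose reward dimension $n$ is linear in $m$ and $k$, consisting of a trivial initial part leading into a single strongly connected MEC. Crucially, $n$ being linear in the input means that the linear program $L$ of Figure~\ref{fig:lp-new} associated with $G_\varphi$ has exponentially many variable blocks $x_{\cdot,N},y_{\cdot,N}$, one per subset $N\subseteq[n]$, and it is precisely the choice of which blocks carry positive mass that encodes the combinatorial guessing of a satisfying assignment; this is consistent with NP-hardness even though $L$ itself is ``only'' a linear program.

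The gadget MEC is designed so that, by Corollary~\ref{cor:flow-strat} and Lemma~\ref{lem:x-mec}, the subsets $N$ for which the block $x_{\cdot,N}$ can be made nonzero in a solution of $L$ correspond exactly to the ``consistent'' objects of the instance: one reward dimension per literal, with the recurrent-flow polytope of the MEC shaped so that every achievable long-run average reward vector satisfies $\reward(\cdot)_{v_j}+\reward(\cdot)_{\neg v_j}\le 1$ in each coordinate pair, so that — with the value thresholds $\sat$ set above $\tfrac12$ on the literal dimensions — an achievable pattern $N$ is a partial truth assignment; one then adds further dimensions and thresholds encoding clause-satisfaction so that an \emph{achievable} pattern $N$ carrying enough mass in a MEC must extend to a full satisfying assignment. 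The conjunctive thresholds $\psat_i$ together with the joint threshold $\widetilde\psatscalar$ (possibly several joint constraints, as allowed by the remark following Corollary~\ref{cor:conjjoint}) are then chosen as rationals with a common denominator derived from $m,k$, so that the only way to meet all probability lower bounds simultaneously is to concentrate a prescribed fraction of the probability mass on blocks $N$ representing genuine satisfying assignments. Theorem~\ref{thm:main} lets us pass freely between ``witness strategy exists'' and ``$L$ is feasible'', so the argument can be conducted entirely on $L$.

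The easy direction is routine: if $\varphi$ is satisfiable, pick a satisfying assignment, take the corresponding pattern $N^\star$, and let the strategy reach the MEC with probability $1$ and play the exact-frequency recurrent strategy $\xi_{N^\star}$ of Corollary~\ref{cor:flow-strat}; one checks directly that all constraints \eqref{eq:SAT} and \eqref{eq:krish-sat} hold. The main obstacle is the converse: showing that \emph{every} solution of $L$ forces $\varphi$ to be satisfiable. Here one argues purely on $L$: from the commitment-to-satisfaction equations (Equation~6 and $(\widetilde 6)$) one reads off, for each block $N$ with positive mass, which value thresholds its recurrent behaviour is pinned to; the satisfaction equations (Equation~7 and $(\widetilde 7)$) then force the total mass on the blocks meeting each threshold to be large; and finally the capacity/consistency structure baked into the MEC implies that this can only happen if the support contains a block encoding a full satisfying assignment of $\varphi$.

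The delicate part — and the place where the numbers must be chosen with care — is ruling out \emph{fractional} cheating: ensuring one cannot meet the probability thresholds by spreading mass over many partial assignments that individually do not satisfy $\varphi$. This is handled by picking the thresholds tight (leaving no slack) and by noting that the feasible region of $L$ is closed, so the $\varepsilon$-slack ``connecting'' flows of Lemma~\ref{lem:connecting-BSCCs-} are irrelevant to feasibility — only the exact-frequency behaviours of Corollary~\ref{cor:flow-strat} matter, and they realise exactly the points of the recurrent-flow polytope, over which the combinatorial argument above applies verbatim.
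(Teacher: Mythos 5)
Your reduction has the same architecture as the paper's: a SAT instance is encoded into an MDP whose reward dimensions are indexed by literals and clauses, a capacity constraint $\vfreq_{a}+\vfreq_{\overline{a}}\le \text{const}$ is built into the MEC for each variable, conjunctive thresholds at probability $\tfrac12$ are meant to force run-wise consistency of the chosen valuation, and a joint constraint over the clause dimensions certifies satisfiability. The forward direction you describe (split the runs $\tfrac12$/$\tfrac12$ between a satisfying valuation and its complement) is exactly what the paper does. So the plan is sound in outline.

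The gap is that the step you yourself flag as "the delicate part" --- ruling out fractional cheating --- is not actually argued; "picking the thresholds tight" is the right instinct but is not a proof, and it is precisely the load-bearing lemma of the whole reduction. The mechanism that makes it work is the following, and it has to be stated and proved: because of the cyclic/capacity structure, for each variable $v$ the two events $\{\vfreq_{v}\ge \tfrac1p\}$ and $\{\vfreq_{\overline{v}}\ge \tfrac1p\}$ are \emph{disjoint}; each is required to have probability at least $\tfrac12$ by the conjunctive constraints; hence each has probability exactly $\tfrac12$, their union has probability $1$, and by the capacity bound the complementary frequency is $0$ on almost every run. Only this disjointness-plus-tightness argument shows that \emph{almost every run} induces a single total valuation --- mass cannot be smeared over inconsistent or partial patterns. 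A second small but necessary step is also missing: the joint constraint only yields a positive-measure set of runs meeting all clause thresholds, and one must invoke the finiteness of the set of valuations to extract a \emph{single} valuation carried by a positive-measure subset, which is then checked to satisfy every clause. Finally, note that your phrasing "an achievable pattern $N$ carrying enough mass must extend to a full satisfying assignment" quietly assumes the conclusion of the consistency lemma; on the LP side the corresponding fact is that Equation~6 kills every block $N$ containing both a literal and its negation (the commitment constraints for such $N$ are jointly infeasible in the gadget), and without proving that, the converse direction does not go through.
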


\begin{proof}

We proceed by reduction from SAT. Let $\varphi$ be a formula with the set of clauses $C=\{c_1,\ldots,c_k\}$ over atomic propositions $Ap=\{a_1,\ldots,a_p\}$. We denote $\overline{Ap}=\{\overline{a_1},\ldots,\overline{a_p}\}$ the literals that are negations of the atomic propositions.

We define an MDP $G_\varphi=(S,A,\mathit{Act},\delta,\inits)$ as follows:
\begin{itemize}
\item $S=\{s_i\mid i\in[p]\}$,
\item $A=Ap\cup\overline{Ap}$,
\item $\act{s_i}=\{a_i,\overline{a_i}\}$ for $i\in[p]$,
\item $\delta(a_i)(s_{i+1})=1$ and $\delta( \overline{a_i})(s_{i+1})=1$ (actions are assigned Dirac distributions),
\item $\inits=s_1=s_{p+1}$.
\end{itemize}
\rstart
The constructed MDP is illustrated in Fig.~\ref{fig:mdp-phi}.
Intuitively, a run in $G_\varphi$ repetitively chooses a valuation. 

\begin{figure}[ht]
	\begin{tikzpicture}[scale=1.5]
	\node[state,initial,initial text=] (s)at(0,0){$s_1$};
	\node[state] (t)at(1,1){$s_2$};
	\node (u)at(2.5,0.5){$\cdots$};
	\node[state] (v)at(1,-1){$s_p$};
	\node (w)at(2.5,-0.5){$\cdots$};
	\node (x)at(2.5,0){$\vdots$};
	
	\path[->]
	(s) edge[bend left=15pt] node[above]{$a_1$} (t)
	(s) edge[bend right=15pt] node[below]{$\overline a_1$} (t)
	(t) edge[bend left=15pt] node[above]{$a_2$} (u)
	(t) edge[bend right=15pt] node[below]{$\overline a_2$} (u)
	(v) edge[bend left=15pt] node[below]{$a_p$} (s)
	(v) edge[bend right=15pt] node[above]{$\overline a_p$} (s)
	(w) edge[bend left=15pt] node[below]{$a_{p-1}$} (v)
	(w) edge[bend right=15pt] node[above]{$\overline a_{p-1}$} (v)
	;	
	\end{tikzpicture}
	\caption{MDP $G_\varphi$}
	\label{fig:mdp-phi}
\end{figure}
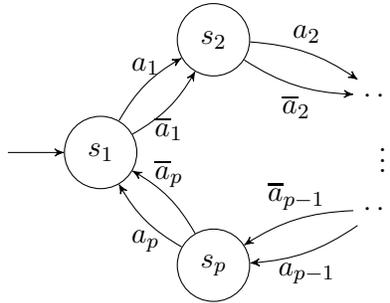
\rstop

We define the dimension of the reward function to be $n=k+2p$. We index the components of vectors with this dimension by  $C\cup Ap\cup\overline{Ap}$.
The reward function is defined for each $\ell\in A$ as follows: \medskip
\begin{itemize}
\item $\reward(\ell)(c_i)=
\begin{cases}
1 & \text{if } \ell\models c_i\\
0 & \text{if } \ell\not\models c_i
\end{cases}$\medskip
\item $\reward(\ell)(a_i)=\kd{a_i}$
\medskip
\item $\reward(\ell)(\overline{a_i})=\kd{\overline{a_i}}$
\end{itemize}
Intuitively, we get a positive reward for a clause when it is guaranteed to be satisfied by the choice of a literal.
The latter two items simply count the number of uses of a~literal; 
\rstart thus $\lrIf{\reward}_a=\vfreq_a$. \rstop

The realizability problem instance $R_\varphi$ is then defined by a conjunction of the following (\ref{eq:SAT}) and (\ref{eq:krish-sat}) constraints:
\begin{align*}
 &\Pr{\sigma}{}{\lrIf\reward_{\ell}\geq \rstart \frac1p\rstop}\geq \frac12 \text{\qquad for each }\ell\in Ap\cup\overline{Ap} \tag{conjunctive-S} \\
 &\Pr{\sigma}{}{\bigwedge_{c\in C}\lrIf\reward_{c}\geq \rstart \frac1p\rstop}\geq \frac12 \tag{joint-S}
\end{align*}
Intuitively, (conjunctive-S) ensures that almost all runs choose, for each atomic proposition, either the positive literal with frequency 1, or the negative literal with frequency 1; in other words, it ensures that the choice of valuation is consistent  within the run almost surely. Indeed, since the choice between $a_i$ and $\overline{ a_i}$ happens every $\rstart p\rstop$ steps, runs that mix both with positive frequency cannot exceed the value threshold $1/\rstart p\rstop$. Therefore, half of the runs must use only $a_i$, half must use only $\overline{ a_i}$. Consequently, almost all runs choose one of them consistently.

Further, (joint-S) on the top ensures that there is a (consistent) valuation that satisfies all the clauses. Moreover, we require that this valuation is generated with probability at least $1/2$. Actually, we only need probability strictly greater than $0$.

\medskip

We now prove that $\varphi$ is satisfiable if and only if the problem instance defined above on MDP $G_\varphi$ is realizable.

``Only if part'': Let $\nu\subseteq Ap\cup\overline{Ap}$ be a satisfying valuation for $\varphi$. We define $\sigma$ to have initial distribution on memory elements $m_1,m_2$ with probability $1/2$ each. With memory $m_1$ we always choose action from $\nu$ and with memory $m_2$ from the ``opposite valuation''  $\overline \nu$ (where $\overline{\overline{ a}}$ is identified with $a$).

Therefore, each literal has frequency $1/\rstart p\rstop$ either in the first or the second kind of runs. Further, the runs of the first kind (with memory $m_1$) satisfy all clauses.

``If part'': Given a witness strategy $\sigma$ for $R(\varphi)$, we construct a satisfying valuation. First, we focus on the property induced by the (conjunctive-S) constraint. We show that almost all runs uniquely induce a valuation 
$$\nu_\sigma:=\{\ell\in Ap\cup\overline{Ap}\mid \vfreq_{\ell}>0\}$$
which follows from the following lemma:

\begin{lemma}
For every witness strategy $\sigma$ satisfying the (conjunctive-S) constraint, and for each $a\in Ap$, we have 
$$\Pr{\sigma}{}{\vfreq_a=\rstart \frac1p\rstop \text{ and } \vfreq_{\overline{ a}}=0}+\Pr{\sigma}{}{\vfreq_a=0 \text{ and } \vfreq_{\overline{ a}}=\rstart \frac1p\rstop}=1\,.$$
\end{lemma}
\begin{proof}
Let $a\in Ap$ be an arbitrary atomic proposition.
To begin with, observe that due to the circular shape of MDP $G_\varphi$, we have 
\begin{equation}
\vfreq_a+\vfreq_{\overline{ a}}\rstart\leq\rstop 1/\rstart p\rstop \label{eq:hard-freq}
\end{equation} for every run.
\rstart
Indeed, $\vfreq_a+\vfreq_{\overline{ a}}= \liminf_{T\to\infty}\frac1T\sum_{t=1}^T\kda +
\liminf_{T\to\infty}\frac1T\sum_{t=1}^T\kd{\overline a}\leq
\liminf_{T\to\infty}\frac1T\sum_{t=1}^T(\kda+\kd{\overline a})=1/p$.
\rstop

Therefore, the two events $\vfreq_a\geq 1/\rstart p\rstop$ and $\vfreq_{\overline{ a}}\geq 1/p$ are disjoint.
Due to the (conjunctive-S) constraint, almost surely exactly one of the events occurs.
\rstart
Indeed, \[1\geq\Pr{\sigma}{}{\vfreq_a\geq \rstart \frac1p
	\cup\vfreq_{\overline{ a}}\geq\rstart \frac1p\rstop}= 
\Pr{\sigma}{}{\vfreq_a\geq \rstart \frac1p\rstop}+
\Pr{\sigma}{}{\vfreq_{\overline{ a}}\geq\rstart \frac1p\rstop}\geq \frac12+\frac12=1\]
with the equality by disjointness of the events and the last inequality by (conjunctive-S).
\rstop

Therefore, by (\ref{eq:hard-freq}), almost surely either $\vfreq_a=\rstart 1/p\rstop \text{ and } \vfreq_{\overline{ a}}=0$, or $\vfreq_a=~0$ and $\vfreq_{\overline{ a}}=\rstart 1/p\rstop$.\end{proof}

By the (joint-S) constraint, we have a set $\Omega_{\mathit{sat}}$, with non-zero measure, of runs satisfying $\lrIf{\reward}_c\geq 1$ for each $c\in C$. By the previous lemma, almost all runs  of $\Omega_{\mathit{sat}}$ induce unique valuations.
Since there are finitely many valuation, at least one of them is induced by a set of non-zero measure.
Let $\pat$ be one of the runs and $\nu$ the corresponding valuation.
We claim that $\nu$ is a satisfying valuation for $\varphi$.

Let $c\in C$ be any clause, we show $\nu\models c$.
Since $\lrIf{\reward}(\omega)_c\geq 1$, there is an action $\ell$ such that
\begin{itemize}
\item $\vfreq_\ell(\omega)>0$, and
\item $\reward(a)_\ell\geq 1$.
\end{itemize}
The former inequality implies that $\ell\in \nu$ and the latter that $\ell\models c$. 
Altogether, $\nu\models c$ for every $c\in C$, hence $\nu$ witnesses satisfiability of $\varphi$.
\end{proof}

\medskip


Theorem~\ref{thm:hard} contrasts Theorem~\ref{thm:timepoly}: while extension of (\ref{eq:krish-sat}) with (\ref{eq:EXP}) can be solved in polynomial time, extending (\ref{eq:krish-sat}) with (\ref{eq:SAT}) makes the problem NP-hard. 
Intuitively, adding (\ref{eq:SAT}) 
enforces us to consider the subsets of dimensions,
and explains the exponential dependency on the number of dimensions in Theorem~\ref{thm:main} (though our lower bound
does not work for (\ref{eq:SAT}) with (\ref{eq:EXP})).

The results are summarized in Table~\ref{tab:complexity} and contrasted to the previously known polynomial bounds in Table~\ref{tab:complexity4}.

\section{Strategy complexity}

First, we recall the structure of witness strategies generated from $L$ in Section~\ref{ssec:thm-proof3}.
In the first phase, a memoryless strategy is applied to reach MECs and switch to the recurrent strategies $\xi_N$.
This switch is performed as a stochastic update, remembering the following two pieces of information: (1) the binary decision to stay in the current MEC $C$ forever, and (2) the set $N\subseteq[n]$, such that almost all the produced runs belong to $\Omega_N$.
Each recurrent strategy $\xi_N$ is then an infinite-memory strategy, where the memory is simply a counter. The counter determines which memoryless strategy $\zeta^\varepsilon_N$ is played.

\subsection{Randomization and memory}

Similarly to the traditional setting with the expectation or the satisfaction semantics considered separately, the case with a single objective is simpler.

\begin{lemma}\label{lem:1mem}
Deterministic memoryless strategies are sufficient for witness strategies for \textbp{(mono-qual)}.
\end{lemma}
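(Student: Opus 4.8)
The statement is that deterministic memoryless strategies suffice for \textbp{(mono-qual)}, i.e.\ the case $n=1$ with $\psat=1$. Since $\psat=1$, the satisfaction constraint becomes an almost-sure constraint $\lrIf{\reward}\geq\satscalar$ on top of the expectation constraint $\Ex{\sigma}{}{\lrIf{\reward}}\geq\ex$. My plan is to invoke Theorem~\ref{thm:main} to reduce the realizability question to solvability of the linear program $L$, observe that in the \textbp{(mono-qual)} case $L$ collapses drastically (as already remarked in Section~\ref{joint}, only $N=[n]=\{1\}$ is needed, so the index $N$ can be dropped altogether), and then examine the witness strategy produced in Section~\ref{ssec:thm-proof3} from a solution, sharpening it to a deterministic memoryless one.

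First I would spell out the collapsed program: with the index $N$ removed, $L$ is exactly the old linear program of Fig.~\ref{fig:lp-old} plus Equation~6, the single constraint $\sum_{a\in C}x_a\cdot\reward(a)\geq\sum_{a\in C}x_a\cdot\satscalar$ for each $C\in\mec$. Crucially, since $\satscalar$ is a single scalar and $\reward$ is one-dimensional, Equation~6 together with recurrent flow (Equation~4) means every MEC $C$ that carries positive recurrent flow must, under the strategy we build, yield long-run average reward at least $\satscalar$. The key simplification over the general case is that there is only one ``class'' of recurrent behaviour: instead of needing strategies $\xi_N$ for different subsets $N$, within each MEC we only need a single recurrent strategy achieving average reward $\ge\satscalar$ on almost all runs.

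Next I would argue that in the mono-reward setting this recurrent strategy can be taken deterministic memoryless. Here one should appeal to the classical result (cited in the excerpt as \cite{FV97,BBE10,krish}) that for a single mean-payoff objective on a strongly connected MDP, a deterministic memoryless strategy attains the optimal expected long-run average, and that value is constant a.s.; if $\satscalar$ is at most the optimal MEC value, that memoryless strategy already satisfies the almost-sure constraint inside the MEC, and its expected reward is as large as any flow's reward in that MEC. The transient phase — reaching the MECs and ``switching'' — also admits a deterministic memoryless realization in the mono case: this is exactly the classical fact that for reachability and for expectation objectives without a probabilistic threshold one can combine the per-MEC memoryless optimal strategies with a memoryless reachability strategy picking, in each transient state, the action steering toward the MEC with the best value; no randomization or memory is needed because there is a single objective to optimize and the per-MEC contributions are linearly ordered by their value. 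I would phrase this as: take the solution of $L$; it identifies which MECs get positive flow; replace that flow by putting all the ``weight'' on a single best MEC (the one maximizing $\sum_{a\in C}x_a\reward(a)/\sum_{a\in C}x_a$ subject to being $\ge\satscalar$), which can only increase the expectation while still meeting Equation~6; then the witness is ``go deterministically to that MEC, then play the deterministic memoryless mean-payoff-optimal strategy there forever.''

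The main obstacle is making the last step rigorous without hand-waving: one must verify that concentrating the recurrent flow on one MEC is legitimate (i.e.\ that doing so still solves the decision problem, using that $\ex$ is a scalar lower bound and that the chosen MEC's value dominates the weighted average over all used MECs), and that from a transient state the ``best successor MEC value'' function is realizable by a deterministic memoryless choice — essentially a shortest-path/optimal-reachability argument over the MEC quotient. I expect the cleanest route is not to reprove these classical facts but to cite \cite{FV97,BBE10,krish} for the single-objective memoryless optimality both inside MECs and for the transient reachability, and to contribute only the short observation that the almost-sure threshold $\satscalar$ is automatically satisfied once we restrict to MECs whose optimal value is $\ge\satscalar$ and play optimally there. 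A one-line check that realizability of \textbp{(mono-qual)} is equivalent to ``some MEC reachable from $\inits$ has optimal mean-payoff value $\ge\max(\ex,\satscalar)$ and is reachable with probability one'' closes the argument, and deterministic memoryless strategies clearly suffice for that characterization.
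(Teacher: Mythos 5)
Your overall plan --- restrict attention to MECs whose optimal mean-payoff value is at least $\satscalar$ and then invoke the classical fact that deterministic memoryless strategies maximize a single expected mean-payoff objective --- is exactly the paper's proof (which prunes the MECs with value below the threshold and then cites \cite{Puterman} for memoryless optimality of the expectation on what remains). However, the concrete way you ``close the argument'' contains a genuine error. You claim that one may concentrate all recurrent flow on a single best MEC, and that realizability is equivalent to ``some MEC with optimal value $\ge\max(\ex,\satscalar)$ being reachable with probability one.'' Neither holds: the probabilities with which the various MECs can be reached are constrained by the transition function, so the best MEC need not be reachable almost surely, and the optimal expectation may genuinely be a convex combination over several MECs with different values. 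Concretely, take $\satscalar=0$, an initial state with two actions, one leading with probability $1/2$ each to absorbing states with self-loop rewards $10$ and $5$, the other leading surely to an absorbing state with self-loop reward $7$, and $\ex=7.5$. The instance is realizable (take the first action: every run has value at least $0$ and the expectation is $7.5$), yet no MEC of value at least $7.5$ is reachable with probability one, and every strategy that settles in a single MEC almost surely achieves expectation at most $7$. So the ``redirect all weight to the best MEC'' transformation does not preserve feasibility of the transient-flow constraint (Equation~1), and your closing equivalence is false.

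The fix is to delete that last step and argue as the paper does: after pruning the MECs of value below $\satscalar$ (together with everything that forces entry into them), any strategy that plays value-optimally inside whichever MEC it settles in meets the almost-sure constraint, and among these the expectation is maximized by a deterministic memoryless strategy by the classical single-objective theory; that strategy distributes the reachability probability over possibly several MECs exactly as the transition probabilities dictate, which is precisely what your single-MEC shortcut loses.
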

\begin{proof}
For each MEC, there is a value, which is the maximal long-run average reward. This is achievable for all runs in the MEC and using a memoryless strategy $\xi$. We prune the MDP to remove MECs with values below the threshold $\sat$. A~witness strategy can be chosen to maximize the single long-run expected average objective, and thus also to be deterministic and memoryless \cite{Puterman}. Intuitively, in this case each MEC is either stayed at almost surely, or left almost surely if the value of the outgoing action is higher.
\end{proof}

Further, both for the expectation and the satisfaction semantics, deterministic memoryless strategies are sufficient for \emph{quantitative} queries~\cite{FV97,BBE10} with single objective. In contrast, we show that both randomization and memory is necessary in our combined setting even for $\varepsilon$-witness strategies.

\begin{example}\label{ex:rm}
Randomization and memory is necessary for \textbp{(mono-quant)} with 
$\sat=1,\ex=3,\psat=0.55$ and the MDP and $\reward$ depicted in Fig.~\ref{fig:ex-randmem}.
We have to remain in MEC $\{s,a\}$ with probability $p\in[0.1,2/3]$, hence we need a randomized decision. 
Further, memoryless strategies would either never leave $\{s,a\}$ or would leave it eventually almost surely. 
Finally, the argument applies to $\varepsilon$-witness strategies, 
since 
the inter\-val for $p$ contains neither $0$ nor $1$ for sufficiently small $\varepsilon$.

\begin{figure}[ht]
\centering
\spacefu
\begin{tikzpicture}
\node[state,initial,initial text=](s)at(0,0) {s};
\node[coordinate](c)at(1,0) {};
\node[state](t)at(3,0.5) {t};
\node[state](u)at(3,-0.5) {u};
\path
(s) edge[loop above] node[above]{$a,\reward(a)=2$} ()
(c) edge[->] node[above]{$0.5$}(t)
(c) edge[->] node[below]{$0.5$}(u)
(s) edge[-]  node[above]{$b$} (c)
(t) edge[loop right] node[right]{$c,\reward(c)=0$} ()
(u) edge[loop right] node[right]{$d,\reward(d)=10$} ()
;
\end{tikzpicture}
\caption{An MDP with a single objective, where both randomization and
  memory is necessary}\label{fig:ex-randmem}\vspace{-24 pt}
\spacefl
\end{figure}
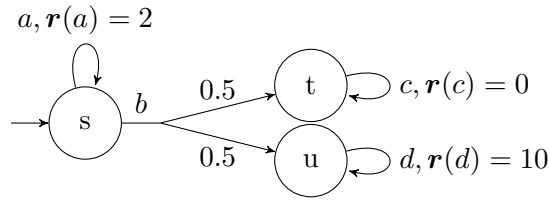
\QEE
\end{example}

In the rest of the section, we discuss bounds on the size of the memory and the degree of randomization.
Due to \cite[Section 5]{krish}, infinite memory is indeed necessary for witnessing (\ref{eq:krish-sat}) with $\psatscalar=1$, hence also for \textbp{(multi-qual)}.

\subsection{Memory bounds for deterministic update}

We prove that finite memory is sufficient in several cases, namely for all $\varepsilon$-witness strategies and for \textbp{(mono-quant)} witness strategies.
Moreover, these results also hold for deterministic-update strategies.
Indeed, as one of our technical contributions, we prove that stochastic update at the moment of switching is not necessary and deterministic update is sufficient, requiring only a finite blow up in the memory size.

\begin{lemma}\label{lem:detup}
Deterministic update is sufficient for witness strategies for \textbp{(multi-quant-conjuctive)} and \textbp{(multi-quant-joint)}. Moreover, finite memory is sufficient before switching to $\xi_N$'s.
\end{lemma}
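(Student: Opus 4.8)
The plan is to start from the witness strategy $\sigma$ produced by the construction of Section~\ref{ssec:thm-proof3} and to argue that its \emph{only} use of stochastic update — the switch from the transient behaviour to a recurrent strategy $\xi_N$ — can be traded for deterministic update at the cost of a finite memory blow-up in the transient part. Recall that $\sigma$ is memoryless in the transient phase, then performs a stochastic memory update to commit, while inside a MEC $C$, to one of the Markov strategies $\xi_N$ of Corollary~\ref{cor:flow-strat}, and that each $\xi_N$ is already a Markov (hence deterministic-update, counter-based) strategy. Since \textbf{(multi-quant-joint)} is, by Section~\ref{joint}, literally the special case $N\in\{\emptyset,[n]\}$ of the same linear program and the same construction, it suffices to treat \textbf{(multi-quant-conjunctive)}. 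Note that we do not need an \emph{equivalent} strategy: it is enough to exhibit \emph{some} deterministic-update witness, which gives us freedom to reshape the commitment step.

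The key step is to push the randomness of the commitment into the next-move function, exploiting the fact that $\sigma_u$ receives the \emph{played action} as an argument. In the transient phase, in a state $s$ of the target MEC $C$, the randomized next-move picks either an ordinary transient action (meaning ``keep going in the transient regime'') or a designated marker action; the now deterministic update $\sigma_u(a,s',m)$ then reads off from the chosen action $a$ whether — and to which $\xi_N$ — it must commit, and moves the memory into the initial memory element of $\xi_N$. Playing a marker action deviates from the prescription of $\xi_N$ on at most one step of each run, so by insensitivity of $\lrIf{\reward}$ to finitely many steps the long-run frequencies, hence the values in Equation 5 and the satisfaction probabilities in Equations 6--7, are unchanged, and the verification computation at the end of Section~\ref{ssec:thm-proof3} (via Equations 2, 3, 5, 6, 7 together with Corollary~\ref{cor:flow-strat}) goes through verbatim. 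Before the commitment the memory consists of the single transient element plus finitely many auxiliary elements (explained below), so it is finite; after the commitment it is the counter of $\xi_N$, and the initial memory distribution stays Dirac.

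The main obstacle is realising the \emph{exact} rational commitment probabilities $\sum_{a\in C}\bar x_{a,N}$ with deterministic update: an arbitrary action enabled in $s$ need not carry the required outcome probabilities, and $\act{s}$ may be too small to host one marker action per $N$. I would resolve this by splitting the decision into ``stay in $C$ forever'' versus ``which $N$'': the former is handled at the point where the transient part would exit (a single marker action suffices there), and the latter is then realised by a bounded randomized memoryless walk inside the strongly connected $C$, branching the memory on the actions chosen and using finitely many auxiliary memory elements to hit the prescribed rationals exactly; where a MEC carries no choice at all, its recurrent value is unique, only one $\xi_N$ is relevant and no randomisation is needed. This yields a deterministic-update witness strategy whose memory before switching to $\xi_N$ is finite, proving the lemma.
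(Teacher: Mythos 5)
Your core mechanism is the same one the paper uses: since the (deterministic) memory-update function $\sigma_u$ observes the action that was just played, you let the next-move function carry the randomness and branch the memory on the observed action. In the paper this is the ``toss'': inside a MEC $C$ that contains a state $\toss$ with two actions $a,b\in\act{\toss}$ belonging to $C$, the strategy navigates to $\toss$, randomizes between $a$ and $b$ with the required (exact, rational) probability, records the outcome by a deterministic update, and iterates finitely many tosses to encode the choice among the sets $N$; finitely many such detours do not affect $\lrIf{\reward}$. Up to phrasing, your ``which $N$'' walk and your insensitivity argument are exactly this, so that part is fine.

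The genuine gap is in the ``stay in $C$ forever vs.\ leave $C$'' decision for a MEC in which every state has only one action inside $C$ (your ``no choice at all'' case, where you correctly observe that the choice of $N$ is irrelevant). There the only randomization available to the next-move function is between the unique staying action and a leaving action $b$, and the deterministic update can only branch on which of the two was played. What you do not address is that playing $b$ actually exits $C$ only with probability $\mathit{rate}=\sum_{s'\notin C}\delta(b)(s')$, which can be strictly less than $1$: after a failed attempt the run is back inside $C$ and the memory has already been updated deterministically, so ``randomize once between the staying action and the exit action'' realizes leaving probability $\mathit{leave}_b\cdot\mathit{rate}$, not the required $\mathit{leave}_b$, unless you specify what the strategy does after each failed exit. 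This is precisely the case to which the paper devotes the second half of its proof: it plays $b$ deterministically for $m$ consecutive returns to $s$ and then once more with probability $p$, choosing $m$ and $p$ from $\mathit{rate}$ and $\mathit{leave}_b$ so that $(1-\mathit{rate})^m\bigl(1-p\cdot\mathit{rate}\bigr)=1-\mathit{leave}_b$; the resulting counter up to $m$ is also the source of the MDP-dependent memory bounds the paper records for deterministic update (cf.\ Example~\ref{ex:finmem-mono}). Your sketch is repairable --- for instance, flip once with probability $\mathit{leave}_b$ for the exit action and, on a failed exit, deterministically commit to replaying $b$ until $C$ is actually left, which succeeds almost surely, only delays the run by finitely many steps, and preserves the exit distribution --- but the assertion that ``a single marker action suffices there'' currently skips exactly the step that needs an argument, and any such repair must be made explicit and checked against the switching probabilities used in the verification at the end of Section~\ref{ssec:thm-proof3}.
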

\begin{proof}[Proof idea]
The stochastic decision during the switching in MEC $C$ can be done as a deterministic update after a ``toss'', a random choice between two actions in $C$ in one of the states of $C$.
Such a toss does not affect the long-run average reward as it is only performed finitely many times.

More interestingly, in MECs where no toss is possible, we can remember which states were visited how many times and choose the respective probability of leaving or staying in $C$.
\end{proof}
\begin{proof}
Let $\sigma$ be a strategy induced by $L$. 
We modify it into a strategy $\varrho$ with the same distribution of the long-run average rewards.
The only stochastic update that $\sigma$ performs is in a MEC, switching to $\xi_N$ with some probability.
We modify $\sigma$ into $\varrho$ in each MEC $C$ separately. 

\medskip
\noindent \textbp{Tossing-MEC case} \quad First, we assume that there are $\toss,a,b\in C$ with $a,b\in\act{\toss}$.
Whenever $\sigma$ should perform a step in $s\in C$ and possibly make a~sto\-chastic-update, say to $m_1$ with probability $p_1$ and $m_2$ with probability $p_2$, $\varrho$ performs a ``toss'' instead. A $(p_1,p_2)$-toss consists of reaching $\toss$ with probability $1$ (using a memoryless strategy), taking $a,b$ with probabilities $p_1,p_2$, respectively, and making a deterministic update based on the result, in order to remember the result of the toss. After the toss, $\varrho$ returns back to $s$ with probability $1$ (again using a memoryless strategy). Now as it already remembers the result of the $(p_1,p_2)$-toss, it changes the memory to $m_1$ or $m_2$ accordingly, by a deterministic update.

In general, since the stochastic-update probabilities depend on the action chosen and the state to be entered, we have to perform the toss for each combination before returning to $s$. Further, whenever there are more possible results for the memory update (e.g.\ various $N$), we can use binary encoding of the choices, say with $k$ bits, and repeat the toss with the appropriate probabilities $k$-times before returning to $s$.

This can be implemented using finite memory. Indeed, since there are finitely many states in a MEC and $\sigma$ is memory\-less, there are only finitely many combinations of tosses to make and remember till the next simulated update of $\sigma$.

\medskip
\noindent \textbp{Tossfree-MEC case}\quad It remains to handle the case where, for each state $s\in C$, there is only one action $a\in\act{s}\cap C$. Then all strategies staying in $C$ behave the same here, call this memoryless deterministic strategy $\xi$. 
Therefore, the only stochastic update that matters is to stay in $C$ or not. 
The MEC $C$ is left via each action $a$ with the probability
$$\leave_a:=\sum_{t=1}^\infty\Pr{\sigma}{}{S_t\in C \text{ and } A_t=a \text{ and } S_{t+1}\notin C}$$
and let $\{a\mid leave_a>0\}=\{a_1,\ldots, a_\ell\}$ be the leaving actions.
The strategy $\varrho$ upon entering $C$ performs the following. First, it leaves $C$ via $a_1$ with probability $\mathit{leave}_{a_1}$ (see below how), then via $a_2$ with probability $\frac{\leave_{a_2}}{1-\leave_{a_1}}$, and so on via $a_i$ with probability 
$$\frac{\leave_{a_i}}{1-\sum_{j=1}^{i-1}\leave_{a_j}}$$
subsequently for each $i\in[\ell]$.  
After the last attempt with $a_\ell$, if we are still in $C$, we update memory to stay in $C$ forever (playing $\xi$).

Leaving $C$ via $a$ with probability $\leave$ can be done as follows. 
Let $\mathit{rate}=\sum_{s\notin C}\delta(a)(s)$ be the probability to actually leave $C$ when taking $a$ once. Then to achieve the overall probability $\leave$ of leaving we can reach $s$ with $a\in\act{s}$ and play $a$ with probability $1$ and repeat this 
\rstart $m$ times for some $m\in\Nset$ (if $\leave=1$ then $m=\infty$)
and finally reach $s$ once more and play $a$ with probability 
$p\in[0,1]$ 
and an action staying in $C$ with the remaining probability.
We now define $m$ and $p$.
If $\mathit{rate}=1$ then $m=0$ and $p=\leave$.
Assume $\mathit{rate}<1$. Then we must ensure that the probability not to leave via $a$ be 
\begin{equation}
1-\leave=(1-\mathit{rate})^m\cdot \big(p(1-\mathit{rate})+(1-p)\big)
\label{eq:leave}
\end{equation}
Indeed, $(1-\mathit{rate})^m$ stands for failing to leave $m$-times, and the last time we either choose $a$ and fail again or not choose $a$ at all. This requirement is equivalent to
\[
m=\frac{\ln(1-\leave)-\ln(1-p\cdot\mathit{rate})}{\ln(1-\mathit{rate})}
\]
For $p\in[0,1]$ we have also $\frac{\ln(1-p\cdot\mathit{rate})}{\ln(1-\mathit{rate})}\in[0,1]$.
Therfore, in order to choose $m\in\Nset$, we can simply set $m:=\lfloor\frac{\ln(1-\leave)}{\ln(1-\mathit{rate})}\rfloor$, which also ensures that $p\in[0,1]$ for the respective  $p:=\frac 1{\mathit{rate}}(1-\frac{1-\leave}{(1-\mathit{rate})^m})$, obtained from (\ref{eq:leave}).
\rstop

In order to implement the strategy in MECs of this second type, for each action it is sufficient to have a counter up to
\rstart the respective $m$\rstop.
\end{proof}

\begin{remark}
Moreover, our proof also shows, that finite memory is sufficient before switching to $\xi_N$'s (as defined in Section~\ref{ssec:thm-proof3}) for deterministic-update witnessing (and $\varepsilon$-witnessing) strategies. 
Therefore, finite memory deterministic update is sufficient for $\varepsilon-$witness strategies, in particular also for (\ref{eq:krish-sat}), which improves the strategy complexity known from \cite{krish}.
Note that in general, conversion of a stochastic-update strategy to a deterministic-update strategy requires an infinite blow up in the memory~\cite{TCS:AlfaroHK07}.
\QEE
\end{remark}

As a consequence, we obtain several bounds on memory size valid even for deterministic-update strategies. Firstly, infinite memory is required only for witness strategies:

\begin{lemma}\label{lem:finmem}
Deterministic-update with finite memory is sufficient for $\varepsilon$-witness strategies for \textbp{(multi-quant-conjuctive)} and \textbp{(multi-quant-joint)}.
\end{lemma}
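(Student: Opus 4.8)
The plan is to obtain this as a short consequence of Remark~\ref{construction} together with the deterministic-update construction inside the proof of Lemma~\ref{lem:detup}. Recall that, by Remark~\ref{construction}, an $\varepsilon$-witness strategy may be taken to be the strategy $\sigma$ of Corollary~\ref{cl:m-reach-strat} in which each recurrent strategy $\xi_N$ is replaced by the memoryless strategy $\zeta^\varepsilon_N$ of Lemma~\ref{lem:connecting-BSCCs-}. Such a $\sigma$ has three phases: a memoryless phase reaching the MECs; a single \emph{stochastic} memory update performed upon settling in a MEC $C$, which fixes both $C$ and a set $N\subseteq[n]$; and then the memoryless strategy $\zeta^\varepsilon_N$ played forever inside $C$. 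The only obstacles to being finite-memory deterministic-update are thus the stochastic switch and the strategies played afterwards.

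First I would note that, in the $\varepsilon$-witness case, the ``after the switch'' behaviour is already unproblematic: each $\zeta^\varepsilon_N$ is memoryless by Lemma~\ref{lem:connecting-BSCCs-}, hence a fortiori a $1$-memory deterministic-update strategy (memorylessness makes the memory-update function and the initial memory distribution trivially Dirac, even though the next-move function may randomize). This is exactly where $\varepsilon$-witnessing is easier than exact witnessing, whose recurrent strategies $\xi_N$ of Corollary~\ref{cor:flow-strat} genuinely need an infinite counter. Next I would replace the stochastic switch by a deterministic one by reusing \emph{verbatim} the two MEC constructions in the proof of Lemma~\ref{lem:detup}: in a tossing-MEC, each required switching distribution is realised by a bounded number of $(p_1,p_2)$-tosses between two actions enabled in a single state of $C$, followed by a deterministic update recording the outcomes, using only finitely many memory elements (the reach phase is memoryless and $C$ is finite); in a tossfree-MEC the only relevant choice is via which action to leave $C$, or to stay forever, and each leaving probability $\leave_a$ is achieved by playing $a$ for a bounded number $m$ of steps (with the degenerate cases $m=0$ and ``play $a$ forever'' handled separately) and once more with the residual probability $p$, so again finite memory suffices. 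Since these ``tossing''/``leaving-counter'' gadgets are executed only finitely often along almost every run, they do not affect $\lrIf{\reward}$, so the $(\ref{eq:EXP})$ and $(\ref{eq:SAT})$ (resp.\ $(\ref{eq:krish-sat})$) guarantees of the original strategy are preserved with the same slack $\varepsilon$. Composing the three now-finite phases over the finitely many MECs gives a deterministic-update $\varepsilon$-witness strategy with finite memory; the case \textbp{(multi-quant-joint)} is identical, being the restriction of $L$ to $N\in\{\emptyset,[n]\}$.

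The main point requiring care — rather than any real difficulty — is the second step: verifying that the deterministic-update gadgets of Lemma~\ref{lem:detup} compose cleanly with $\zeta^\varepsilon_N$ in place of $\xi_N$, and that all the introduced counters (the number of tosses per simulated update, and the exponents $m$ bounding the leaving attempts) are finite and uniformly bounded within each MEC, so that the total memory stays finite. The correctness of the frequency/reachability decomposition underlying the three phases is already supplied by Theorem~\ref{thm:main}, Corollary~\ref{cor:flow-strat} and Corollary~\ref{cl:m-reach-strat}, so nothing beyond bookkeeping remains.
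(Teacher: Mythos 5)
Your proposal is correct and follows essentially the same route as the paper: the paper's proof of this lemma is exactly the observation that, on top of the deterministic-update, finite-memory-before-switching construction of Lemma~\ref{lem:detup}, one can play the memoryless strategies $\zeta^\varepsilon_N$ after the switch instead of the infinite-memory Markov strategies $\xi_N$ (i.e.\ the sequence of $\zeta^{1/2^i}_N$), so the total memory is finite. Your additional bookkeeping about the tossing/leaving gadgets and their finitely many executions is a faithful (if more verbose) elaboration of what the paper leaves implicit.
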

\begin{proof}
After switching, memoryless strategies $\zeta^\varepsilon_N$ can be played instead of the sequence of $\zeta^{1/2^i}_N$. 
\end{proof}

\begin{remark}\label{rem:strategycj}
The previous proof of sufficiency of deterministic-update finite memory for $\varepsilon$-witness strategies applies also to \textbp{(multi-quant-conjunctive-joint)}. Indeed, firstly,  Lemma~\ref{lem:detup} applies verbatim to \textbp{(multi-quant-conjunctive-joint)}.
Secondly, we switch to only finitely many recurrent strategies due to Remark~\ref{rem:cj-struct}.
\QEE
\end{remark}
Secondly, infinite memory is required only for multiple objectives:

\begin{lemma}\label{lem:finmem-mono}
Deterministic-update strategies with finite memory are sufficient witness strategies for \textbp{(mono-quant)}.
\end{lemma}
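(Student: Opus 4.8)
The plan is to start from a witness strategy of the shape produced in Section~\ref{ssec:thm-proof3} out of a solution $(\bar x,\bar y)$ of $L$ --- which for \textbp{(mono-quant)} only involves the indices $N\in\{\emptyset,\{1\}\}$ --- and to remove its two sources of infinite memory. The first source is the stochastic-update switch from the transient phase into the recurrent strategies, which Lemma~\ref{lem:detup} already turns into finite-memory deterministic-update behaviour. The second source is the recurrent strategies $\xi_N$: by Corollary~\ref{cor:flow-strat} they use an unbounded counter solely in order to \emph{average} the long-run reward over the possibly disconnected flow components of $\bar x_{\cdot,N}$ inside a MEC. With a single reward this averaging is unnecessary --- and this is precisely where \textbp{(mono-quant)} differs from the multi-dimensional case: over-shooting the (one-dimensional) expectation and probability thresholds is harmless, so in every MEC we may replace the carefully balanced recurrent behaviour by a reward-maximising memoryless strategy.

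Concretely, for each MEC $C$ with $q_C:=\sum_{a\in C}(\bar x_{a,\emptyset}+\bar x_{a,\{1\}})>0$ let $M_C$ be the maximal long-run average reward achievable inside the strongly connected sub-MDP $C$; as recalled in the proof of Lemma~\ref{lem:1mem}, $M_C$ is attained, for all runs, by a memoryless strategy $\zeta^{\max}_C$ (so $\lrIf{\reward}_1=M_C$ almost surely under $\zeta^{\max}_C$, regardless of the starting state of $C$). Two facts follow from $\bar x$ via Lemmas~\ref{lem:x-mec} and~\ref{lem:BSCC}: (i) for each $N$, the flow $\bar x_{\cdot,N}$ restricted to $C$ decomposes into end components, on each of which Lemma~\ref{lem:BSCC} realises the corresponding ergodic average $\le M_C$; hence $\sum_{a\in C}\bar x_{a,N}\cdot\reward(a)_1$, being $\sum_{a\in C}\bar x_{a,N}$ times a convex combination of these averages, is $\le M_C\cdot\sum_{a\in C}\bar x_{a,N}$; and (ii) if $\sum_{a\in C}\bar x_{a,\{1\}}>0$, then Equation~6 (with $N=\{1\}$, $i=1$) makes the corresponding convex combination $\ge\sat_1$, so $M_C\ge\sat_1$.

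Then I would build the finite-memory deterministic-update strategy $\varrho$ by keeping the transient/switching construction of Lemma~\ref{lem:detup} but collapsing the two $N$-branches: $\varrho$ merely decides, for each MEC $C$ it reaches, whether to stay in $C$ forever --- playing the memoryless $\zeta^{\max}_C$ --- with the accumulated probability $q_C$, or to leave; this single stay/leave decision per MEC is implemented with finite memory and deterministic update exactly as in the proof of Lemma~\ref{lem:detup} (the toss trick, resp.\ the leaving-counter trick). Checking that $\varrho$ is a witness is then bookkeeping along the lines of the end of Section~\ref{ssec:thm-proof3}: $\varrho$ reaches-and-stays in each $C$ with probability $q_C$, and $\sum_{C\in\mec}q_C=1$, so by (i), Lemma~\ref{lem:x-mec} and Equation~5, $\Ex{\varrho}{}{\lrIf{\reward}_1}=\sum_{C\in\mec}q_C\cdot M_C\ge\sum_{C\in\mec}\sum_{a\in C}(\bar x_{a,\emptyset}+\bar x_{a,\{1\}})\cdot\reward(a)_1=\sum_{a\in A}(\bar x_{a,\emptyset}+\bar x_{a,\{1\}})\cdot\reward(a)_1\ge\ex_1$; and by (ii), every run that stays in a MEC $C$ with $\sum_{a\in C}\bar x_{a,\{1\}}>0$ has $\lrIf{\reward}_1=M_C\ge\sat_1$, so $\Pr{\varrho}{}{\lrIf{\reward}_1\ge\sat_1}\ge\sum_{C:\sum_{a\in C}\bar x_{a,\{1\}}>0}q_C\ge\sum_{C\in\mec}\sum_{a\in C}\bar x_{a,\{1\}}=\sum_{a\in A}\bar x_{a,\{1\}}\ge\psat_1$ by Equation~7.

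The step I expect to need the most care is the ``single reward'' input (i) and (ii): one must make sure that a memoryless strategy genuinely attains $M_C$ on \emph{every} run (a reward-optimal memoryless policy of a communicating MDP may be multichain, so one forces a unique recurrent class by routing all of $C$ deterministically into a reward-optimal recurrent class), and that the weighted averages in Equations~5 and~6 are honest convex combinations of ergodic averages dominated by $M_C$, which is exactly what the end-component structure of a flow (Lemma~\ref{lem:x-mec}) and its memoryless realisation (Lemma~\ref{lem:BSCC}) provide. Everything else is re-running the verification of Section~\ref{ssec:thm-proof3} with $\xi_N$ replaced by the appropriate $\zeta^{\max}_C$, noting that over-achieving the two lower-bound constraints is harmless.
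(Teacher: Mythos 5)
Your proposal is correct and follows essentially the same route as the paper: keep the finite-memory deterministic-update transient/switching part from Lemma~\ref{lem:detup} and, exploiting that with a single reward over-achieving both thresholds is harmless, replace the infinite-memory averaging recurrent strategies by a single memoryless reward-maximising strategy in each MEC. The paper's (very terse) proof picks the best flow component inside the MEC rather than the MEC-wide optimum $M_C$ and keeps the $N$-indexed switching, but this difference is immaterial; your write-up additionally supplies the domination argument via Lemmas~\ref{lem:x-mec} and~\ref{lem:BSCC} and the verification of Equations~5--7 that the paper leaves implicit.
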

\begin{proof}
After switching in a MEC $C$, we can play the following memoryless strategy. In $C$, there can be several components of the flow. We pick any with the largest long-run average reward.
\end{proof}

Further, the construction in the toss-free case gives us a hint for the respective lower bound on memory, even for the single-objective case.

\begin{example}\label{ex:finmem-mono}
For deterministic-update $\varepsilon$-witness strategies for \textbp{(mono-quant)} problem, memory with size dependent on the transition probabilities is necessary.
Indeed, consider the same realizability problem as in Example~\ref{ex:rm}, but with a slightly modified MDP parametrized by $\lambda$, depicted in Fig.~\ref{fig:ex-largemem}. Again, we have to remain in MEC $\{s,a\}$ with probability $p\in[0.1,2/3]$. For $\varepsilon$-witness strategies the interval is slightly wider;
let $\ell>0$ denote the minimal probability with which any ($\varepsilon$-)witness strategy has to leave the MEC and all ($\varepsilon$-)witness strategies have to stay in the MEC with positive probability. We show that at least $\lceil\frac\ell\lambda\rceil$-memory is necessary. 
Observe that this setting also applies to the (\ref{eq:EXP}) setting of \cite{krish}, e.g.\ $\ex=(0.5,0.5)$ and the MDP of Fig.~\ref{fig:forkrish}. Therefore, we provide a lower bound also for this simpler case 
(no MDP-dependent lower bound is provided in \cite{krish}).

\begin{figure}[ht]
\centering
\spacefl\spacefl
\begin{tikzpicture}
\node[state,initial,initial text=](s)at(0,0) {s};
\node[coordinate](c)at(1,0) {};
\node[state](t)at(3,0.5) {t};
\node[state](u)at(3,-0.5) {u};
\path
(s) edge[loop above] node[above]{$a,\reward(a)=2$} ()
(c) edge[->] node[above]{$\frac\lambda2$}(t)
(c) edge[->] node[below]{$\frac\lambda2$}(u)
(c) edge[->,bend left=90pt,looseness=2] node[below]{$1-\lambda$}(s.south)
(s) edge[-]  node[above]{$b$} (c)
(t) edge[loop right] node[right]{$c,\reward(c)=0$} ()
(u) edge[loop right] node[right]{$d,\reward(d)=10$} ()
;
\end{tikzpicture}
\caption{An MDP family with a single objective, where memory with size dependent on transition probabilities is necessary for deterministic-update strategies}\label{fig:ex-largemem}
\end{figure}
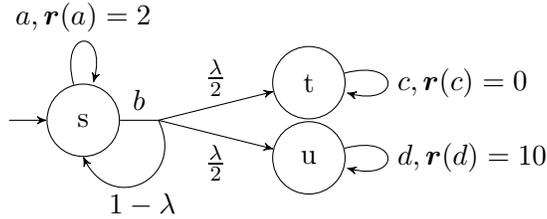

\begin{figure}[ht]
\centering
\spacefu
\begin{tikzpicture}
\node[state,initial,initial text=](s)at(0,0) {s};
\node[coordinate](c)at(1,0) {};
\node[state](t)at(3,0) {t};
\path
(s) edge[loop above] node[above]{$a,\reward(a)=(1,0)$} ()
(c) edge[->] node[above]{$\lambda$}(t)
(c) edge[->,bend left=90pt,looseness=2] node[below]{$1-\lambda$}(s.south)
(s) edge[-]  node[above]{$b$} (c)
(t) edge[loop right] node[right]{$c,\reward(c)=(0,1)$} ()
;
\end{tikzpicture}
\caption{An MDP family, where 
memory with size dependent on transition probabilities is necessary for deterministic-update strategies
even for (\ref{eq:EXP}) studied in \cite{krish}}\label{fig:forkrish}
\spacefl
\spacefl
\end{figure}
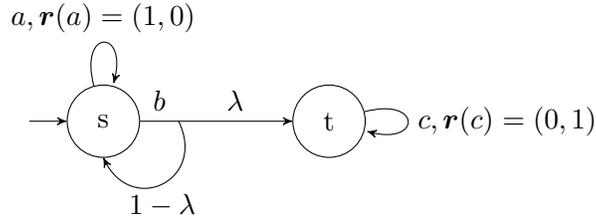

For a contradiction, assume there are less than $\lceil\frac\ell\lambda\rceil$ memory elements. Then, by the pigeonhole principle, in the first $\lceil\frac\ell\lambda -1\rceil$ visits of $s$, some memory element $m$ appears twice. Note that due to the deterministic updating, each run generates the same play, thus the same sequence of memory elements. Let $p$ be the probability to eventually leave $s$ provided we are in $s$ with memory $m$.

If $p=0$ then the probability to leave $s$ at the start is less than $\lceil\frac\ell\lambda -2\rceil\cdot\lambda<\ell$, a contradiction. Indeed, we have at most $\lceil\frac\ell\lambda -2\rceil$ tries to leave $s$ before obtaining memory $m$ and with every try we leave $s$ with probability at most $\lambda$; we conclude by the union bound.

Let $p>0$. Due to the deterministic updates, all runs staying in $s$ use memory $m$ infinitely often. Since $p>0$, there is a finite number of steps such that (1) during these steps the overall probability to leave $s$ is at least $p/2$ and (2) we are using $m$ again. Consequently, the probability of the runs staying in $s$ is $0$, a contradiction.
\QEE
\end{example}

\subsection{Memory bounds for stochastic update}

Although we have shown that stochastic update is not necessary, it may be helpful when memory is small. 

\begin{lemma}\label{lem:2mem}
Stochastic-update 2-memory strategies are sufficient for witness strategies for \textbp{(mono-quant)}.
\end{lemma}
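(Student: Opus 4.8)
The plan is to start from the witness strategy $\sigma$ produced by Corollary~\ref{cl:m-reach-strat} from a solution $\bar x_{a,N},\bar y_a,\bar y_{s,N}$ of the linear program $L$ (in its \textbp{(mono-quant)} form, so $N$ ranges over $\{\emptyset,\{1\}\}$), and to compress its memory to two elements. Recall $\sigma$ first plays a \emph{memoryless} ``reaching'' strategy that, upon visiting a state $s$ of a MEC $C$, performs a stochastic update committing to stay in $C$ forever, and thereafter plays a recurrent strategy. Two simplifications, specific to the single-reward case, bring the memory down to two. First, as in the proof of Lemma~\ref{lem:finmem-mono}, inside a MEC $C$ the infinite-memory strategy $\xi_N$ is not needed: it suffices to navigate memorylessly (with probability $1$, as in Lemma~\ref{lem:1mem}) into a component of $C$ whose long-run average reward is maximal, call this value $v_C$; by Lemma~\ref{lem:BSCC}/the ergodic theorem the reward of almost every run committed in $C$ then equals the constant $v_C$. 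Second, and this is the point for $n=1$, the set $N$ need not be remembered at all: once committed in $C$ we play this same memoryless strategy $\xi^*_C$ regardless of $N$, and since a committed run never leaves $C$, the pair (memory ``committed'', current state) already determines which $\xi^*_C$ to play. We will also use the elementary fact that, for every $C\in\mec$ and every nonnegative solution $(z_a)_{a\in C}$ of Equation~4 within $C$ with $\sum_{a\in C}z_a>0$, one has $v_C \ge \big(\sum_{a\in C}z_a\,\reward(a)\big)\big/\big(\sum_{a\in C}z_a\big)$, since the normalized circulation $z$ is the stationary behaviour of a strategy staying in $C$ and $v_C$ is the maximum such long-run average.

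Concretely, define a stochastic-update strategy $\varrho$ with $\mem=\{m_{\mathrm{reach}},m_{\mathrm{rec}}\}$ and initial distribution $\kd{m_{\mathrm{reach}}}$. In memory $m_{\mathrm{reach}}$, $\varrho$ plays the memoryless reaching part of $\sigma$ (memoryless by Lemma~\ref{lem:transient}), and whenever $\sigma$ would commit while at a state $s$, $\varrho$ performs the single stochastic update $m_{\mathrm{reach}}\mapsto m_{\mathrm{rec}}$ with total probability $\bar y_{s,\emptyset}+\bar y_{s,\{1\}}$; it never returns to $m_{\mathrm{reach}}$. In memory $m_{\mathrm{rec}}$, at a state $s$ of a MEC $C$ the strategy $\varrho$ plays $\xi^*_C(s)$ (and plays arbitrarily at states outside $\bigcup\mec$, which are reached with memory $m_{\mathrm{rec}}$ only with probability $0$). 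This is a stochastic-update $2$-memory strategy whose only stochastic update is the step $m_{\mathrm{reach}}\mapsto m_{\mathrm{rec}}$.

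It remains to check $\varrho$ is a witness strategy. By the construction of the reaching phase and Equations~2--3, $\Pr{\varrho}{}{\Omega_C}=\sum_{s\in C}(\bar y_{s,\emptyset}+\bar y_{s,\{1\}})=\sum_{a\in C}(\bar x_{a,\emptyset}+\bar x_{a,\{1\}})=:\sum_{a\in C}\bar x_a$ for every $C\in\mec$, and $\sum_{C\in\mec}\Pr{\varrho}{}{\Omega_C}=1$. On $\Omega_C$ we have $\lrIf{\reward}=v_C$ almost surely. Since the combined flow $(\bar x_a)$ solves Equation~4 in each $C$, the fact above gives $v_C\ge\big(\sum_{a\in C}\bar x_a\,\reward(a)\big)/\big(\sum_{a\in C}\bar x_a\big)$, hence
\[
\Ex{\varrho}{}{\lrIf{\reward}}=\sum_{C\in\mec}\Big(\sum_{a\in C}\bar x_a\Big)v_C\ \ge\ \sum_{C\in\mec}\sum_{a\in C}\bar x_a\,\reward(a)=\sum_{a\in A}\bar x_a\,\reward(a)\ \ge\ \ex
\]
by Equation~5 (using $\bar x_{a,N}=0$ for $a\notin\bigcup\mec$, Lemma~\ref{lem:x-mec}). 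For satisfaction, whenever $\sum_{a\in C}\bar x_{a,\{1\}}>0$, Equation~6 forces $\sum_{a\in C}\bar x_{a,\{1\}}\,\reward(a)\ge\sum_{a\in C}\bar x_{a,\{1\}}\,\sat$, so by the fact above $v_C\ge\sat$; therefore
\[
\Pr{\varrho}{}{\lrIf{\reward}\ge\sat}\ \ge\ \sum_{\substack{C\in\mec:\\ \sum_{a\in C}\bar x_{a,\{1\}}>0}}\Pr{\varrho}{}{\Omega_C}\ \ge\ \sum_{C\in\mec}\sum_{a\in C}\bar x_{a,\{1\}}=\sum_{a\in A}\bar x_{a,\{1\}}\ \ge\ \psat
\]
by Equation~7. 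Thus $\varrho$ is a witness strategy.

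The main obstacle is the second reduction: justifying that in the mono-reward setting one may discard the index $N$, i.e.\ that a single memoryless per-MEC strategy $\xi^*_C$ simultaneously certifies the expectation bound (because $v_C$ dominates the average reward of the combined recurrent flow in $C$) \emph{and} the satisfaction bound (because $v_C\ge\sat$ for every MEC that carries any ``$\{1\}$''-flow, by Equation~6), together with checking that ``reach a reward-maximal component of $C$ and loop it'' is realizable memorylessly and yields an almost surely constant reward $v_C$. The memorylessness of the reaching phase and the bookkeeping of the commitment probabilities are inherited verbatim from Lemma~\ref{lem:transient} and Corollary~\ref{cl:m-reach-strat}, and the constancy of $v_C$ on $\Omega_C$ from Lemma~\ref{lem:BSCC}.
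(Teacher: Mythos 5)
Your proposal is correct and follows essentially the same route as the paper: a memoryless reaching phase (one memory element), a single stochastic-update switch, and then one memoryless recurrent strategy per MEC (the second memory element), exploiting that for $n=1$ the index $N$ need not be remembered because a single value-maximal memoryless strategy in each MEC simultaneously certifies both the expectation bound (via Equations 4--5) and the satisfaction bound (via Equations 6--7). The only cosmetic difference is that the paper steers to the flow component of $C$ with the largest long-run average reward, whereas you steer to a component achieving the MEC-optimal value $v_C$; both choices work, and your write-up simply supplies the verification details that the paper leaves implicit.
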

\begin{proof}
The strategy $\sigma$ of Section~\ref{ssec:thm-proof3}, which reaches the MECs and stays in them with given probability, is memoryless up to the point of switch by Corollary~\ref{cl:m-reach-strat}.
Further, we can achieve the optimal value in each MEC using a memoryless strategy as in Lemma~\ref{lem:finmem-mono}.
\end{proof}

\begin{theorem}\label{epsilonstrategy}
Upper bounds on memory size for stochastic-update $\varepsilon$-witness strategies are as follows:
\begin{itemize}
\item \textbp{(multi-qual)} $2$ memory elements,
\item \textbp{(multi-quant-joint)} $3$ memory elements,
\item \textbp{(multi-quant-conjunctive)} $2^n+1$ memory elements,
\item \textbp{(multi-quant-conjunctive-joint)} $2^{n+1}+1$ memory elements.
\end{itemize}
\end{theorem}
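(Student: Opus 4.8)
The plan is to obtain all four bounds directly from the structure of the $\varepsilon$-witness strategies constructed in Section~\ref{ssec:thm-proof3}, together with the specialisations of the linear program in Sections~\ref{joint} and~\ref{sec:conjunctivejoint}. Recall that such a strategy runs in two phases: a \emph{pre-switch} phase which, by Corollary~\ref{cl:m-reach-strat}, is memoryless and merely reaches the MECs; and a \emph{recurrent} phase, entered by a single stochastic-update step that, inside a MEC~$C$, commits the play to one recurrent strategy. First I would observe that for $\varepsilon$-witnessing each recurrent strategy may be taken memoryless, namely $\zeta^\varepsilon_N$ in place of $\xi_N$, by Lemma~\ref{lem:connecting-BSCCs-} and Remark~\ref{construction}; and that, because a run committing in~$C$ stays in~$C$ and distinct MECs are state-disjoint, the per-MEC memoryless strategies sharing a common index can be merged into one memoryless strategy on all of~$G$. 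Consequently the memory can be realised with one element $m_\bot$ for the pre-switch phase (left only by the switch), plus one element $m_\iota$ for every index $\iota$ of a recurrent strategy, with deterministic updates inside each phase and a stochastic update only at the switch; since the switch occurs only finitely often it does not affect the long-run averages, so the $\varepsilon$-witness property carries over verbatim from Section~\ref{ssec:thm-proof3}.

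It then remains to count, for each variant, the indices of recurrent strategies that can occur --- equivalently, which $x$- and $y$-variables of the respective program may be non-zero. For \textbp{(multi-qual)}, $\psat=\vec1$ forces almost all runs to be $[n]$-good, so only the index $N=[n]$ survives (Section~\ref{joint}), giving $1+1=2$. For \textbp{(multi-quant-joint)}, Section~\ref{joint} shows that the only non-zero variables are those indexed by $N\in\{\emptyset,[n]\}$, giving two recurrent strategies and $1+2=3$. For \textbp{(multi-quant-conjunctive)}, recurrent strategies are indexed by arbitrary $N\subseteq[n]$, hence $2^n$ of them, giving $2^n+1$. For \textbp{(multi-quant-conjunctive-joint)}, by Remark~\ref{rem:cj-struct} recurrent strategies are indexed by pairs $(N,b)$ with $N\subseteq[n]$ and $b\in\{\mathit{yes},\mathit{no}\}$, hence $2^{n+1}$ of them, giving $2^{n+1}+1$.

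I expect the only delicate point to be this index count, since it is what produces the split $2,\,3,\,2^n+1,\,2^{n+1}+1$; it relies entirely on the specialisations of~$L$ already worked out in Sections~\ref{joint} and~\ref{sec:conjunctivejoint}, on the state-disjointness of MECs, and on being careful to charge the pre-switch element $m_\bot$ exactly once. There is no real analytic difficulty here: the theorem is essentially a repackaging of the construction of Section~\ref{ssec:thm-proof3}, and the point of allowing stochastic update is precisely that the commitment is compressed into the single switch step so that no additional memory is spent implementing it (unlike the deterministic-update bounds of Lemmas~\ref{lem:detup}--\ref{lem:finmem-mono}, which are MDP-dependent).
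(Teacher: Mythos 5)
Your proposal is correct and follows essentially the same route as the paper: a memoryless pre-switch phase (Corollary~\ref{cl:m-reach-strat}) costing one memory element, a single stochastic-update switch to memoryless recurrent strategies $\zeta^\varepsilon_N$ (Lemma~\ref{lem:connecting-BSCCs-}, Remark~\ref{construction}), and a count of the indices $N$ (resp.\ pairs $(N,b)$) for which the $y_{s,N}$-variables can be non-zero in each specialisation of $L$. Your explicit remark that the per-MEC copies of $\zeta^\varepsilon_N$ with the same index can share one memory element because MECs are state-disjoint is a point the paper leaves implicit, but it is the same argument.
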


\begin{proof}


The structure of $\varepsilon$-witness strategies is described in Remark~\ref{construction}. Let us recall from Corollary~\ref{cl:m-reach-strat} that strategy $\sigma$ is memoryless before the switch. 
For \textbp{(multi-qual)}, \textbp{(multi-quant-joint)} and \textbp{(multi-quant-conjunctive)}, we perform the stochastic-update switch to different memory elements corresponding to the different strategies $\zeta^\varepsilon_N$. From Lemma \ref{lem:connecting-BSCCs-} we have that every such strategy $\zeta^\varepsilon_N$ is also memoryless. From Lemma \ref{lem:transient} we have that we switch only to such $\zeta^\varepsilon_N$ for $N\subseteq [n]$, which correspond to possible nonzero variables $y_{s,N}$.
Therefore, the number of memory elements needed is the number of possible nonzero variables $y_{s,N}$ for $N\subseteq [n]$ and additionally one element for the strategy $\sigma$ before the switch.

Altogether, we get the following upper bounds on memory size of $\varepsilon$-witness strategies. 
For \textbp{(multi-quant-conjunctive)}, $2^n+1$ memory elements are sufficient, since all of the $y_{s,N}$ for $N\subseteq [n]$ can be positive.
For \textbp{(multi-quant-joint)}, $3$ memory elements are sufficient, because we use only $y_{s,[n]}$ and $y_{s,\emptyset}$ as discussed in \ref{joint}.
 Finally for \textbp{(multi-qual)}, $2$ memory elements are sufficient, because we use only $y_s$ as in \ref{multiqual}.

Due to Remark~\ref{rem:cj-struct}, the bound on the number of recurrent strategies for \textbp{(multi-quant-conjunctive-joint)} is twice as large as for \textbp{(multi-quant-conjunctive)}, i.e., $2^{n+1}$. The upper bound on the size of memory for $\varepsilon$-witness strategies for \textbp{(multi-quant-conjunctive-joint)} is thus $1+2^{n+1}$, compared to $1+2^n$ for \textbp{(multi-quant-conjunctive)}.  
\end{proof}

\begin{example}\label{ex:3mem} 
For \textbf{(multi-quant-joint)}$, \varepsilon$-witness strategies may require 
memory with at least $3$ elements. 
Consider an MDP with two states $s$ and $t$ with transitions and rewards as depicted in Fig.~\ref{fig:ex-3mem}.
Further, let $\sat=\vec(1,0,0)$, $\psat=\frac{1}{2}$ and ${\ex=\vec(0,1,1)}$.

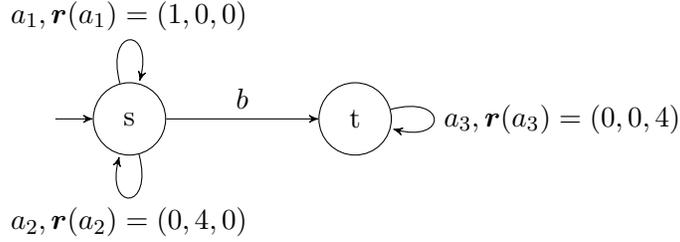
\begin{figure}[ht]
\centering
\spacefu
\begin{tikzpicture}
\node[state,initial,initial text=](s)at(0,0) {s};
\node[state](t)at(3,0) {t};
\path
(s) edge[loop above] node[above]{$a_1,\reward(a_1)=(1,0,0)$} ()
(s) edge[->] node[above]{$b$}(t)
(s) edge[loop below] node[below]{$a_2,\reward(a_2)=(0,4,0)$} ()
(t) edge[loop right] node[right]{$a_3,\reward(a_3)=(0,0,4)$} ()
;
\end{tikzpicture}
\caption{An MDP where $3$-memory is necessary for \textbf{(multi-quant-joint)}}\label{fig:ex-3mem}
\spacefl
\end{figure}

Suppose $2$ memory elements are sufficient. In state $s$ for each memory element we can either stay in $s$ or go with some positive probability to state $t$. Therefore we have three cases on the behaviour in $s$ regarding the transition to $t$: 
\begin{enumerate}
\item for each memory element we have positive probability $p_1$ and $p_2$ respectively, to go to state $t$,
\item for both memory elements we have zero probability to go to $t$ and
\item for one memory element, say memory element 1, we have zero probability and for the other one, say memory element 2, we have positive probability $p$ to go to $t$.

\end{enumerate}

In the first case, we go to $t$ eventually almost surely. Indeed, in each step we enter $t$ with probability at least $\min(p_1,p_2)$ and cannot return back. Therefore, we stay in $t$ forever and thus we cannot satisfy the satisfaction constraint.

In the second case, we never enter state $t$. Hence, we cannot satisfy the expectation constraint, because $\reward(a_1)_3=\reward(a_2)_3=0.$ 

In the third case, we firstly assume that we switch from memory $1$ to $2$ with some positive probability $p_1$. Then in each step we have at least probability $p_1\cdot p$ to enter $t$. Therefore, we end up in state $t$ almost surely, not satisfying constraints, as shown above.
Secondly, suppose we cannot switch from memory $1$ to $2$. Then we almost surely end up in state $s$ with memory $1$ or in state $t$. In state $s$ with memory $1$ we can either play action $a_1$ with probability $1$ or with smaller potentially zero probability $q$. In the former case, $\lrLim{\reward_2}=0$, thus violating the expectation constraint. In the latter case, for almost every run $\lrLim{\reward_1}\leq 1-q$, contradicting the satisfaction constraint.

Note that a witnessing strategy exists, which uses only $3$ memory elements. On half of the runs, we play only action $a_1$ to satisfy the satisfaction constraint. So we define $\sigma_n(s,1)(a_1)=1$. To satisfy the expectation constraint for $\reward_2$ we define $\sigma_n(s,2)(a_2)=1$. With the  last memory element we want to satisfy the expectation constraint for $\reward_3$ and thus we define $\sigma_n(s,3)(b)=1$ and $\sigma_n(t,3)(a_3)=1$.
We define the initial distribution by $\alpha(1)=\frac{1}{2}$, $\alpha(2)=\frac{1}{4}$ and $\alpha(3)=\frac{1}{4}$ and therefore the memory update function not to change memory. Consequently, the achieved expectation is $(\frac{1}{2}\cdot 1, \frac{1}{4}\cdot 4,\frac{1}{4}\cdot 4)\geq \ex$.
\QEE
\end{example}

However, even with stochastic update, the size of the finite memory cannot be bounded by a constant for \textbp{(multi-quant-conjunctive)}.

\begin{example}\label{ex:mem-n}
Even $\varepsilon$-witness strategy for \textbp{(multi-quant-conjunctive)} may require 
memory with at least $n$ memory elements. 
Consider an MDP with a single state $s$ and self-loop $a_i$ with reward $\reward_i(a_j)$ equal to $1$ for $i=j$ and $0$ otherwise, for each $i\in[n]$. 
Fig.~\ref{fig:ex-nmem} illustrates the case with $n=3$.
Further, let $\sat=\vec1$ and $\psat=1/n \cdot\vec 1$.

\begin{figure}[ht]
\centering
\spacefu
\begin{tikzpicture}
\node[state,initial,initial text=](s)at(0,0) {s};
\path
(s) edge[loop above] node[above]{$a_1,\reward(a_1)=(1,0,0)$} ()
(s) edge[loop right] node[right]{$a_2,\reward(a_2)=(0,1,0)$} ()
(s) edge[loop below] node[below]{$a_3,\reward(a_3)=(0,0,1)$} ()
;
\end{tikzpicture}
\caption{An MDP where $n$-memory is necessary, depicted for $n=3$}\label{fig:ex-nmem}
\spacefl
\end{figure}
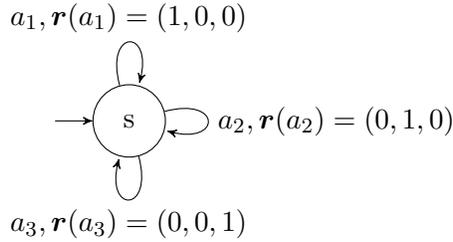
The only way to $\varepsilon$-satisfy the constraints is that for each $i$, $1/n$ runs take only $a_i$, but for a negligible portion of time. 
Since these constraints are mutually incompatible for a single run, $n$ different decisions have to be repetitively taken at $s$, showing the memory requirement.
\QEE
\end{example}


We summarize the upper and lower bounds for witness and $\varepsilon$-witness strategies in Table~\ref{tab:complexity2} and Table~\ref{tab:complexity3}, respectively.

\spacefl

\section{Pareto curve approximation and complexity summary}\label{sec:complexity-pareto}

For a single objective, no Pareto curve is required and we can compute the optimal value of expectation in polynomial time by the linear program $L$ with the objective function $\max\sum_{a\in A}(x_{a,\emptyset}+x_{a,\{1\}})\cdot\reward(a)$. For multiple objectives we obtain the following:

\begin{theorem}\label{thm:pareto}
For $\varepsilon>0$, an $\varepsilon$-approximation of the Pareto curve for \textbp{(multi-quant-conjunctive-joint)} can be constructed in time polynomial in $|G|$ and $\frac1\varepsilon$ and exponential in $n$.
\end{theorem}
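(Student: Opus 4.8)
The plan is to present the set of achievable parameter tuples as a \emph{downward-closed} region, test membership in it pointwise by the linear program of Section~\ref{sec:conjunctivejoint}, and then approximate its upper boundary by a grid. Collect all quantities to be Pareto-optimized for \textbp{(multi-quant-conjunctive-joint)} --- the expectation thresholds $\ex\in\Qset^n$, the value thresholds $\sat,\widetilde\sat\in\Qset^n$, and the probability thresholds $\psat\in[0,1]^n$, $\widetilde\psatscalar\in[0,1]$ --- into one vector of $d=O(n)$ coordinates. Each coordinate ranges over an interval of polynomially many bits: probability coordinates lie in $[0,1]$, and every expectation/value coordinate may be assumed to lie in $[r_{\min},r_{\max}]$ with $r_{\min}=\min_{a,i}\reward(a)_i$, $r_{\max}=\max_{a,i}\reward(a)_i$ (outside this box realizability is decided trivially). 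After an affine rescaling, assume the achievable region is $R\subseteq[0,1]^d$; the rescaling changes the accuracy only by a fixed polynomial factor.

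First I would observe that $R$ is downward-closed. This is immediate at the level of strategies: if a strategy $\sigma$ witnesses a tuple, the same $\sigma$ witnesses every coordinatewise-smaller tuple, because lowering a value threshold $\sat_i$ (or $\widetilde\sat_i$) only enlarges the event $\{\lrIf{\reward}_i\geq\sat_i\}$, while lowering a probability threshold or an expectation threshold is a strictly weaker requirement. Hence $R$ is a downward-closed subset of $[0,1]^d$, and by Corollary~\ref{cor:conjjoint} membership ``$p\in R$'' is decided, and a witness strategy for $p\in R$ effectively constructed, in time polynomial in $|G|$ and exponential in $n$ --- simply plug the full tuple $p$ into the (now fully linear) program of Section~\ref{sec:conjunctivejoint} and test feasibility.

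Next, the approximation. Let $G_\varepsilon=\{0,\varepsilon,2\varepsilon,\ldots\}^d\cap[0,1]^d$, which has $(1/\varepsilon+1)^d$ points. For each $p\in G_\varepsilon$ decide $p\in R$ as above, let $P$ be the set of realizable grid points, and let $P^\ast$ be the set of its maximal elements under componentwise order. I claim $P^\ast$ is an $\varepsilon$-approximate Pareto curve. Every $p\in P^\ast$ is realizable, with a witness obtainable from the feasible linear-program solution. Conversely, given any $q\in R$, round each coordinate of $q$ down to the nearest multiple of $\varepsilon$, obtaining a grid point $p\leq q$; by downward-closure $p\in R$, hence $p\in P$, hence $p$ is dominated by some $p^\ast\in P^\ast$, and then $p^\ast\geq p\geq q-\varepsilon\cdot\vec 1$. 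The running time is $|G_\varepsilon|$ times the per-point cost, i.e.\ polynomial in $|G|$, polynomial in $1/\varepsilon$ (for each fixed coordinate), and exponential in $n$ --- both because $d=O(n)$, so the grid has $(1/\varepsilon)^{O(n)}$ points, and because the underlying linear program has $2^{O(n)}$ variables; this is exactly the claimed bound. Equivalently one could invoke the polytope version of the $\varepsilon$-Pareto-set construction of~\cite{PY00}, whose ``exponential in the number of objectives'' factor is precisely the exponential-in-$n$ dependence here.

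The main obstacle is conceptual rather than computational: one must keep the two sources of error strictly separate. The linear program of Section~\ref{sec:conjunctivejoint} characterizes realizability \emph{exactly} (this is where Theorem~\ref{thm:main} and its extension Corollary~\ref{cor:conjjoint} are indispensable), so the only approximation incurred comes from rounding $q$ down to a grid point, and that step is legitimate only because $R$ is genuinely downward-closed --- which is why the bilinearly occurring value thresholds $\sat,\widetilde\sat$ must be handled by fixing them to a grid value inside each linear program rather than optimized ``through'' it. Turning the final count into the phrase ``polynomial in $1/\varepsilon$ and exponential in $n$'' then just amounts to noting $d=O(n)$ so that an $\varepsilon$-spaced grid in $d$ coordinates has $(1/\varepsilon)^{O(n)}$ points.
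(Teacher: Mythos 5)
Your proposal is correct in outline but takes a genuinely different route from the paper. The paper's proof is a one-liner: it replaces $\ex$ in Equation 5 of $L$ by a vector $\vec v$ of fresh variables --- $\ex$ occurs linearly there --- and hands the resulting multi-objective linear program to the $\varepsilon$-Pareto-set construction of \cite{PY00}; the exponential in $n$ then comes both from the $2^{O(n)}$ variables of $L$ and from the number of objectives. You instead build the approximation by hand: grid the parameter space, use the linear program of Section~\ref{sec:conjunctivejoint} as an exact membership oracle at each grid point (this is where Theorem~\ref{thm:main} and Corollary~\ref{cor:conjjoint} enter), and exploit downward-closure of the achievable region to argue that the maximal realizable grid points $\varepsilon$-dominate every achievable tuple. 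This buys you something the paper's one-liner does not honestly deliver: a Pareto curve over \emph{all} parameters at once, including the value thresholds $\sat,\widetilde\sat$, which occur \emph{bilinearly} in Equation 6 (as $x_{a,N}\cdot\sat_i$) and therefore cannot simply be ``replaced by variables'' as the paper's follow-up remark suggests. Your observation that these coordinates must be fixed per grid cell rather than optimized through the LP is a genuine gain in rigor over the paper's own discussion of that extension. For the theorem as actually stated --- the Pareto curve over the expectation vector only --- the paper's direct reduction is shorter and suffices.

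One step of yours does not survive scrutiny: the claim that rescaling $[r_{\min},r_{\max}]$ to $[0,1]$ ``changes the accuracy only by a fixed polynomial factor''. Rewards are binary-encoded rationals, so $r_{\max}-r_{\min}$ can be exponential in $|G|$. Consequently your arithmetic $\varepsilon$-grid either (i) yields only an additive error of $\varepsilon\cdot(r_{\max}-r_{\min})$ in the original coordinates, or (ii) if you insist on additive error $\varepsilon$ in the original coordinates, needs $\bigl((r_{\max}-r_{\min})/\varepsilon\bigr)^{O(n)}$ points, which is pseudo-polynomial, not polynomial, in $|G|$. The construction of \cite{PY00} sidesteps this with a geometric grid and a multiplicative $(1+\varepsilon)$ guarantee, which is polynomial in the bit-size of the instance. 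So either state your guarantee as additive relative to the reward range (i.e.\ after normalization), or switch to a geometric grid; as written, the accounting does not quite match the theorem's ``polynomial in $|G|$''.
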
\vspace*{-0.7em}
\begin{proof}
We replace $\ex$ in Equation 5 of $L$ by a vector $\vec v$ of variables. Maximizing with respect to $\vec v$ is a multi-objective linear program. By~\cite{PY00}, we can $\varepsilon$-approximate the Pareto curve in time polynomial in the size of the program and $\frac1\varepsilon$, and exponential in the number of objectives (dimension of $\vec v$).
%
\end{proof}
The proof of Theorem~\ref{thm:pareto} shows that we can obtain a Pareto-curve approximation also for possible values of the $\sat$ or $\psat$ vectors for a given $\ex$ vector. We simply replace these vectors by vectors of variables, obtaining a multi-objective linear program. If we want the complete Pareto-curve approximation for all the parameters $\sat$, $\psat$, and $\ex$, the number of objectives rises from $n$ to $3\cdot n$. The complexity is thus still polynomial in the size of the MDP and $1/\varepsilon$, and exponential in $n$.

In particular, for the single-objective case, we can compute also the optimal $\psat$ given $\ex$ and $\sat$, or the optimal $\sat$ given $\psat$ and $\ex$.

\bigskip

The complexity results are summarized in the following theorem:

\begin{theorem}
The algorithmic complexities are shown in Table~\ref{tab:complexity}.
The bounds on the complexity of the witness and $\varepsilon$-witness strategies are as shown in Table~\ref{tab:complexity2} and Table~\ref{tab:complexity3}, respectively. 
\end{theorem}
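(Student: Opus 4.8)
The plan is straightforward: this theorem is a bookkeeping summary of facts established earlier in the paper, so the ``proof'' consists of pointing to the relevant results and checking that together they account for every entry of the three tables. I would organise it into three blocks, mirroring Tables~\ref{tab:complexity}, \ref{tab:complexity2} and~\ref{tab:complexity3}.

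\emph{Algorithmic complexity.} First I would collect the polynomiality claims: for \textbp{(mono-qual)}, \textbp{(mono-quant)}, \textbp{(multi-qual)} and \textbp{(multi-quant-joint)} these are exactly Theorem~\ref{thm:timepoly}, obtained because in these cases the index $N$ ranges only over $\{\emptyset,[n]\}$ (or drops out entirely), leaving $L$ with $\mathcal O(|G|\cdot n)$ variables, which is solvable in polynomial time by \cite{Schrijver1986}. Next, the bound ``polynomial in $|G|$, exponential in $n$'' for \textbp{(multi-quant-conjunctive)} follows from the first item of Theorem~\ref{thm:main} together with the count of $\mathcal O(|G|\cdot n)\cdot 2^{n}$ variables in $L$; the same bound for \textbp{(multi-quant-conjunctive-joint)} is Corollary~\ref{cor:conjjoint}, the program merely doubling in size. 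The NP-hardness row is Theorem~\ref{thm:hard}, and the Pareto-curve row (polynomial in $|G|$ and $1/\varepsilon$, exponential in $n$) is Theorem~\ref{thm:pareto}.

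\emph{Strategy complexity.} For witness strategies (Table~\ref{tab:complexity2}): \textbp{(mono-qual)} is Lemma~\ref{lem:1mem}; for \textbp{(mono-quant)} the upper bounds are Lemma~\ref{lem:2mem} (stochastic-update $2$-memory) and Lemma~\ref{lem:finmem-mono} (deterministic-update finite memory), and the lower bounds are Example~\ref{ex:rm} (both randomization and memory are needed, even for $\varepsilon$-witnesses) and Example~\ref{ex:finmem-mono} (deterministic-update memory depending on the transition probabilities). Sufficiency of deterministic update for \textbp{(multi-qual)}, \textbp{(multi-quant-conjunctive)} and \textbp{(multi-quant-joint)} is Lemma~\ref{lem:detup}, extended to \textbp{(multi-quant-conjunctive-joint)} by Remark~\ref{rem:strategycj}; the necessity of infinite memory for every multi-objective case is inherited from \cite[Section~5]{krish}, since already witnessing (\ref{eq:krish-sat}) with $\psatscalar=1$ requires it. For $\varepsilon$-witness strategies (Table~\ref{tab:complexity3}): finite deterministic-update memory suffices by Lemma~\ref{lem:finmem} (and Remark~\ref{rem:strategycj} for the combined problem); the stochastic-update upper bounds $2$, $3$, $2^{n}+1$, $2^{n+1}+1$ are precisely Theorem~\ref{epsilonstrategy}; and the matching lower bounds are Example~\ref{ex:3mem} (three elements for \textbp{(multi-quant-joint)}), Example~\ref{ex:mem-n} ($n$ elements for \textbp{(multi-quant-conjunctive)}), and again Example~\ref{ex:finmem-mono} for the dependence on transition probabilities under deterministic update.

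The only part requiring genuine care — and the step where I would expect to slip — is the accounting itself: ensuring that no cell of any table is left without a citation, that the ``sufficient'' and ``necessary'' bounds invoked are not mistakenly presented as tight where they are not (for instance, the $n$-memory lower bound of Example~\ref{ex:mem-n} only shows the $2^{n}+1$ upper bound of Theorem~\ref{epsilonstrategy} is not far off, not that it is optimal), and that the exact-witness and $\varepsilon$-witness columns are kept strictly apart. Beyond that, the proof reduces to a one-line appeal to the enumerated results.
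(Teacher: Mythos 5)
Your proposal is correct and matches the paper exactly: this theorem carries no separate proof body, and its justification is precisely the aggregation of citations placed in the table cells together with the stated conventions that uncited entries are inherited along the specialization relation of Fig.~\ref{fig:problems} and from $\varepsilon$-witness to witness strategies. (The only slip is cosmetic: Table~\ref{tab:complexity} contains no Pareto-curve row, so Theorem~\ref{thm:pareto} need not be invoked here.)
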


\subsubsection*{Comments on the tables}

 \ub denotes upper bounds (which suffice for all MDPs) and \lb lower bounds (which are required in general for some MDPs). Results without reference are induced by the specialization or generalization relation depicted in Fig.~\ref{fig:problems} and for Table \ref{tab:complexity2} and \ref{tab:complexity3} by $\varepsilon-$witness strategies being a weaker notion than witness strategies.
The abbreviations stoch.-up., det.-up., rand., det., inf., fin., and $X$-mem. stand for stochastic update, deterministic update, randomizing, deterministic, infinite-, finite- and $X$-memory strategies, respectively. Here $n$ is the dimension of reward function and $p=1/p_{\mathit{\min}}$ where $p_{\mathit{\min}}$ is the smallest positive probability in the MDP.
Note that inf.\ actually means that the strategy is in form of a Markov strategy, see Section~\ref{ssec:thm-proof3}.

 \begin{remark}
 For a comparison, the results on previously studied subcases of our problems are depicted in Table~\ref{tab:complexity4}.
 \QEE
 \end{remark}

\begin{table*}[h]
\caption{Previous results on algorithmic and strategy complexities. 
The abbreviations alg., strat., and c. stand for algorithmic, strategy, and complexity, respectively. 
Cases multiple and single refer to the number of objectives.
Results for single-objective MDPs are based on classical literature, e.g.~\cite[Thm.9.1.8]{Puterman}. 
Results for MDPs with multiple objectives are due to \cite{krish}.
}
\label{tab:complexity4}
 \begin{tabular}{|l|l|l|l|} \hline
 Case & Alg. c. & Witness strat. c. & $\varepsilon$-witness strat. c. \\\hline

{multiple} & $\poly(|G|,n)$ & \ub det.-up. inf. & \ub stoch.-up. 2-mem.\\
 (\ref{eq:krish-sat})&&\lb rand. inf. & \lb rand. $2$-mem.\\\hline
{multiple} & $\poly(|G|,n)$ & \ub det.-up. inf. & \ub stoch.-up. 2-mem., det.-up. fin.\\
(\ref{eq:EXP}) && \lb rand. inf. & \lb rand. $2$-mem.\\\hline
{single} & $\poly(|G|)$ & \ubl det. 1-mem. &  \ubl det. 1-mem.\\
(\ref{eq:krish-sat})&&&\\\hline
{single} & $\poly(|G|)$ & \ubl det. 1-mem. & \ubl det. 1-mem.\\
(\ref{eq:EXP})&&&\\\hline
 \end{tabular} 
\end{table*}

\medskip

 %

\begin{table*}[h]
\caption{Algorithmic complexity results for each of the discussed cases. }
\label{tab:complexity}
\begin{tabular}{|l|l|} \hline
Case & Algorithmic complexity \\\hline
\textbp{(multi-quant-conj.-joint)} \hspace{8mm} &  $\poly(|G|,2^n)$ [Cor.\ref{cor:conjjoint}], NP-hard [Thm. \ref{thm:hard}]\hspace{8mm} \\\hline
\textbp{(multi-quant-conj.)}&  $\poly(|G|,2^n)$ [Thm.\ref{thm:main}]\\\hline
\textbp{(multi-quant-joint)} &$\poly(|G|,n)$ [Thm.\ref{thm:timepoly}] \\\hline
\textbp{(multi-qual)} &  $\poly(|G|,n)$ \\\hline
\textbp{(mono-quant)} & $\poly(|G|)$  \\\hline
\textbp{(mono-qual)} & $\poly(|G|)$ \\\hline
\end{tabular} 
\medskip
\end{table*}

\begin{table*}[h]
\caption{Witness strategy complexity bounds for each of the discussed cases.}
\label{tab:complexity2}
\begin{tabular}{|l|l|} \hline
Case & Witness strategy complexity \\\hline

\textbp{(multi-quant-conj.-joint)} & \ub det.-up. [Rem.\ref{rem:strategycj}] inf.\\
& \lb rand. inf. \\\hline

\textbp{(multi-quant-conj.)} & \ub det.-up. [Lem.\ref{lem:detup}] inf.\\
& \lb rand. inf. \\\hline

\textbp{(multi-quant-joint)}  & \ub det.-up. inf. \\
& \lb rand. inf. \\\hline

\textbp{(multi-qual)}  & \ub det.-up. inf. \\
& \lb rand. inf. \cite[Sec.5]{krish} \\\hline

\textbp{(mono-quant)}  & \ub stoch.-up. 2-mem. [Lem.\ref{lem:2mem}], det.-up. fin. [Lem.\ref{lem:finmem-mono}]  \\
 & \lb rand. 2-mem., 
for det.-up. $p$-mem.
\\\hline

\textbp{(mono-qual)} & \ub (trivially also \lb) det. 1-mem. [Lem.\ref{lem:1mem}] \\\hline
\end{tabular} 
\medskip
\end{table*}

\begin{table*}[h]
\caption{$\varepsilon$-witness strategy complexity bounds for each of the discussed cases.}
\label{tab:complexity3}
\begin{tabular}{|l|l|} \hline
Case &  $\varepsilon$-witness strategy complexity \\\hline

\textbp{(multi-quant-} & \ub stoch.-up. $(2^{n+1}+1)$-mem. [Thm.\ref{epsilonstrategy}], det.-up. fin. [Rem.\ref{rem:strategycj}]\\
\qquad\textbp{conj.-joint)}& \lb rand. $n$-mem. [Ex.\ref{ex:mem-n}], for det.-up. $p$-mem.\\\hline

\textbp{(multi-quant-} & \ub stoch.-up. $(2^n+1)$-mem. [Thm.\ref{epsilonstrategy}], det.-up. fin. [Lem.\ref{lem:finmem}]\\
\qquad\textbp{conj.)}& \lb rand. $n$-mem. [Ex.\ref{ex:mem-n}], for det.-up. $p$-mem.\\\hline

\textbp{(multi-quant-}  & \ub stoch.-up. 3-mem.  [Thm.\ref{epsilonstrategy}], det.-up. fin.\\
\qquad\textbp{joint)}& \lb rand. $3$-mem. [Ex.\ref{ex:3mem}]\\\hline

\textbp{(multi-qual)}  & \ub stoch.-up. 2-mem. [Thm.\ref{epsilonstrategy}], det.-up. fin.\\
 & \lb rand. mem. \cite[Sec.3]{krish}\\\hline

\textbp{(mono-quant)}  &  \ub stoch.-up. 2-mem., det.-up. fin. \\
 
& \lb rand. [Ex.\ref{ex:rm}] 2-mem. [Ex.\ref{ex:rm}], for det.-up. $p$-mem. [Ex.\ref{ex:finmem-mono}] \\\hline

\textbp{(mono-qual)} &  \ub (trivially also \lb) det. 1-mem. \\\hline
\end{tabular} 
\end{table*}

\section{Conclusion}\label{chap:concl}

We have presented a unifying solution framework to the expectation and satisfaction optimization of Markov decision processes with multiple objectives. This allows us to synthesize optimal and $\varepsilon$-optimal risk-averse strategies.
We have considered several possible combinations of the two semantics and provided algorithms for their solution as well as the complete picture of the complexities for all these cases.

Regarding the algorithmic complexity, we have shown that \textbp{(multi-quant-joint)} and all its special cases can be solved in polynomial time. For both \textbp{(multi-quant-conjunctive)} and \textbp{(multi-quant-conjunctive-joint)}, we have presented an algorithm that works in time polynomial in the size of MDP, but exponential in the dimension of reward function. However, the exponential in the dimension of reward function is not a limitation for \rstart most of practical purposes since the dimension is typically low. \rstop For the latter case we have also proved that the problem is NP-hard. The complexity of \textbp{(multi-quant-conjunctive)} remains an interesting open question.
Moreover, our algorithms for Pareto-curve approximation work in time
polynomial in the size of MDPs and exponential in the dimension of reward
function. However, note that even for the special case of expectation semantics the
current best known algorithms depend exponentially on the dimension of
reward function~\cite{krish}.

We have also provided comprehensive results on strategy complexities. It is known that for both expectation and satisfaction semantics with single objective, deterministic memoryless strategies are sufficient~\cite{FV97,BBE10,krish}.
We have shown this carries over in the \textbp{(mono-qual)} case only. 
In contrast, for \textbp{(mono-quant)} both randomization and memory is necessary. However, we have also shown that only a restricted form of randomization (deterministic update) is necessary even for \textbp{(multi-quant)}, thus improving the upper bound for $\varepsilon-$witness strategies for the satisfaction problem of \cite{krish} to finite-memory deterministic update. Furthemore, we have established that with deterministic update the memory size is dependent on the MDP; the result also applies to the expectation problem of \cite{krish}, where no MDP-dependent lower bound was given. 
We have presented upper bounds on stochastic update $\varepsilon-$witness strategies, which are constant for \textbp{(multi-qual)} and \textbp{(multi-quant-joint)}, and exponentially dependent on the dimension of reward function for \textbp{(multi-quant-conjunctive)} and \textbp{(multi-quant-conjunctive-joint)}. The question whether there are polynomially dependent upper bounds for the latter two cases stays open.

\textbf{Acknowledgements} We are very thankful to the anonymous reviewers for their helpful suggestions and pointing at gaps in the proofs of Lemma~\ref{lem:cond-flow} and Lemma~\ref{lem:x-mec}, and to Rasmus Ibsen-Jensen for discussing the proof of Lemma~\ref{lem:cond-flow-simpler}.

%

\bibliographystyle{alpha}
\bibliography{bib}

\newcommand{\etalchar}[1]{$^{#1}$}
\begin{thebibliography}{DEKM98}

\bibitem[Alt99]{Altman}
E.~Altman.
\newblock {\em Constrained {Markov} Decision Processes (Stochastic Modeling)}.
\newblock Chapman \& Hall/CRC, 1999.

\bibitem[BBC{\etalchar{+}}14]{krish}
T.~Br{\'{a}}zdil, V.~Bro{\v z}ek, K.~Chatterjee, V.~Forejt, and A.~Ku{\v c}era.
\newblock Markov decision processes with multiple long-run average objectives.
\newblock {\em LMCS}, 10(1), 2014.

\bibitem[BBE10]{BBE10}
T.~Br{\'a}zdil, V.~Bro\v{z}ek, and K.~Etessami.
\newblock One-counter stochastic games.
\newblock In {\em FSTTCS}, pages 108--119, 2010.

\bibitem[BCFK13]{BCFK13}
T.~Br{\'{a}}zdil, K.~Chatterjee, V.~Forejt, and A.~Ku\v{c}era.
\newblock Trading performance for stability in {M}arkov decision processes.
\newblock In {\em LICS}, pages 331--340, 2013.

\bibitem[BFRR14]{BFRR14}
V.~Bruy{\`{e}}re, E.~Filiot, M.~Randour, and J.{-}F. Raskin.
\newblock Meet your expectations with guarantees: Beyond worst-case synthesis
  in quantitative games.
\newblock In {\em STACS'14}, pages 199--213, 2014.

\bibitem[BK08]{BaierBook}
C.~Baier and J.-P. Katoen.
\newblock {\em Principles of Model Checking}.
\newblock MIT Press, 2008.

\bibitem[CFW13]{CFW13}
K.~Chatterjee, V.~Forejt, and D.~Wojtczak.
\newblock Multi-objective discounted reward verification in graphs and {MDP}s.
\newblock In {\em LPAR'13}, pages 228--242, 2013.

\bibitem[CH11]{CH11}
K.~Chatterjee and M.~Henzinger.
\newblock Faster and dynamic algorithms for maximal end-component decomposition
  and related graph problems in probabilistic verification.
\newblock In {\em SODA}, pages 1318--1336, 2011.

\bibitem[CH12]{CH12}
K.~Chatterjee and M.~Henzinger.
\newblock An {{\it O}}({\it n}$^{\mbox{2}}$) time algorithm for alternating
  {B}{\"u}chi games.
\newblock In {\em SODA}, pages 1386--1399, 2012.

\bibitem[CH14]{CH14}
K.~Chatterjee and M.~Henzinger.
\newblock Efficient and dynamic algorithms for alternating {B}{\"u}chi games
  and maximal end-component decomposition.
\newblock {\em JACM}, 2014.

\bibitem[Cha07]{Cha07}
K.~Chatterjee.
\newblock {Markov} decision processes with multiple long-run average
  objectives.
\newblock In {\em FSTTCS}, pages 473--484, 2007.

\bibitem[CL13]{CL13}
K.~Chatterjee and J.~Lacki.
\newblock Faster algorithms for {M}arkov decision processes with low treewidth.
\newblock In {\em CAV}, pages 543--558, 2013.

\bibitem[CMH06]{CMH06}
K.~Chatterjee, R.~Majumdar, and T.~A. Henzinger.
\newblock {Markov} decision processes with multiple objectives.
\newblock In {\em STACS}, pages 325--336, 2006.

\bibitem[CR15]{CR15lics}
Lorenzo Clemente and Jean{-}Fran{\c{c}}ois Raskin.
\newblock Multidimensional beyond worst-case and almost-sure problems for
  mean-payoff objectives.
\newblock In {\em LICS}, pages 257--268, 2015.

\bibitem[CY95]{CY95}
C.~Courcoubetis and M.~Yannakakis.
\newblock The complexity of probabilistic verification.
\newblock {\em Journal of the ACM}, 42(4):857--907, 1995.

\bibitem[CY98]{CY98}
C.~Courcoubetis and M.~Yannakakis.
\newblock {Markov} decision processes and regular events.
\newblock {\em Automatic Control, IEEE Transactions on}, 43(10):1399--1418,
  October 1998.

\bibitem[dA97]{dA97a}
L.~de~Alfaro.
\newblock {\em Formal Verification of Probabilistic Systems}.
\newblock PhD thesis, Stanford University, 1997.

\bibitem[dAHK07]{TCS:AlfaroHK07}
L.~de~Alfaro, T.~A. Henzinger, and O.~Kupferman.
\newblock Concurrent reachability games.
\newblock {\em Theor. Comput. Sci}, 386(3):188--217, 2007.

\bibitem[DEKM98]{Bio-Book}
R.~Durbin, S.~Eddy, A.~Krogh, and G.~Mitchison.
\newblock {\em Biological Sequence Analysis: Probabilistic Models of Proteins
  and Nucleic Acids}.
\newblock Cambridge Univ. Press, 1998.

\bibitem[EKVY08]{EKVY08}
K.~Etessami, M.~Kwiatkowska, M.~Vardi, and M.~Yannakakis.
\newblock Multi-objective model checking of {Markov} decision processes.
\newblock {\em LMCS}, 4(4):1--21, 2008.

\bibitem[FKN{\etalchar{+}}11]{FKN+11}
V.~Forejt, M.~Z. Kwiatkowska, G.~Norman, D.~Parker, and H.~Qu.
\newblock Quantitative multi-objective verification for probabilistic systems.
\newblock In {\em TACAS}, pages 112--127, 2011.

\bibitem[FKP12]{FKP12}
V.~Forejt, M.~Z. Kwiatkowska, and D.~Parker.
\newblock Pareto curves for probabilistic model checking.
\newblock In {\em ATVA'12}, pages 317--332, 2012.

\bibitem[FKR95]{FKR95}
J.~A. Filar, D.~Krass, and K.~W Ross.
\newblock Percentile performance criteria for limiting average {M}arkov
  decision processes.
\newblock {\em Automatic Control, IEEE Transactions on}, 40(1):2--10, Jan 1995.

\bibitem[FV97]{FV97}
J.~Filar and K.~Vrieze.
\newblock {\em Competitive {Markov} Decision Processes}.
\newblock Springer-Verlag, 1997.

\bibitem[How60]{Howard}
H.~Howard.
\newblock {\em Dynamic Programming and {Markov} Processes}.
\newblock MIT Press, 1960.

\bibitem[KGFP09]{KGFP09}
H.~Kress-Gazit, G.~E. Fainekos, and G.~J. Pappas.
\newblock Temporal-logic-based reactive mission and motion planning.
\newblock {\em IEEE Transactions on Robotics}, 25(6):1370--1381, 2009.

\bibitem[KNP02]{PRISM}
M.~Kwiatkowska, G.~Norman, and D.~Parker.
\newblock {PRISM}: Probabilistic symbolic model checker.
\newblock In {\em TOOLS' 02}, pages 200--204, 2002.

\bibitem[Kos88]{Koski}
J.~Koski.
\newblock Multicriteria truss optimization.
\newblock In {\em Multicriteria Optimization in Engineering and in the
  Sciences}. 1988.

\bibitem[Owe95]{Owen95}
G.~Owen.
\newblock {\em Game Theory}.
\newblock Academic Press, 1995.

\bibitem[Put94]{Puterman}
M.L. Puterman.
\newblock {\em {Markov} Decision Processes}.
\newblock John Wiley and Sons, 1994.

\bibitem[PY00]{PY00}
C.~H. Papadimitriou and M.~Yannakakis.
\newblock On the approximability of trade-offs and optimal access of web
  sources.
\newblock In {\em FOCS}, pages 86--92, 2000.

\bibitem[Roy88]{Royden88}
H.~Royden.
\newblock {\em Real Analysis}.
\newblock Prentice Hall, 3rd edition, 12 February 1988.

\bibitem[RRS15]{RRS15}
Mickael Randour, Jean{-}Fran{\c{c}}ois Raskin, and Ocan Sankur.
\newblock Percentile queries in multi-dimensional markov decision processes.
\newblock In {\em {CAV}, Part {I}}, pages 123--139, 2015.

\bibitem[Sch86]{Schrijver1986}
A.~Schrijver.
\newblock {\em {Theory of Linear and Integer Programming}}.
\newblock Wiley-Interscience, 1986.

\bibitem[SCK04]{Szymanek}
R.~Szymanek, F.~Catthoor, and K.~Kuchcinski.
\newblock Time-energy design space exploration for multi-layer memory
  architectures.
\newblock In {\em DATE}, pages 318--323, 2004.

\bibitem[Seg95]{SegalaT}
R.~Segala.
\newblock {\em Modeling and Verification of Randomized Distributed Real-Time
  Systems}.
\newblock PhD thesis, MIT, 1995.

\bibitem[Var85]{Var85}
M.~Vardi.
\newblock Automatic verification of probabilistic concurrent finite state
  programs.
\newblock In {\em FOCS}, pages 327--338, 1985.

\bibitem[WL99]{WL99}
C.~Wu and Y.~Lin.
\newblock Minimizing risk models in {Markov} decision processes with policies
  depending on target values.
\newblock {\em Journal of Mathematical Analysis and Applications},
  231(1):47--67, 1999.

\bibitem[YC03]{YC}
P.~Yang and F.~Catthoor.
\newblock Pareto-optimization-based run-time task scheduling for embedded
  systems.
\newblock In {\em CODES+ISSS}, pages 120--125, 2003.

\end{thebibliography}

\appendix

\section{Limear program for the running example}\label{app:lp}

\begin{enumerate}
 \item 
$1+0.5 y_\ell=y_\ell+y_r+y_{s,\emptyset}+y_{s,\{1\}}+y_{s,\{2\}}+y_{s,\{1,2\}}$\\
$0.5 y_\ell + y_a=y_a+y_{u,\emptyset}+y_{u,\{1\}}+y_{u,\{2\}}+y_{u,\{1,2\}}$\\
$y_r + y_b + y_e=y_b + y_c+y_{v,\emptyset}+y_{v,\{1\}}+y_{v,\{2\}}+y_{v,\{1,2\}}$\\
$y_c + y_d=y_d + y_e+y_{w,\emptyset}+y_{w,\{1\}}+y_{w,\{2\}}+y_{w,\{1,2\}}$\\
 \item 
 $y_{u,\emptyset}+y_{u,\{1\}}+y_{u,\{2\}}+y_{u,\{1,2\}}+y_{v,\emptyset}+y_{v,\{1\}}+y_{v,\{2\}}+y_{v,\{1,2\}} + y_{w,\emptyset}+y_{w,\{1\}}+y_{w,\{2\}}+y_{w,\{1,2\}}=1$\\
 \item 
$y_{u,\emptyset}= x_{a,\emptyset}$\\
$y_{u,\{1\}}= x_{a,\{1\}}$\\
$y_{u,\{2\}}= x_{a,\{2\}}$\\
$y_{u,\{1,2\}}= x_{a,\{1,2\}}$\medskip

\noindent
$y_{v,\emptyset}+y_{w,\emptyset}= x_{b,\emptyset}+x_{c,\emptyset}+x_{d,\emptyset}+x_{e,\emptyset}$\\
$y_{v,\{1\}}+y_{w,\{1\}}= x_{b,\{1\}}+x_{c,\{1\}}+x_{d,\{1\}}+x_{e,\{1\}}$\\
$y_{v,\{2\}}+y_{w,\{2\}}= x_{b,\{2\}}+x_{c,\{2\}}+x_{d,\{2\}}+x_{e,\{2\}}$\\
$y_{v,\{1,2\}}+y_{w,\{1,2\}}= x_{b,\{1,2\}}+x_{c,\{1,2\}}+x_{d,\{1,2\}}+x_{e,\{1,2\}}$\\
 \item 
 $0.5 x_{\ell,\emptyset} = x_{\ell,\emptyset} + x_{r,\emptyset}$\\
 $0.5 x_{\ell,\{1\}} = x_{\ell,\{1\}} + x_{r,\{1\}}$\\
 $0.5 x_{\ell,\{2\}} = x_{\ell,\{2\}} + x_{r,\{2\}}$\\
 $0.5 x_{\ell,\{1,2\}} = x_{\ell,\{1,2\}} + x_{r,\{1,2\}}$\medskip
 
 \noindent
$0.5 x_{\ell,\emptyset} + x_{a,\emptyset}= x_{a,\emptyset}$\\
$0.5 x_{\ell,\{1\}} + x_{a,\{1\}}= x_{a,\{1\}}$\\
$0.5 x_{\ell,\{2\}} + x_{a,\{2\}}= x_{a,\{2\}}$\\
$0.5 x_{\ell,\{1,2\}} + x_{a,\{1,2\}}= x_{a,\{1,2\}}$\medskip

\noindent
$x_{r,\emptyset} + x_{b,\emptyset} + x_{e,\emptyset}= x_{b,\emptyset} + x_{c,\emptyset}$\\
$x_{r,\{1\}} + x_{b,\{1\}} + x_{e,\{1\}}= x_{b,\{1\}} + x_{c,\{1\}}$\\
$x_{r,\{2\}} + x_{b,\{2\}} + x_{e,\{2\}}= x_{b,\{2\}} + x_{c,\{2\}}$\\
$x_{r,\{1,2\}} + x_{b,\{1,2\}} + x_{e,\{1,2\}}= x_{b,\{1,2\}} + x_{c,\{1,2\}}$\medskip

\noindent
$x_{c,\emptyset} + x_{d,\emptyset}= x_{d,\emptyset} + x_{e,\emptyset}$\\
$x_{c,\{1\}} + x_{d,\{1\}}= x_{d,\{1\}} + x_{e,\{1\}}$\\
$x_{c,\{2\}} + x_{d,\{2\}}= x_{d,\{2\}} + x_{e,\{2\}}$\\
$x_{c,\{1,2\}} + x_{d,\{1,2\}}= x_{d,\{1,2\}} + x_{e,\{1,2\}}$\\
 \item 
 $\reward(\ell) x_{\ell,\emptyset}+\reward(\ell) x_{\ell,\{1\}}+\reward(\ell) x_{\ell,\{2\}}+\reward(\ell) x_{\ell,\{1,2\}} + \reward(r) x_{r,\emptyset}+\reward(r) x_{r,\{1\}}+\reward(r) x_{r,\{2\}}+\reward(r) x_{r,\{1,2\}} + (4,0) x_{a,\emptyset}+(4,0) x_{a,\{1\}}+(4,0) x_{a,\{2\}}+(4,0) x_{a,\{1,2\}} + (1,0) x_{b,\emptyset}+(1,0) x_{b,\{1\}}+(1,0) x_{b,\{2\}}+(1,0) x_{b,\{1,2\}} + (0,0) x_{c,\emptyset}+(0,0) x_{c,\{1\}}+(0,0) x_{c,\{2\}}+(0,0) x_{c,\{1,2\}} + (0,1) x_{d,\emptyset}+(0,1) x_{d,\{1\}}+(0,1) x_{d,\{2\}}+(0,1) x_{d,\{1,2\}} + (0,0) x_{e,\emptyset}+(0,0) x_{e,\{1\}}+(0,0) x_{e,\{2\}}+(0,0) x_{e,\{1,2\}}\geq (1.1,0.5)$\\
 \item 
 $4 x_{a,\{1\}}\geq 0.5 x_{a,\{1\}}$\\
 $0 \geq 0.5 x_{a,\{2\}}$\\
 $4 x_{a,\{1,2\}}\geq 0.5 x_{a,\{1,2\}}$\\
 $0 \geq 0.5 x_{a,\{1,2\}}$\medskip
 
 \noindent
 $x_{b,\{1\}} \geq 0.5 x_{b,\{1\}} + 0.5 x_{c,\{1\}}+ 0.5 x_{d,\{1\}}+ 0.5 x_{e,\{1\}}$\\
 $x_{d,\{2\}} \geq 0.5 x_{b,\{2\}} + 0.5 x_{c,\{2\}}+ 0.5 x_{d,\{2\}}+ 0.5 x_{e,\{2\}}$\\
 $x_{b,\{1,2\}} \geq 0.5 x_{b,\{1,2\}} + 0.5 x_{c,\{1,2\}}+ 0.5 x_{d,\{1,2\}}+ 0.5 x_{e,\{1,2\}}$\\
 $x_{d,\{1,2\}} \geq 0.5 x_{b,\{1,2\}} + 0.5 x_{c,\{1,2\}}+ 0.5 x_{d,\{1,2\}}+ 0.5 x_{e,\{1,2\}}$\\
 \item 
$x_{\ell,\{1\}} + x_{\ell,\{1,2\}} + x_{r,\{1\}} + x_{r,\{1,2\}} + x_{a,\{1\}} + x_{a,\{1,2\}} + x_{b,\{1\}} + x_{b,\{1,2\}} + x_{c,\{1\}} + x_{c,\{1,2\}} + x_{d,\{1\}} + x_{d,\{1,2\}} + x_{e,\{1\}} + x_{e,\{1,2\}} \geq 0.8$\medskip

\noindent
$x_{\ell,\{2\}} + x_{\ell,\{1,2\}} + x_{r,\{2\}} + x_{r,\{1,2\}} + x_{a,\{2\}} + x_{a,\{1,2\}} + x_{b,\{2\}} + x_{b,\{1,2\}} + x_{c,\{2\}} + x_{c,\{1,2\}} + x_{d,\{2\}} + x_{d,\{1,2\}} + x_{e,\{2\}} + x_{e,\{1,2\}} \geq 0.8$\\
\end{enumerate}

\end{document}